\documentclass[
  aps,
  prx,
  10pt,
  onecolumn,
  tightenlines,
  superscriptaddress,
  nofootinbib,
  longbibliography
]{revtex4-2}

\usepackage{array}[=2016-10-06]
\usepackage{mathtools}
\usepackage{amsmath}
\usepackage[shortlabels]{enumitem}
\usepackage{graphicx,epic,eepic,epsfig,latexsym,verbatim,color}
 
\usepackage{amsfonts}       % blackboard math symbols
\usepackage{nicefrac}       % compact symbols for 1/2, etc.
\usepackage[linesnumbered,ruled,procnumbered]{algorithm2e}
\usepackage{xcolor}

\usepackage{framed}
\definecolor{shadecolor}{rgb}{0.9,0.90,0.9}
\usepackage{bbm}
 
\usepackage{tikz}
\usetikzlibrary{chains}
\usetikzlibrary{fit}

\usepackage{epsfig}
\usetikzlibrary{shapes.symbols,patterns} % for source symbols
\usepackage{pgfplots}
\pgfplotsset{compat=1.18}

\usepackage[strict]{changepage}
\usepackage{hyperref}
\hypersetup{colorlinks=true,citecolor=blue,linkcolor=blue,filecolor=blue,urlcolor=blue,breaklinks=true}

\usepackage{url}
\usepackage{theorem}

\newtheorem{definition}{Definition}
\newtheorem{proposition}{Proposition}
\newtheorem{lemma}[proposition]{Lemma}

\newtheorem{theorem}[proposition]{Theorem}

\newtheorem{corollary}[proposition]{Corollary}

\def\squareforqed{\hbox{\rlap{$\sqcap$}$\sqcup$}}
\def\qed{\ifmmode\squareforqed\else{\unskip\nobreak\hfil
\penalty50\hskip1em\null\nobreak\hfil\squareforqed
\parfillskip=0pt\finalhyphendemerits=0\endgraf}\fi}
\def\endenv{\ifmmode\;\else{\unskip\nobreak\hfil
\penalty50\hskip1em\null\nobreak\hfil\;
\parfillskip=0pt\finalhyphendemerits=0\endgraf}\fi}
\newenvironment{proof}{\noindent \textbf{{Proof~} }}{\hfill $QED$}

\newcounter{remark}

\newcounter{example}

\mathchardef\ordinarycolon\mathcode`\:
\mathcode`\:=\string"8000
\def\vcentcolon{\mathrel{\mathop\ordinarycolon}}
\begingroup \catcode`\:=\active
  \lowercase{\endgroup
  \let :\vcentcolon
  }

\usepackage{cleveref}
\usepackage{graphicx}

\RequirePackage[framemethod=default]{mdframed}
\newmdenv[skipabove=7pt,
skipbelow=7pt,
backgroundcolor=darkblue!15,
innerleftmargin=5pt,
innerrightmargin=5pt,
innertopmargin=5pt,
leftmargin=0cm,
rightmargin=0cm,
innerbottommargin=5pt,
linewidth=1pt]{tBox}

\newmdenv[skipabove=7pt,
skipbelow=7pt,
backgroundcolor=red!15,
innerleftmargin=5pt,
innerrightmargin=5pt,
innertopmargin=5pt,
leftmargin=0cm,
rightmargin=0cm,
innerbottommargin=5pt,
linewidth=1pt]{rBox}

\newmdenv[skipabove=7pt,
skipbelow=7pt,
backgroundcolor=blue2!25,
innerleftmargin=5pt,
innerrightmargin=5pt,
innertopmargin=5pt,
leftmargin=0cm,
rightmargin=0cm,
innerbottommargin=5pt,
linewidth=1pt]{dBox}
\newmdenv[skipabove=7pt,
skipbelow=7pt,
backgroundcolor=darkkblue!15,
innerleftmargin=5pt,
innerrightmargin=5pt,
innertopmargin=5pt,
leftmargin=0cm,
rightmargin=0cm,
innerbottommargin=5pt,
linewidth=1pt]{sBox}
\definecolor{darkblue}{RGB}{0,76,156}
\definecolor{darkkblue}{RGB}{0,0,153}
\definecolor{blue2}{RGB}{102,178,255}
\definecolor{darkred}{RGB}{195,0,0}
\newcommand{\nc}{\newcommand}
\nc{\rnc}{\renewcommand}
\nc{\lbar}[1]{\overline{#1}}
\nc{\bra}[1]{\langle#1|}
\nc{\ket}[1]{|#1\rangle}
\nc{\dketbra}[2]{\vert #1 \rangle \hspace{-.8mm} \rangle \hspace{-.4mm} \langle\hspace{-.8mm}\langle #2 \vert}
\nc{\dbra}[1]{\langle\hspace{-.8mm}\langle #1\vert}
\nc{\dket}[1]{\vert#1\rangle\hspace{-.8mm}\rangle}
\nc{\ketbra}[2]{|#1\rangle\!\langle#2|}
\nc{\braket}[2]{\langle#1|#2\rangle}

\nc{\proj}[1]{| #1\rangle\!\langle #1 |}
\nc{\avg}[1]{\langle#1\rangle}
\nc{\rank}{\operatorname{Rank}}
\nc{\smfrac}[2]{\mbox{$\frac{#1}{#2}$}}
\nc{\tr}{\operatorname{Tr}}
\nc{\ox}{\otimes}
\nc{\dg}{\dagger}
\nc{\dn}{\downarrow}
\nc{\cA}{{\cal A}}
\nc{\cB}{{\cal B}}
\nc{\cC}{{\cal C}}
\nc{\cD}{{\cal D}}
\nc{\cE}{{\cal E}}
\nc{\cF}{{\cal F}}
\nc{\cG}{{\cal G}}
\nc{\cH}{{\cal H}}
\nc{\cI}{{\cal I}}
\nc{\cJ}{{\cal J}}
\nc{\cK}{{\cal K}}
\nc{\cL}{{\cal L}}
\nc{\cM}{{\cal M}}
\nc{\cN}{{\cal N}}
\nc{\cO}{{\cal O}}
\nc{\cP}{{\cal P}}
\nc{\cQ}{{\cal Q}}
\nc{\cR}{{\cal R}}
\nc{\cS}{{\cal S}}
\nc{\cT}{{\cal T}}
\nc{\cU}{{\cal U}}
\nc{\cV}{{\cal V}}
\nc{\cX}{{\cal X}}
\nc{\cY}{{\cal Y}}
\nc{\cZ}{{\cal Z}}
\nc{\cW}{{\cal W}}
\nc{\csupp}{{\operatorname{csupp}}}
\nc{\qsupp}{{\operatorname{qsupp}}}
\nc{\var}{{\operatorname{var}}}
\nc{\rar}{\rightarrow}
\nc{\lrar}{\longrightarrow}
\nc{\polylog}{{\operatorname{polylog}}}
\nc{\wt}{{\operatorname{wt}}}
\nc{\av}[1]{{\left\langle {#1} \right\rangle}}
\nc{\supp}{{\operatorname{supp}}}
\nc{\VComb}{{\widetilde{\cal C}}}
\nc{\VChoi}{{\widetilde{C}}}

\nc{\argmin}{{\operatorname{argmin}}}

\def\x{\xi}

\def\w{\omega}

\nc{\RR}{{{\mathbb R}}}
\nc{\CC}{{{\mathbb C}}}
\nc{\FF}{{{\mathbb F}}}
\nc{\NN}{{{\mathbb N}}}
\nc{\ZZ}{{{\mathbb Z}}}
\nc{\PP}{{{\mathbb P}}}
\nc{\QQ}{{{\mathbb Q}}}
\nc{\UU}{{{\mathbb U}}}
\nc{\EE}{{{\mathbb E}}}
\nc{\id}{{\operatorname{id}}}

\nc{\CHSH}{{\operatorname{CHSH}}}

\nc{\<}{\langle}
\rnc{\>}{\rangle}
\nc{\rU}{\mbox{U}}

\nc{\ob}[1]{#1}

\usepackage{tikz}
\usepackage{hyperref}
\hypersetup{colorlinks=true,citecolor=blue,linkcolor=blue,filecolor=blue,urlcolor=blue,breaklinks=true}

\makeatletter
\def\grd@save@target#1{%
  \def\grd@target{#1}}
\def\grd@save@start#1{%
  \def\grd@start{#1}}
\tikzset{
  grid with coordinates/.style={
    to path={%
      \pgfextra{%
        \edef\grd@@target{(\tikztotarget)}%
        \tikz@scan@one@point\grd@save@target\grd@@target\relax
        \edef\grd@@start{(\tikztostart)}%
        \tikz@scan@one@point\grd@save@start\grd@@start\relax
        \draw[minor help lines,magenta] (\tikztostart) grid (\tikztotarget);
        \draw[major help lines] (\tikztostart) grid (\tikztotarget);
        \grd@start
        \pgfmathsetmacro{\grd@xa}{\the\pgf@x/1cm}
        \pgfmathsetmacro{\grd@ya}{\the\pgf@y/1cm}
        \grd@target
        \pgfmathsetmacro{\grd@xb}{\the\pgf@x/1cm}
        \pgfmathsetmacro{\grd@yb}{\the\pgf@y/1cm}
        \pgfmathsetmacro{\grd@xc}{\grd@xa + \pgfkeysvalueof{/tikz/grid with coordinates/major step}}
        \pgfmathsetmacro{\grd@yc}{\grd@ya + \pgfkeysvalueof{/tikz/grid with coordinates/major step}}
        \foreach \x in {\grd@xa,\grd@xc,...,\grd@xb}
        \node[anchor=north] at (\x,\grd@ya) {\pgfmathprintnumber{\x}};
        \foreach \y in {\grd@ya,\grd@yc,...,\grd@yb}
        \node[anchor=east] at (\grd@xa,\y) {\pgfmathprintnumber{\y}};
      }
    }
  },
  minor help lines/.style={
    help lines,
    step=\pgfkeysvalueof{/tikz/grid with coordinates/minor step}
  },
  major help lines/.style={
    help lines,
    line width=\pgfkeysvalueof{/tikz/grid with coordinates/major line width},
    step=\pgfkeysvalueof{/tikz/grid with coordinates/major step}
  },
  grid with coordinates/.cd,
  minor step/.initial=.2,
  major step/.initial=1,
  major line width/.initial=2pt,
}
\makeatother

\usepackage{thmtools}
\usepackage{thm-restate}
\usepackage{etoolbox}
\makeatletter
\def\problem@s{}
\newcounter{problems@cnt}

\newcommand{\allproblems}{\problem@s}
\makeatother

\usepackage{tikz}
\usetikzlibrary{positioning}
\usetikzlibrary{shapes.geometric}
\usetikzlibrary{calc}
\definecolor{tensorblue}{rgb}{0.8,0.9,1}
\tikzset{ten/.style={fill=tensorblue}}
\usepackage{amssymb,tcolorbox,tikzit}
\usepackage{booktabs,tabularx}
\usepackage[utf8]{inputenc}
\usepackage[T1]{fontenc}
\usetikzlibrary{quantikz2}
\hypersetup{colorlinks=true,citecolor=blue,linkcolor=blue,filecolor=blue,urlcolor=blue,breaklinks=true}

\renewcommand{\braket}[1]{\ensuremath{\left\langle #1 \right\rangle}}

\begin{document}

\title{Entanglement Cost of Optimal Distributed Quantum State Purification}
\author{Jiayi Zhao}
\affiliation{Thrust of Artificial Intelligence, Information Hub, The Hong Kong University of Science and Technology (Guangzhou), Guangdong 511453, China}
\author{Chengkai Zhu}
\affiliation{Thrust of Artificial Intelligence, Information Hub, The Hong Kong University of Science and Technology (Guangzhou), Guangdong 511453, China}
\affiliation{QudeLeap Research, Shanghai 200030, China}
\author{Xin Wang}
\email{felixxinwang@hkust-gz.edu.cn}
\affiliation{Thrust of Artificial Intelligence, Information Hub, The Hong Kong University of Science and Technology (Guangzhou), Guangdong 511453, China}
\author{Ge Bai}
\email{gebai@hkust-gz.edu.cn}
\affiliation{Thrust of Artificial Intelligence, Information Hub, The Hong Kong University of Science and Technology (Guangzhou), Guangdong 511453, China}
\date{\today}

\begin{abstract}

We determine the preshared entanglement required for spatially separated parties, restricted to local operations and classical communication, to attain globally optimal probabilistic two-copy purification of arbitrary bipartite pure states under depolarizing noise. In every local dimension $d\ge 2$, one shared maximally entangled qubit pair suffices: local controlled-SWAP operations exactly reproduce the globally optimal successful transformation. Conversely, any finite-dimensional preshared resource state that attains the same benchmark, even under a positive-partial-transpose relaxation, must have entanglement of formation at least one ebit. For pure resources with one ebit, or for two-qubit resources including mixed states, attaining the benchmark is possible only for states with exactly two nonzero Schmidt weights, both equal to $1/2$, namely those locally unitarily equivalent to a maximally entangled qubit pair. These findings provide a benchmark for evaluating the entanglement demands of noise management in quantum networks and modular quantum computers.%, helping distinguish fundamental resource requirements from implementation overhead.

\end{abstract}  

\maketitle

\section{Introduction}

Reliable quantum information processing requires high-fidelity preparation,
transmission, storage, and manipulation of quantum states.  Imperfect control
introduces errors in state preparation and quantum operations, while residual
coupling to the environment causes decoherence
\cite{zurek2003decoherence,preskill2018quantum}.  Both effects limit the
reliability of quantum computation and communication.  Fault-tolerant quantum
error correction protects quantum information by maintaining a logical state
in encoded form throughout noisy storage or processing, but requires redundant
encoding, repeated syndrome extraction, and high-fidelity control
\cite{terhal2015quantum}.  Quantum state purification
addresses a complementary situation: several copies of the same pure target
state have already been affected by noise, and the goal is to obtain fewer
outputs with higher fidelity to the target
\cite{barenco1997stabilization,cirac1999optimal,keyl2001rate,
fiuravsek2004optimal,yao2025protocols}.
This raises the questions of how much the target fidelity can be improved
from a fixed number of noisy preparations and what operations and resources
are required.

Formally, a noisy process is described by a completely positive
trace-preserving (CPTP) map \(\mathcal N\), also called a quantum channel,
which transforms a pure target \(\psi=\ket\psi\!\bra\psi\) into the noisy state
\(\mathcal N(\psi)\).  An \(n\to1\) probabilistic purification protocol is
described by a completely positive trace-nonincreasing (CPTN) map
\(\mathcal E\), also called a quantum operation. Given
\(\mathcal N(\psi)^{\otimes n}\), the protocol succeeds with probability
\(p_\psi\) and, conditioned on success, produces the output state
\(\sigma_\psi\), where
\[
p_\psi:=\operatorname{Tr}\!\left[
\mathcal E\bigl(\mathcal N(\psi)^{\otimes n}\bigr)\right],
\qquad
\sigma_\psi:=
\frac{\mathcal E\bigl(\mathcal N(\psi)^{\otimes n}\bigr)}{p_\psi}.
\]
The output state \(\sigma_\psi\) is defined when \(p_\psi>0\).  Here
\(\mathrm F(\rho,\tau):=\|\sqrt\rho\sqrt\tau\|_1^2\) denotes quantum fidelity.
We use \(\mathrm F(\psi,\mathcal N(\psi))\), the fidelity of one noisy copy
with the target, as the baseline fidelity.  When \(\mathcal E\) may be chosen
using a classical description of \(\psi\), we call the task
\textit{targeted state purification}.  In \textit{blind state purification},
only the target set \(\mathcal S\) is specified, and \(\mathcal E\) must be
fixed independently of the particular \(\psi\in\mathcal S\)
\cite{zhao2026power}.  The map \(\mathcal E\) purifies \(\mathcal S\)
if \(p_\psi>0\) and
\[
\mathrm F(\psi,\sigma_\psi)
\ge
\mathrm F\bigl(\psi,\mathcal N(\psi)\bigr)
\]
for every \(\psi\in\mathcal S\).  The purification is strict if the inequality
is strict for at least one \(\psi\in\mathcal S\).

The performance of purification depends on the permitted operations on the noisy copies.  In quantum networks and modular quantum computers,
each party has direct access only to its local quantum registers, and arbitrary
joint operations across parties are not freely available
\cite{caleffi2024distributed,main2025distributed}.
Without a shared resource, the operations are thus
restricted to local quantum operations coordinated through classical
communication (LOCC) \cite{chitambar2014everything}. 
%Global state purification permits arbitrary joint operations on all noisy copies.   Such joint 
%This motivates studying whether multiple noisy copies of a distributed state can yield a higher-fidelity output under these locality constraints \cite{zhao2026power}.
For global state purification, where arbitrary joint operations
on the noisy copies are allowed, previous studies have shown
that probabilistic processing of copies affected by depolarizing
noise can produce an output with fidelity above the baseline
\cite{cirac1999optimal,fiuravsek2004optimal,yao2025protocols}.
Depolarizing noise can arise as an effective noise model when
randomized unitary control removes the directional dependence
of noise through twirling over the full unitary
group~\cite{horodecki1999general}.
Under the same noise model, however,
Zhao et al.~\cite{zhao2026power} proved that no \(2\to1\)
LOCC protocol can achieve strict blind purification, even
with postselection, for the set of all pure two-qubit targets.
This separation motivates a quantitative question: how much
preshared entanglement is required for an LOCC protocol to
match the optimal purification performance achievable by
global CPTN maps?

In this work, we answer this question for the \(2\to1\) blind purification of
arbitrary pure states on \(\mathbb C^d\otimes\mathbb C^d\).  Following earlier
studies of quantum state purification
\cite{cirac1999optimal,keyl2001rate,yao2025protocols}, the two available copies
are independently affected by global depolarizing noise of known strength
\(\gamma\). %We allow postselection because the global protocols whose performance we seek to match may themselves be probabilistic. Deterministic protocols are included as the special case of unit success probability. 
As the figure of merit, we compare protocols using the success-probability-weighted
fidelity gain over the baseline, averaged with respect to the Haar measure over
all pure target states. % The global benchmark is the maximum of this average fidelity gain over all CPTN maps acting jointly on the two noisy copies.

For every local dimension \(d\ge2\) and every \(0<\gamma<1\), we provide a probabilistic \(2\to1\) LOCC protocol assisted by one Einstein--Podolsky--Rosen (EPR) pair, a maximally entangled
state of two qubits. The protocol matches the performance of the optimal global CPTN map.  Conversely, any finite-dimensional preshared resource state that enables a finite-round LOCC
protocol to achieve the global CPTN optimum, even under positive-partial-transpose relaxation, must have entanglement of formation
at least one ebit.  For pure resources with one ebit, or for two-qubit resources including
mixed states, the global optimum is attainable if and only if the resource
is locally unitarily equivalent to an EPR pair.

\section{Problem Setting}\label{sec:task_operational_model}

An \(n\to1\) distributed purification protocol takes \(n\) bipartite inputs
shared between Alice and Bob and produces one bipartite output.  We denote the
\(i\)th input by \(A_iB_i\), with Alice holding \(A_i\) and Bob holding \(B_i\).
Alice's full input is \(A^n:=A_1\cdots A_n\), Bob's is
\(B^n:=B_1\cdots B_n\), and the protocol maps \(A^nB^n\) to \(A'B'\), with
Alice holding \(A'\) and Bob holding \(B'\).  Each input pair \(A_iB_i\) and
the output \(A'B'\) has a Hilbert space isomorphic to
\(\mathcal H_A\otimes\mathcal H_B\), where
\(\mathcal H_A\simeq\mathcal H_B\simeq\mathbb C^d\).

Let \(\mathcal S\) be a set of bipartite target states on \(AB\), and let
\(\psi\in\mathcal S\) denote an arbitrary target.  A fixed CPTP noise channel
\(\mathcal N\) acts independently on each copy of \(\psi\), so
the input to the protocol is \(\mathcal N(\psi)^{\otimes n}\). To describe probabilistic transformations between finite-dimensional systems,
define
\[
\mathrm{CPTN}
:=
\left\{
\mathcal E\ \middle|\
\begin{array}{l}
\mathcal E\text{ is completely positive, and}\\
\operatorname{Tr}[\mathcal E(Q)]\le\operatorname{Tr}[Q]
\text{ for every }Q\succeq0
\end{array}
\right\}.
\]
We write \(\mathrm{CPTP}\subset\mathrm{CPTN}\) for the set of completely positive
trace-preserving maps, obtained by replacing the trace inequality above with
equality for every \(Q\succeq0\).  When the systems need to be explicit,
\(\mathcal E_{X\to Y}\) denotes a map whose input is an operator on \(X\) and
whose output is an operator on \(Y\).  We denote by \(\mathrm{LOCC}\) the set of maps implementable
in finitely many rounds, where each quantum operation acts only on the registers
held by one party and later operations may depend on communicated classical
outcomes \cite{chitambar2014everything}. Thus, throughout this work, a protocol is a quantum operation in
\(\mathrm{LOCC}\cap\mathrm{CPTN}\), i.e., a trace-nonincreasing quantum operation implementable by finite-round LOCC, acting on $A^n$, $B^n$ and possible preshared resource.

\begin{definition}[Probabilistic distributed purification~{\cite{zhao2026power}}]
\label{def:distributed-purification}
For a set \(\mathcal S\) of target states on \(AB\) and a CPTP noise channel
\(\mathcal N\), an \(n\to1\) map
\(\mathcal E_{A^nB^n\to A'B'}\in
\mathrm{LOCC}\cap\mathrm{CPTN}\) is a probabilistic distributed purification
protocol for \(\mathcal S\) under \(\mathcal N\) if, for every
\(\psi\in\mathcal S\), the unnormalized output
\(\widehat{\sigma}_\psi:=\mathcal E(\mathcal N(\psi)^{\otimes n})\) has
positive trace \(p_\psi:=\operatorname{Tr}\widehat{\sigma}_\psi\), and the
conditional output \(\sigma_\psi:=\widehat{\sigma}_\psi/p_\psi\) satisfies
\[
\mathrm F(\psi,\sigma_\psi)
\ge
\mathrm F\!\left(\psi,\mathcal N(\psi)\right).
\]
The protocol is \emph{strict} on \(\mathcal S\) if the displayed
inequality is strict for at least one \(\psi\in\mathcal S\).
\end{definition}

In order to quantify preshared entanglement in distributed purification, % is needed for distributed purification to match unrestricted global processing, 
we extend Definition~\ref{def:distributed-purification}
to allow Alice and Bob to share a finite-dimensional bipartite resource state, as
illustrated in Fig.~\ref{fig:purification}.

\begin{definition}[Probabilistic distributed purification with a preshared
resource state]
\label{def:resource-assisted-purification}
Let \(\omega_{A_{\mathrm r}B_{\mathrm r}}\) be a finite-dimensional bipartite
resource state preshared by Alice and Bob.  For a set \(\mathcal S\) of target
states on \(AB\) and a CPTP noise channel \(\mathcal N\), an \((n+1)\to1\) map
\[
\mathcal E_{A^nA_{\mathrm r}B^nB_{\mathrm r}\to A'B'}
\in\mathrm{LOCC}\cap\mathrm{CPTN}
\]
is a probabilistic distributed purification protocol for \(\mathcal S\) under
\(\mathcal N\) using \(\omega_{A_{\mathrm r}B_{\mathrm r}}\) if, for every
\(\psi\in\mathcal S\), the unnormalized output
\(\widehat{\sigma}_\psi:=\mathcal E(\mathcal N(\psi)^{\otimes n}\otimes
\omega_{A_{\mathrm r}B_{\mathrm r}})\) has positive trace
\(p_\psi:=\operatorname{Tr}\widehat{\sigma}_\psi\), and the conditional output
\(\sigma_\psi:=\widehat{\sigma}_\psi/p_\psi\) satisfies
\[
\mathrm F(\psi,\sigma_\psi)
\ge
\mathrm F\!\left(\psi,\mathcal N(\psi)\right).
\]
The protocol is \emph{strict} on \(\mathcal S\) if the displayed
inequality is strict for at least one \(\psi\in\mathcal S\).
\end{definition}

\begin{figure}[htbp]
\centering
{\sffamily
\fontsize{12pt}{15pt}\selectfont
\scalebox{1.1}{\begin{tikzpicture}
\fontsize{9}{11}
	\begin{pgfonlayer}{nodelayer}
		\node [style=none, transform shape, draw=black, rounded corners=4pt, fill=gray, fill opacity=0.14, minimum width=4cm, minimum height=5cm, inner sep=0pt] (72) at (8.25, 10.25) {};
		\node [style=none, transform shape, draw=black, rounded corners=4pt, fill=gray, fill opacity=0.14, minimum width=4cm, minimum height=5cm, inner sep=0pt] (73) at (8.25, 2.75) {};
		\node [style=none] (0) at (6.25, 12.75) {};
		\node [style=none] (1) at (6.25, 7.75) {};
		\node [style=none] (2) at (10.25, 7.75) {};
		\node [style=none] (3) at (10.25, 12.75) {};
		\node [style=none] (4) at (6.25, 5.25) {};
		\node [style=none] (5) at (6.25, 0.25) {};
		\node [style=none] (6) at (10.25, 0.25) {};
		\node [style=none] (7) at (10.25, 5.25) {};
		\node [style=none] (8) at (8.25, 7.75) {};
		\node [style=none] (9) at (8.375, 7.75) {};
		\node [style=none] (10) at (8.25, 5.25) {};
		\node [style=none] (11) at (8.375, 5.25) {};
		\node [style=none] (12) at (6.25, 11.75) {};
		\node [style=none] (13) at (6.25, 10) {};
		\node [style=none] (22) at (6.25, 4.75) {};
		\node [style=none] (23) at (6.25, 3) {};
		\node [style=none] (24) at (4.2, 11.75) {};
		\node [style=none] (25) at (4.2, 10) {};
		\node [style=none] (26) at (4.2, 4.75) {};
		\node [style=none] (27) at (4.2, 3) {};
		\node [style=none] (28) at (2.5, 8.25) {};
		\node [style=none] (29) at (2.5, 6.5) {};
		\node [style=none] (30) at (10.25, 11.25) {};
		\node [style=none] (31) at (10.25, 1.75) {};
		\node [style=none] (32) at (12.3, 11.25) {};
		\node [style=none] (33) at (14, 7.75) {};
		\node [style=none] (34) at (12.3, 1.75) {};
		\node [style=none] (35) at (14, 5.25) {};
		\node [style=none] (36) at (6.25, 8.25) {};
		\node [style=none] (37) at (6.25, 1.25) {};
		\node [style=none] (38) at (4.2, 8.25) {};
		\node [style=none] (39) at (4.2, 1.25) {};
		\node [style=none] (40) at (2.5, 4.75) {};
		\node [style=none] (41) at (1.25, 8.25) {$\mathcal{N}(\psi)$};
		\node [style=none] (42) at (1.25, 6.5) {$\mathcal{N}(\psi)$};
		\node [style=none] (43) at (1.25, 4.75) {$\omega$};
		\node [style=none] (44) at (14.4, 6.5) {$\sigma_\psi$};
		\node [style=none] (45) at (4.4, 13.25) {};
		\node [style=none] (46) at (12.1, 13.25) {};
		\node [style=none] (47) at (12.1, -0.25) {};
		\node [style=none] (48) at (4.4, -0.25) {};
		\node [style=none] (49) at (5.75, 6.5) {};
		\node [style=none] (50) at (10.75, 6.5) {};
		\node [style=none] (51) at (8.25, 14) {$\mathcal E_{A^nA_{\mathrm r}B^nB_{\mathrm r}\to A'B'}$};
		\node [style=none, anchor=south, yshift=0.7mm] (52) at (5.125, 11.75) {$A_1$};
		\node [style=none, anchor=south, yshift=0.7mm] (53) at (5.125, 10) {$A_n$};
		\node [style=none, anchor=south, yshift=0.7mm] (54) at (5.125, 8.25) {$A_r$};
		\node [style=none, anchor=south, yshift=0.7mm] (55) at (5.125, 4.75) {$B_1$};
		\node [style=none, anchor=south, yshift=0.7mm] (56) at (5.125, 3) {$B_n$};
		\node [style=none, anchor=south, yshift=0.7mm] (57) at (5.125, 1.25) {$B_r$};
		\node [style=none, anchor=south] (58) at (11.375, 11.5) {$A^{\prime}$};
		\node [style=none, anchor=south] (59) at (11.375, 2) {$B^{\prime}$};
		\node [style=none, inner sep=0pt] (60) at (5.125, 11.25) {$\cdot$};
		\node [style=none, inner sep=0pt] (61) at (5.125, 4.25) {$\cdot$};
		\node [style=none, inner sep=0pt] (62) at (5.125, 11.45) {$\cdot$};
		\node [style=none, inner sep=0pt] (63) at (5.125, 11.05) {$\cdot$};
		\node [style=none, inner sep=0pt] (64) at (5.125, 4.45) {$\cdot$};
		\node [style=none, inner sep=0pt] (65) at (5.125, 4.05) {$\cdot$};
		\node [style=none, inner sep=0pt] (66) at (1.25, 7.25) {$\cdot$};
		\node [style=none, inner sep=0pt] (67) at (1.25, 7.449999999999999) {$\cdot$};
		\node [style=none, inner sep=0pt] (68) at (1.25, 7.050000000000001) {$\cdot$};
		\node [style=none] (69) at (8.25, 10.25) {{\fontsize{8}{10}\selectfont\shortstack{Alice's local\\instruments}}};
		\node [style=none] (70) at (8.25, 2.75) {{\fontsize{8}{10}\selectfont\shortstack{Bob's local\\instruments}}};
		\node [style=none, anchor=west] (71) at (8.3, 6.5) {{\fontsize{8}{10}\selectfont\shortstack{$\leftarrow$ classical\\communication}}};
	\end{pgfonlayer}
	\begin{pgfonlayer}{edgelayer}
		\draw [double, double distance=1.1pt, line width=0.35pt] (8.center) to (10.center);
		\draw (24.center) to (12.center);
		\draw (25.center) to (13.center);
		\draw (22.center) to (26.center);
		\draw (23.center) to (27.center);
		\draw (24.center) to (28.center);
		\draw (28.center) to (26.center);
		\draw (25.center) to (29.center);
		\draw (29.center) to (27.center);
		\draw (30.center) to (32.center);
		\draw (32.center) to (33.center);
		\draw (31.center) to (34.center);
		\draw (34.center) to (35.center);
		\draw (38.center) to (36.center);
		\draw (38.center) to (40.center);
		\draw (40.center) to (39.center);
		\draw (39.center) to (37.center);
		\draw [dashed] (45.center) to (46.center);
		\draw [dashed] (46.center) to (47.center);
		\draw [dashed] (47.center) to (48.center);
		\draw [dashed] (48.center) to (45.center);
	\end{pgfonlayer}
\end{tikzpicture}}
}
\caption{Distributed \(n\to1\) purification with a preshared resource state.
Alice and Bob process \(n\) noisy copies together with a finite-dimensional
preshared state \(\omega_{A_{\mathrm r}B_{\mathrm r}}\) by finite-round LOCC.
The trace of the unnormalized output
\(\widehat{\sigma}_\psi\) is the success probability.}
\label{fig:purification}
\end{figure}

If \(\omega\) is separable, Alice and Bob can prepare it by LOCC, so allowing
it does not enlarge the class of protocols in
Definition~\ref{def:distributed-purification}.  When \(\omega\) is entangled,
we call the protocol entanglement-assisted. We quantify the entanglement in the preshared resource state by its entanglement
of formation~\cite{bennett1996mixed,horodecki2009quantum}, defined as follows.

\begin{definition}[Entanglement of formation]
The entanglement of formation of a bipartite resource state
\(\omega_{A_{\mathrm r}B_{\mathrm r}}\) is
\[
E_{\mathrm F}(\omega_{A_{\mathrm r}B_{\mathrm r}})
:=
\inf_{\substack{
\omega_{A_{\mathrm r}B_{\mathrm r}}
=\sum_a q_a\ket{\eta_a}\!\bra{\eta_a}
}}
\sum_a q_a
S\!\left(\operatorname{Tr}_{B_{\mathrm r}}
\ket{\eta_a}\!\bra{\eta_a}\right),
\]
where the infimum is over all pure-state ensembles of
\(\omega_{A_{\mathrm r}B_{\mathrm r}}\), and
\(S(\rho):=-\operatorname{Tr}(\rho\log_2\rho)\).  For a pure resource \(\ket\eta\), let
\(p_0\ge p_1\ge\cdots\ge0\) be the eigenvalues of
\(\operatorname{Tr}_{B_{\mathrm r}}\ket\eta\!\bra\eta\),
with zeros appended as needed.
These eigenvalues are called its Schmidt weights, and their
ordered list is called its Schmidt spectrum.
The entanglement of formation then reduces to
\[
E_{\mathrm F}(\ket\eta\!\bra\eta)
=:E(\eta)
=-\sum_jp_j\log_2p_j.
\]
\end{definition}

We take \(E_{\mathrm F}(\omega_{A_{\mathrm r}B_{\mathrm r}})\) as the input
resource cost, independently of the protocol outcome.  The resource registers
may have arbitrary finite dimensions, with no assumed relation to \(d\).

We now specialize to the setting of our main results.  We take \(n=2\), and the
target set is the full set of bipartite pure states on \(AB\),
\[
\mathcal S_P
:=
\left\{
\ket\phi\!\bra\phi:
\ket\phi\in\mathcal H_A\otimes\mathcal H_B,\ 
\langle\phi|\phi\rangle=1
\right\}.
\]
We write \(\psi:=\ket\psi\!\bra\psi\in\mathcal S_P\) for an arbitrary target
and set \(D:=d^2\), the dimension of one bipartite copy.  Each copy is
independently affected by the global depolarizing channel
\[
\mathcal N^\gamma(\rho)
:=
(1-\gamma)\rho+\frac{\gamma}{D}I_D,
\qquad
0<\gamma<1.
\]
The resulting two-copy input is
\[
\mathcal N^\gamma(\psi)_{A_1B_1}\otimes
\mathcal N^\gamma(\psi)_{A_2B_2}.
\]
Alice holds \(A_1A_2\), and Bob holds \(B_1B_2\).
Let \(\mu_P\) denote the normalized Haar probability measure on \(\mathcal S_P\).

To quantify the purification improvement for each target \(\psi\), define
\[
g_\psi(\mathcal E):=\langle\psi|\widehat{\sigma}_\psi|\psi\rangle
-p_\psi\,\mathrm F\!\left(\psi,\mathcal N^\gamma(\psi)\right).
\]
For \(p_\psi>0\), this equals the product of success probability and fidelity gain over the baseline
\[
p_\psi\!\left[
\mathrm F(\psi,\sigma_\psi)
-\mathrm F(\psi,\mathcal N^\gamma(\psi))
\right].
\]
If \(p_\psi=0\), positivity of \(\widehat{\sigma}_\psi\) implies
\(\widehat{\sigma}_\psi=0\), and hence \(g_\psi(\mathcal E)=0\).
We compare protocols using the Haar average of
\(g_\psi(\mathcal E)\).

\begin{definition}[Optimal global CPTN benchmark]
\label{def:global_cptn_benchmark}
For the two-copy purification task on \(\mathcal S_P\) under
the depolarizing channel \(\mathcal N^\gamma\), the optimal
global CPTN benchmark is
\[
G_\star(D,\gamma)
:=
\sup_{\mathcal E_{A_1B_1A_2B_2\to A'B'}\in\mathrm{CPTN}}
\mathbb E_{\psi\sim\mu_P}\!\left[g_\psi(\mathcal E)\right].
\]
\end{definition}

For any fixed resource state, a probabilistic LOCC protocol induces a CPTN map on the two noisy copies.  Its performance is therefore upper-bounded by
\(G_\star(D,\gamma)\).  We seek the smallest value of
\(E_{\mathrm F}(\omega_{A_{\mathrm r}B_{\mathrm r}})\) among such resource
states that enable an LOCC protocol to attain \(G_\star(D,\gamma)\).

\section{Main Results}\label{sec:main_epr_threshold}

In this section, we determine the preshared entanglement needed for finite-round
LOCC to attain the optimal global CPTN benchmark.  The minimum entanglement of
formation of the resource state is one ebit for every local dimension \(d\ge2\)
and every depolarizing parameter \(0<\gamma<1\).  First, we evaluate
\(G_\star(D,\gamma)\) and show that a finite-round LOCC protocol assisted by one
EPR pair attains it.  Second, we prove that every finite-dimensional preshared
resource state that enables LOCC to attain
\(G_\star(D,\gamma)\) has entanglement of formation at least one ebit.  Third,
we characterize the pure one-ebit resources that attain the benchmark: their
Schmidt spectrum is \((1/2,1/2,0,\ldots)\), or equivalently, they are locally
unitarily equivalent to an EPR pair.

The following LOCC algorithm, assisted by one EPR pair, attains the optimal global CPTN benchmark, as shown in Theorem~\ref{thm:main_global_cptn}. Its circuit is shown in
Figure~\ref{fig:full_epr_locc}.

%The construction uses one resource qubit at each party and is specified in Algorithm~\ref{alg:one_epr_purification}; 

\begin{algorithm}[htbp]
\caption{One-EPR distributed purification algorithm}
\label{alg:one_epr_purification}
\KwIn{The two-copy noisy input
\(\mathcal N^\gamma(\psi)_{A_1B_1}\otimes
\mathcal N^\gamma(\psi)_{A_2B_2}\), and a preshared EPR pair
\(\ket{\Phi_2}_{A_{\mathrm r}B_{\mathrm r}}
:=\bigl(\ket{00}+\ket{11}\bigr)/\sqrt2\).}
\KwOut{A success or failure flag and, on success, the retained state relabeled
as the output on \(A'B'\).}
Alice applies the controlled swap of \(A_1\) and \(A_2\), controlled by
\(A_{\mathrm r}\)\;
Bob applies the controlled swap of \(B_1\) and \(B_2\), controlled by
\(B_{\mathrm r}\)\;
Alice and Bob apply Hadamard gates to \(A_{\mathrm r}\) and
\(B_{\mathrm r}\), respectively\;
They measure \(A_{\mathrm r}\) and \(B_{\mathrm r}\) in the computational
basis, obtain \(c_A,c_B\in\{0,1\}\), and communicate their outcomes\;
\eIf{\(c_A=c_B\)}{
Discard \(A_2B_2\), relabel the retained pair \(A_1B_1\) as \(A'B'\), and
return it\;
}{
Return failure\;
}
\end{algorithm}

\begin{theorem}[Optimality of the one-EPR algorithm]
\label{thm:main_global_cptn}
For \(d\ge2\) and \(0<\gamma<1\),
Algorithm~\ref{alg:one_epr_purification} is an entanglement-assisted LOCC
protocol that attains the optimal global CPTN benchmark for the
\(2\to1\) purification task on \(\mathcal S_P\) under the depolarizing channel
\(\mathcal N^\gamma\):
\[
G_\star(D,\gamma)
=
\frac12
\left(1-\gamma+\frac{\gamma}{D}\right)
\gamma(1-\gamma)
\left(1-\frac1D\right).
\]
Here \(D=d^2\) is the dimension of one bipartite copy.
\end{theorem}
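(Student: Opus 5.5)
The plan has two parts. First, I compute the operation that Algorithm~\ref{alg:one_epr_purification} induces on the two noisy copies and evaluate its Haar-averaged gain in closed form. Second, I show that no global CPTN map does better, by casting $G_\star(D,\gamma)$ as a semidefinite program and exhibiting a matching dual certificate.

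\emph{Achievability.} Write $S$ for the swap of the two bipartite copies, $S=\mathrm{SWAP}_{A_1A_2}\otimes\mathrm{SWAP}_{B_1B_2}$, and $\Pi_{\mathrm{sym}}=(I+S)/2$.
\begin{itemize}
\item After the two local controlled swaps, the joint state is the action of $\tfrac{1}{\sqrt2}\bigl(\ket{00}\otimes I+\ket{11}\otimes S\bigr)$ on the copies.
\item Applying $H\otimes H$ and projecting onto $\ket{cc}$ gives the Kraus operator $K_c=(I+S)/(2\sqrt2)$ for each $c\in\{0,1\}$, since both matrix elements equal $1/2$ with no relative sign.
\item Using $(I+S)^2=2(I+S)$, the successful branch is therefore
\[
\rho_{A_1B_1A_2B_2}\mapsto \operatorname{Tr}_{A_2B_2}\!\bigl[\Pi_{\mathrm{sym}}\rho\,\Pi_{\mathrm{sym}}\bigr].
\]
This is exactly the global symmetric-projection protocol, realized by LOCC with one EPR pair.
\end{itemize}
Set $\rho=(1-\gamma)\psi+\gamma I/D$ and $f=1-\gamma+\gamma/D$. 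I then compute $p_\psi=\operatorname{Tr}[\Pi_{\mathrm{sym}}\rho^{\otimes2}]$ and $\langle\psi|\widehat\sigma_\psi|\psi\rangle=\operatorname{Tr}[(\psi\otimes I)\Pi_{\mathrm{sym}}\rho^{\otimes 2}]$, using $\operatorname{Tr}[S(X\otimes Y)]=\operatorname{Tr}[XY]$ and $\operatorname{Tr}[(\psi\otimes I)S(X\otimes Y)]=\langle\psi|YX|\psi\rangle$. Both quantities are unitarily invariant in $\psi$, so no averaging is needed. I expect the difference to simplify to $\tfrac12 f\gamma(1-\gamma)(1-1/D)$, which gives the lower bound on $G_\star(D,\gamma)$.

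\emph{Optimality.} With the Choi operator $J$ of $\mathcal E$ on inputs $I_1I_2$ and output $O$, the objective is linear:
\[
\mathbb E_\psi\, g_\psi(\mathcal E)=\operatorname{Tr}[JW],\qquad W=\mathbb E_\psi\,(\rho_\psi^{\otimes2})^{T}\otimes(\psi-fI).
\]
The constraints are $J\succeq0$ and $\operatorname{Tr}_O J\preceq I$. Weak duality then bounds the value by $\operatorname{Tr}Y$ for any $Y\succeq0$ on $I_1I_2$ with $Y\otimes I_O\succeq W$. To compute $W$ I expand $\rho_\psi^{\otimes2}$ in powers of $\psi$ and use $\mathbb E\,\psi^{\otimes k}=\Pi^{(k)}_{\mathrm{sym}}/\binom{D+k-1}{k}$. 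This makes $W$ a combination of permutation operators (with partial transposes on the inputs), invariant under $\bar U\otimes\bar U\otimes U$.

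\emph{Dual certificate.} By invariance and symmetry under exchanging $I_1,I_2$, I restrict $Y$ to the form $Y=a\,\Pi_{\mathrm{sym}}+b\,\Pi_{\mathrm{anti}}$ on the inputs. Complementary slackness with the achieving $J$, which is supported on the symmetric input subspace, suggests:
\begin{itemize}
\item $b=0$, since the antisymmetric part of $\rho^{\otimes2}$ should contribute no gain, because it is orthogonal to $\psi^{\otimes2}$;
\item $a$ equal to the largest eigenvalue of $W$ restricted to $\Pi_{\mathrm{sym}}\otimes I_O$.
\end{itemize}
The feasibility check $Y\otimes I\succeq W$ is the main obstacle. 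It requires diagonalizing $W$ on the irreducible blocks of the three-copy commutant: decompose the symmetric input sector tensored with $O$ into $U$-irreps and compute $W$'s eigenvalue on each. I would first try a direct argument: for $\Pi_{\mathrm{anti}}$-supported inputs show $W\preceq0$, and on the symmetric sector bound the operator norm using the two irreps appearing in $\mathrm{Sym}^2\otimes\mathbb C^D$. Finally I would verify that $\operatorname{Tr}Y=a\binom{D+1}{2}$ reproduces the achievable value, which establishes equality with $G_\star(D,\gamma)$.
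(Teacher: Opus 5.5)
Your proposal is correct and follows essentially the same route as the paper. Appendix~\ref{app:one_epr_protocol_calculation} identifies the accepted branch as \(X\mapsto\operatorname{Tr}_{A_2B_2}[P_{\rm sym}XP_{\rm sym}]\) and evaluates the same \(\psi\)-independent gain. Appendix~\ref{app:high_rank_entropy_lower_bound} block-diagonalizes the performance operator on the four mixed-Schur sectors \(\operatorname{Ran}W_\pm,\mathcal L_\pm\); only \(\operatorname{Ran}W_+\subseteq\operatorname{Ran}(P_{\rm sym}\otimes I)\) has a positive eigenvalue, and the paper uses exactly your dual certificate \(Y=\lambda_{W_+}P_{\rm sym}\). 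One wording caution: on the symmetric sector you need the top eigenvalue of \(W\), not its operator norm, because the negative eigenvalue on the other symmetric irrep \(\mathcal L_+\) is larger in magnitude than \(\lambda_{W_+}\) (for example, it tends to \(-2/(D(D+2))\) as \(\gamma\to0\) while \(\lambda_{W_+}\to0\)).
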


The direct calculation in
Appendix~\ref{app:one_epr_protocol_calculation} gives the displayed gain for
Algorithm~\ref{alg:one_epr_purification}.
The global SDP analysis in
Appendix~\ref{app:high_rank_entropy_lower_bound} shows that the same value
upper-bounds every global CPTN map.  The algorithm therefore attains the global
optimum.

\begin{figure}[htbp]
\centering
\begin{quantikz}[
    row sep=0.42cm, column sep=0.65cm, thin lines,
    execute at end picture={
        \coordinate (outcome-a) at (\tikzcdmatrixname-3-5.center);
        \coordinate (outcome-b) at (\tikzcdmatrixname-6-5.center);
        \node[anchor=east, inner sep=3pt] (epr-source)
            at ($(\tikzcdmatrixname-3-1.west)!0.5!
                  (\tikzcdmatrixname-6-1.west)+(-1.20,0)$)
            {$\ket{\Phi_2}_{A_{\mathrm r}B_{\mathrm r}}$};
        \draw[line width=0.3pt]
            (epr-source.north east) -- (\tikzcdmatrixname-3-1.west);
        \draw[line width=0.3pt]
            (epr-source.south east) -- (\tikzcdmatrixname-6-1.west);
        \node[anchor=south west, inner sep=1pt]
            at ([xshift=1pt,yshift=2pt]\tikzcdmatrixname-3-1.west)
            {$A_{\mathrm r}$};
        \node[anchor=south west, inner sep=1pt]
            at ([xshift=1pt,yshift=2pt]\tikzcdmatrixname-6-1.west)
            {$B_{\mathrm r}$};
        \tikzset{
            classical bit/.style={draw=black, double, double distance=0.8pt,
                line width=0.3pt},
            classical arrow/.style={classical bit,
                -{Stealth[length=4pt,width=3.2pt,line width=0.3pt]}},
            message arrow/.style={draw=black, double,double distance=0.8pt,
                line width=0.3pt,
                -{Stealth[length=4pt,width=3.2pt]}}
        }
        \coordinate (fork-a) at ($(outcome-a)+(0.80,0)$);
        \coordinate (fork-b) at ($(outcome-b)+(0.80,0)$);
        \node[draw, rounded corners=2pt, align=center, font=\small,
              anchor=west, inner sep=5pt] (alice-check)
            at ($(outcome-a)+(3.0,-0.25)$)
            {\textbf{Alice accepts}\\ iff $c_A=c_B$};
        \node[draw, rounded corners=2pt, align=center, font=\small,
              anchor=west, inner sep=5pt] (bob-check)
            at ($(outcome-b)+(3.0,0.25)$)
            {\textbf{Bob accepts}\\ iff $c_A=c_B$};
        \draw[classical bit]
            (\tikzcdmatrixname-3-4.east) -- (fork-a)
            node[midway, below=2pt] {$c_A$};
        \draw[classical bit]
            (\tikzcdmatrixname-6-4.east) -- (fork-b)
            node[midway, below=2pt] {$c_B$};
        \draw[classical arrow]
            (fork-a) -- ([yshift=7pt]alice-check.west);
        \draw[classical arrow]
            (fork-b) -- ([yshift=-7pt]bob-check.west);
        \draw[message arrow]
            (fork-a) to[out=-40,in=170]
            node[pos=0.42, above=3pt] {$c_A$}
            ([yshift=7pt]bob-check.west);
        \draw[message arrow]
            (fork-b) to[out=40,in=190]
            node[pos=0.42, below=3pt] {$c_B$}
            ([yshift=-7pt]alice-check.west);
        \fill[fill=black] (fork-a) circle[radius=0.9pt];
        \fill[fill=black] (fork-b) circle[radius=0.9pt];
        % \node[font=\scriptsize, anchor=north]
        %     at ($(fork-b)!0.5!(bob-check.west)+(0,-0.78)$)
        %     {Dashed arrows: classical messages};
    }
]
\lstick{$A_1$} & \swap{1} & \qw & \qw & \qw \rstick{$A'$}\\
\lstick{$A_2$} & \targX{} & \qw & \qw & \qwbundle{} \\
\qw & \ctrl{-2} & \gate{H} & \meter{} & \wireoverride{n} \\[0.30cm]
\lstick{$B_1$} & \swap{1} & \qw & \qw & \qw \rstick{$B'$}\\
\lstick{$B_2$} & \targX{} & \qw & \qw & \qwbundle{} \\
\qw & \ctrl{-2} & \gate{H} & \meter{} & \wireoverride{n}
\end{quantikz}
\caption{Circuit implementation of Algorithm~\ref{alg:one_epr_purification}.
The input consists of two noisy bipartite copies and a preshared resource
\(\ket{\Phi_2}_{A_{\mathrm r}B_{\mathrm r}}
=(\ket{00}+\ket{11})/\sqrt2\).
The resource qubits control Alice's and Bob's local swap gates.
After a Hadamard gate and a computational-basis measurement on each resource
qubit, Alice sends \(c_A\) to Bob and Bob sends \(c_B\) to Alice. 
%Dashed arrows indicate transmitted classical bits. 
Each party compares the received bit with its own outcome.
The protocol accepts if \(c_A=c_B\) (\(00\) or \(11\)) and returns failure
otherwise. On acceptance, \(A_1B_1\) is retained as \(A'B'\), and
\(A_2B_2\) is discarded.}
\label{fig:full_epr_locc}
\end{figure}

The circuit in Fig.~\ref{fig:full_epr_locc}
can be understood as follows. Alice's and Bob's local swaps together exchange the two complete
copies \(A_1B_1\) and \(A_2B_2\).
Using the shared EPR pair to coherently control these swaps,
Algorithm~\ref{alg:one_epr_purification} realizes a distributed
version of the swap test~\cite{buhrman2001quantum}, which compares
two quantum states using an ancillary qubit and a controlled-SWAP gate.
Accepting matching local measurement outcomes selects the
symmetric branch of this test, and discarding one copy then
produces the purified output.
The protocol thus implements symmetric projection followed by
discarding one copy using LOCC and one EPR pair.

The success probability and conditional fidelity attained by
Algorithm~\ref{alg:one_epr_purification} constitute the two-copy golden point
of the fidelity--success-probability tradeoff characterized by Yao et
al.~\cite{yao2025protocols}: the conditional fidelity is maximal, and the
success probability is maximal among protocols attaining that fidelity.  For
two copies, operating at this golden point also maximizes the success-weighted
gain considered here, as established independently by the global SDP in
Appendix~\ref{app:high_rank_entropy_lower_bound}.

The one-EPR construction raises the converse question of whether the global
optimum can be attained using less than one ebit.  The next theorem shows that
it cannot, even with an arbitrary finite-dimensional preshared resource state.

\begin{theorem}[One-ebit input-resource requirement]
\label{thm:main_one_ebit_necessity}
Fix \(d\ge2\) and \(0<\gamma<1\), and consider the \(2\to1\) purification task
on \(\mathcal S_P\) under the depolarizing channel \(\mathcal N^\gamma\).  Let
\(\omega_{A_{\mathrm r}B_{\mathrm r}}\) be any finite-dimensional preshared
resource state.  If \(\omega_{A_{\mathrm r}B_{\mathrm r}}\) enables a finite-round
LOCC protocol to attain \(G_\star(D,\gamma)\), then
\(E_{\mathrm F}(\omega_{A_{\mathrm r}B_{\mathrm r}})\ge1\).

Conversely, one EPR pair suffices by
Algorithm~\ref{alg:one_epr_purification}.  Hence the minimum input-resource
entanglement required for finite-round LOCC attainment is exactly one ebit.
\end{theorem}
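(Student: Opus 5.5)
The plan is to reduce the converse to a rigidity statement about the global optimum, then convert that rigidity into an entanglement lower bound through a monotone. The first step is to extract from the global SDP of Appendix~\ref{app:high_rank_entropy_lower_bound} the structure of every optimal CPTN map. I expect it to show that any \(\mathcal E\) with \(\mathbb E_{\psi}[g_\psi(\mathcal E)]=G_\star(D,\gamma)\) must, after Haar twirling (\(U^{\otimes2}\) on the input, \(U\) on the output), coincide on the relevant operators with the map \(\rho\mapsto\tfrac12\operatorname{Tr}_{2}[\Pi_{\mathrm{sym}}\rho\,\Pi_{\mathrm{sym}}]\). Here \(\Pi_{\mathrm{sym}}\) is the projector onto the symmetric subspace of \((\mathbb C^D)^{\otimes2}\), with the copies \(A_1B_1\) and \(A_2B_2\) exchanged. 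Since the average gain is invariant under twirling, I may assume the optimal LOCC-induced map is covariant. Twirling by \(U^{\otimes2}\) with \(U=U_A\otimes U_B\) local is implementable by LOCC, whereas a full \(U\in\mathrm U(D)\) is not. I will therefore argue via complementary slackness: optimality forces the Choi operator of \(\mathcal E\) to be supported where the dual certificate is tight. This should pin down the action of \(\mathcal E\) on the antisymmetric versus symmetric sectors. In particular, it should show that \(\mathcal E\) must annihilate the antisymmetric subspace while acting with full weight on the symmetric subspace.

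The second step converts the induced map into a resource-dependent object. Let \(\Lambda\) be the finite-round LOCC protocol on \(A^2A_{\mathrm r}:B^2B_{\mathrm r}\); it is in particular separable, and even PPT. The induced map is \(\mathcal E(\rho)=\Lambda(\rho\otimes\omega)\). Consider the bipartite cut between Alice's side \(A_1A_2A_{\mathrm r}\) and Bob's side \(B_1B_2B_{\mathrm r}\). The target map, symmetric projection onto the swap-invariant sector, has a Choi-type operator containing the operator \(\tfrac12(I+\mathrm{SWAP}_{A_1A_2}\otimes\mathrm{SWAP}_{B_1B_2})\). This is the nonlocal controlled-swap structure, and it carries operator entanglement across the cut.

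For the bound I would use the standard PPT-simulation argument. If a PPT (respectively separable) operation together with \(\omega\) produces the map \(\mathcal E\), then applying \(\mathcal E\) to locally prepared inputs must generate a state whose entanglement is bounded by that of \(\omega\). Concretely, I would feed half of maximally entangled states on local reference systems \(A_1A_2\leftrightarrow R_A\) and \(B_1B_2\leftrightarrow R_B\). Using the rigidity from step one, I would then show that the conditional output contains, after local post-processing, an exact EPR pair between Alice's and Bob's sides. This is essentially the reverse of Algorithm~\ref{alg:one_epr_purification}: the coherent superposition of "swap" and "no swap" on both sides is a maximally entangled qubit in the control. Since entanglement of formation cannot increase under PPT, or under separable probabilistic maps conditioned on success, this would give \(E_{\mathrm F}(\omega)\ge 1\).

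Because the map is probabilistic, I need a monotone that does not increase on average even under postselection. I would instead use a quantity with a deterministic-outcome guarantee: Schmidt number or a rank-based bound, combined with exactness. An exact EPR pair cannot be produced with positive probability from a resource whose every pure decomposition has average entropy below one while the output fidelity is exactly 1. I would pursue this through the convex-roof structure: decompose \(\omega=\sum_a q_a\eta_a\). Because the objective is linear, each \(\eta_a\) with \(q_a>0\) must itself yield the benchmark. I would then show that a pure \(\eta\) attaining it needs entropy at least one, by majorization and Nielsen-type arguments for exact probabilistic conversion of a twirled structure.

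The main obstacle is step one combined with the pure-state step: proving that attaining the benchmark forces exact implementation of the symmetric projection, with no slack in the LP. The difficulty is that probabilistic LOCC can concentrate entanglement, so a Schmidt-rank-2 state with unequal weights converts to an EPR pair with some probability. My first attempt is to show that optimality requires the success probability of the distributed swap test to equal that of the global one for every \(\psi\). That would rule out a probability loss and force the resource's largest Schmidt weight to be at most \(1/2\), since Vidal's bound gives conversion probability \(\min(1,2(1-p_0))\). The inequality \(p_0\le 1/2\) then implies \(E(\eta)\ge -\log_2 p_0\ge 1\), and convexity over the decomposition gives \(E_{\mathrm F}(\omega)\ge1\). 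Sufficiency of one ebit is Theorem~\ref{thm:main_global_cptn}.
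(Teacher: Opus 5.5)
Your first and last steps are sound and agree with the paper. The global SDP has a unique optimizer $J_\star=\frac{D+1}{2}P_{W_+}$, namely the map $X\mapsto\operatorname{Tr}_{A_2B_2}[P_{\rm sym}XP_{\rm sym}]$; drop your extra factor $\tfrac12$, which would halve the gain. So after insertion every attaining protocol has exactly this success branch, with $\operatorname{Tr}_{A'B'}J_\star=P_{\rm sym}$. Linearity in $\omega$ plus the convex roof then reduces mixed resources to pure ones.

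The gap is the middle step: none of your mechanisms turns this rigidity into $E(\eta)\ge1$.
\begin{itemize}
\item The simulation argument fails because the map is probabilistic, and postselection can increase entanglement.
\item The Choi state of the optimizer is not an EPR pair across the Alice:Bob cut. In local contraction coordinates its block is $|v_d\rangle\langle v_d|\otimes I_D$ with $|v_d\rangle=\alpha|++\rangle+\beta|--\rangle$ and $\alpha=\frac{d+1}{2}\neq\beta=\frac{d-1}{2}$, so it carries strictly less than one ebit.
\item Your fallback claim, that an exact EPR pair cannot be obtained with positive probability from a resource with $E_{\mathrm F}<1$, is false. Alice's filter $\operatorname{diag}(\sqrt{(1-p_0)/p_0},1)$ turns $\sqrt{p_0}|00\rangle+\sqrt{1-p_0}|11\rangle$ into an exact EPR pair with probability $2(1-p_0)>0$.
\item Schmidt-number arguments give only Schmidt rank at least two.
\end{itemize}

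The Vidal route assumes that attainment forces the resource to become an EPR pair with no probability loss, and you give no reason for this. Treated as a black box, the inserted failure branch is fixed only through $\operatorname{Tr}_{A'B'}F^{\eta,\eta}=P_{\rm asym}$ and may discard all coherence. Consider the natural reduction: each party feeds $\frac{1}{\sqrt2}(|s\rangle|0\rangle+|t\rangle|1\rangle)$, with $|s\rangle=(|01\rangle+|10\rangle)/\sqrt2$ and $|t\rangle=(|01\rangle-|10\rangle)/\sqrt2$ on its two input registers and a local tag qubit, then measures its output register in the computational basis. This guarantees an exact EPR pair on the tags only with probability $\tfrac12$. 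Vidal's bound then gives merely $p_0\le\tfrac34$, which is compatible with $E(\eta)<1$. In fact $p_0\le\tfrac12$ is not known to be necessary: the paper's own analysis permits $\tfrac12<p_0\le\tfrac23$ and leaves that range open.

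The missing idea is to use the PPT structure of the \emph{uninserted} instrument, with the resource register as an input, for both the success and the failure branch, tied together by the trace identity imposed before insertion. After local twirling and local mixed-Schur compression, the inserted contraction block is pinned to $|v_d\rangle\langle v_d|$. Two PPT quadratic-form tests then produce an effect $N$ on $R$ with $\langle00|\Pi_\eta^\perp N\Pi_\eta^\perp|00\rangle\ge\tfrac13$ and $\operatorname{Re}\langle00|\Pi_\eta^\perp N\Pi_\eta^\perp|11\rangle\le\tfrac13$. The first test uses $(|+-\rangle-|-+\rangle)\otimes|00\rangle$; the second uses Bell-type vectors combined with the failure branch. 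These inequalities force $(p_j)\prec\sigma_\star(p_0)$, hence $\sum_jp_j^2\le\tfrac12$ and $E(\eta)\ge H_2(\eta)\ge1$. The controlling quantity is the collision entropy, not the largest Schmidt weight.
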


The lower bound is proved for positive-partial-transpose (PPT) instruments, a
relaxation in which the Choi operators associated with both the success and
failure outcomes are required to remain positive under partial transpose.
Every finite-round LOCC instrument belongs to this larger class, so a lower
bound proved for PPT instruments also applies to LOCC.
Appendix~\ref{app:high_rank_entropy_lower_bound} proves that the global
benchmark is attained by a unique CPTN map.  Consequently, once the resource
state is supplied, the effective CPTN map implemented by any attaining
protocol must be this unique optimizer.  For a pure resource, the PPT
constraints then impose a condition on its Schmidt weights that yields
\(E(\eta)\ge1\).

For a mixed resource, linearity of the gain in the resource state and the
convex-roof definition of entanglement of formation extend the same bound to
\(E_{\mathrm F}(\omega)\ge1\).  The details are given in
Appendices~\ref{app:high_rank_entropy_lower_bound},
\ref{app:abstract_effect_schmidt}, and \ref{app:purification_application}. We further show that, among pure resource states with exactly one ebit, attaining the global optimum fixes the Schmidt spectrum.

\begin{theorem}[Uniqueness of the pure one-ebit resource]
\label{thm:main_pure_resource_equality}
Fix \(d\ge2\) and \(0<\gamma<1\), and consider the \(2\to1\) purification task
on \(\mathcal S_P\) under the depolarizing channel \(\mathcal N^\gamma\).  Let
\(\ket\eta\) be a finite-Schmidt-rank pure resource state with
\(E(\eta)=1\).  Then
\(\ket\eta\) enables finite-round LOCC attainment of \(G_\star(D,\gamma)\) if
and only if its Schmidt spectrum is \((1/2,1/2,0,\ldots)\).  Equivalently, the
pure one-ebit resources that attain the benchmark are precisely those locally
unitarily equivalent to an EPR pair.
\end{theorem}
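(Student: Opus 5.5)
The "if" direction reduces to Theorem~\ref{thm:main_global_cptn}. Suppose $\ket\eta$ has Schmidt spectrum $(1/2,1/2,0,\ldots)$. Then there are local unitaries $U_{A_{\mathrm r}}\otimes V_{B_{\mathrm r}}$ mapping $\ket\eta$ to $\ket{\Phi_2}$, embedded in the first two levels of each resource register. Alice and Bob first apply these unitaries locally and then run Algorithm~\ref{alg:one_epr_purification} on the two-dimensional subspaces. This is still a finite-round LOCC protocol, and it attains $G_\star(D,\gamma)$.

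For the converse, the plan is to sharpen the argument behind Theorem~\ref{thm:main_one_ebit_necessity} rather than just quote its conclusion. Let $\ket\eta$ have Schmidt weights $p_0\ge p_1\ge\cdots$, and assume a finite-round LOCC (hence PPT) instrument attains $G_\star$. By the uniqueness of the global CPTN optimizer established in Appendix~\ref{app:high_rank_entropy_lower_bound}, the effective map on the two noisy copies must equal that optimizer: the projection onto the symmetric subspace of the two full copies followed by discarding one copy. Its success Choi operator $J_\star$ is a fixed operator on $A_1A_2B_1B_2A'B'$.

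The key step is to extract from the PPT constraints on both the success and failure branches a concrete inequality on the Schmidt weights. I would write the effective success operator as $\operatorname{Tr}_{\mathrm r}[(I\otimes\eta^{T})M_{\mathrm{succ}}]=J_\star$, with $M_{\mathrm{succ}}$ and $M_{\mathrm{fail}}$ PPT and the pair jointly trace-nonincreasing. I would then test against a witness built from the swap operator $F_{A_1A_2}\otimes F_{B_1B_2}$ and its partial transposes. The aim is an inequality of the form
\[
\Bigl(\sum_j\sqrt{p_j}\Bigr)^2\ \ge\ 2 \quad\text{together with}\quad p_0\le \tfrac12,
\]
or more likely the single condition $p_0\le1/2$. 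The intuition is that the global swap $F_{A_1A_2}\otimes F_{B_1B_2}$, viewed across the Alice:Bob cut, has the negativity of an EPR-type operator, and the resource must supply this. A cleaner formulation is that the maximal overlap of $\eta$ with PPT-compatible effects is controlled by the largest Schmidt weight. I expect the appendix derivation to reduce to exactly $p_0\le 1/2$, which by itself gives $E(\eta)\ge -\log_2 p_0\ge1$ through the min-entropy bound.

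Granting that condition, the equality case is elementary. The inequality $H(p)\ge H_{\min}(p)=-\log_2 p_0$ is an equality if and only if $p$ is uniform on its support. So $E(\eta)=1$ together with $p_0\le1/2$ forces $-\log_2 p_0\le 1$, hence $p_0=1/2$, and then uniformity on the support gives exactly two nonzero weights $1/2,1/2$. Schmidt decomposition then yields local unitary equivalence to $\ket{\Phi_2}$.

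The main obstacle is the middle step. It requires confirming that the PPT analysis in Appendices~\ref{app:abstract_effect_schmidt} and~\ref{app:purification_application} yields precisely the min-entropy-type condition $p_0\le1/2$, and not a weaker Shannon-entropy statement. If only $E\ge1$ is available, I would revisit that proof and track where slack can occur: equality in any convexity or averaging step should force all nonzero weights to be equal, and the witness should force the Schmidt rank to be at most $2$.
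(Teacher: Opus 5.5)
\textbf{Gap in the converse.} Your ``if'' direction is the same as the paper's. The ``only if'' direction has a genuine gap. It rests entirely on extracting $p_0\le1/2$ from the PPT constraints. You do not derive this (the swap-operator witness is only a heuristic), and the paper's analysis does not provide it either.

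In the paper, the chain runs as follows. The local twirl and mixed-Schur compression reduce the success branch to its contraction block. Optimality then forces the inserted block to be $\ket{v_d}\bra{v_d}$ and forces zero success weight in the mixed input sectors. PPT tests of the success block on $(\ket{+-}-\ket{-+})/\sqrt2\otimes\ket{00}$ and on $\ket{\phi_\pm}$, combined with PPT of the failure branch, then yield an effect $N$ with $\langle00|\Pi_\eta^\perp N\Pi_\eta^\perp|00\rangle\ge1/3$ and $\operatorname{Re}\langle00|\Pi_\eta^\perp N\Pi_\eta^\perp|11\rangle\le1/3$ (Lemma~\ref{lem:ppt_effect_certificate}).

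This gives only $p_0\le2/3$ and $(p_j)\prec\sigma_\star(p_0)$ (Lemma~\ref{lem:universal_schmidt_spectrum}). The paper explicitly leaves the range $1/2<p_0\le2/3$ unsettled (see the remark after Corollary~\ref{cor:p0_half_suffices}). So the min-entropy bound $E(\eta)\ge-\log_2p_0\ge1$ is not available. Without the PPT-to-effect-to-majorization chain, your converse is not proved.

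\textbf{How to repair the equality step.} Once that chain is in place, your equality argument survives if you change the R\'enyi order from min-entropy to collision entropy. The majorization gives $\sum_jp_j^2\le\sum_j\sigma_{\star,j}(p_0)^2=1/2$ (Corollary~\ref{cor:ppt_spectral_consequences}), i.e.\ $H_2(\eta)\ge1$. By Jensen, $H(p)\ge H_2(p)$, with equality iff $p$ is uniform on its support. Hence $1=E(\eta)\ge H_2(\eta)\ge1$ forces a uniform spectrum on two points, namely $(1/2,1/2,0,\ldots)$.

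This finish is slightly shorter than the paper's (Corollary~\ref{cor:pure_entropy_obstruction}). The paper instead compares $H(p)\ge H(\sigma_\star(p_0))$ via Karamata, uses $H(\sigma_\star(p_0))>H_2(\sigma_\star(p_0))=1$ when $p_0>1/2$, and invokes Karamata's equality case when $p_0\le1/2$. Either way, the substantive missing piece in your proposal is the derivation of the spectral constraint itself, not the entropy bookkeeping.
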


This is to say, for LOCC purification protocols with a pure resource containing exactly one ebit of entanglement, we single out, up to local unitaries, the EPR pair as the sufficient and necessary resource to attain the global optimum. Necessity follows from Corollary~\ref{cor:pure_entropy_obstruction};
sufficiency follows by converting the resource to an EPR pair via local unitaries and applying the one-EPR protocol above. For two-qubit resources, Theorem~\ref{thm:main_one_ebit_necessity} yields a
complete characterization covering both pure and mixed states, as stated
in the following corollary.

% \begin{proofsketch}
% The equality conditions in the resource lower bound force the Schmidt spectrum
% to be \((1/2,1/2,0,\ldots)\).  Conversely, any pure state with this spectrum
% can be transformed by unitaries acting separately on the two resource registers
% into the standard resource used in
% Algorithm~\ref{alg:one_epr_purification}.
% \end{proofsketch}

\begin{corollary}[Uniqueness of the two-qubit resource]
\label{cor:two_qubit_resource_uniqueness}
Fix \(d\ge2\) and \(0<\gamma<1\), and consider the \(2\to1\) purification task
on \(\mathcal S_P\) under the depolarizing channel \(\mathcal N^\gamma\).
Let \(\omega_{A_{\mathrm r}B_{\mathrm r}}\) be an arbitrary two-qubit resource
state.  Then \(\omega\) enables finite-round
LOCC attainment of \(G_\star(D,\gamma)\) if and only if it is a pure
maximally entangled state, equivalently, an EPR pair up to local unitaries.
\end{corollary}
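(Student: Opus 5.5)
The two directions of Corollary~\ref{cor:two_qubit_resource_uniqueness} need different tools. Sufficiency is immediate. If \(\omega\) is a pure maximally entangled two-qubit state, there are local unitaries \(U_A\otimes U_B\) on \(A_{\mathrm r}B_{\mathrm r}\) mapping it to \(\ket{\Phi_2}\). Alice and Bob apply these locally and then run Algorithm~\ref{alg:one_epr_purification}. Theorem~\ref{thm:main_global_cptn} says this attains \(G_\star(D,\gamma)\).

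For necessity, suppose a two-qubit \(\omega\) enables finite-round LOCC attainment. By Theorem~\ref{thm:main_one_ebit_necessity}, \(E_{\mathrm F}(\omega)\ge1\). For two qubits the Schmidt entropy of any pure state is at most \(\log_2 2=1\), so every term in any pure-state decomposition has entropy at most one. Hence \(E_{\mathrm F}(\omega)\le1\), and so \(E_{\mathrm F}(\omega)=1\).

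The remaining step is to show that a two-qubit state with \(E_{\mathrm F}=1\) is pure and maximally entangled. Take any pure-state decomposition \(\omega=\sum_a q_a\ket{\eta_a}\!\bra{\eta_a}\) with \(q_a>0\). Each term contributes at most one, and the infimum equals one, so every decomposition must have average entropy exactly one. This forces \(S(\operatorname{Tr}_{B_{\mathrm r}}\eta_a)=1\) for every \(a\), i.e.\ every \(\ket{\eta_a}\) is maximally entangled. In particular this holds for every pure state in the range of \(\omega\), since any vector in the range appears in some decomposition.

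So it suffices to show that a subspace of \(\mathbb C^2\otimes\mathbb C^2\) of dimension at least two contains a non-maximally-entangled vector. For this, identify vectors with \(2\times2\) matrices \(M\). A vector is maximally entangled iff \(M^\dagger M\propto I\), and this condition fails for some \(M\) in any two-dimensional span. Concretely, take \(M_1,M_2\) maximally entangled with \(M_1^\dagger M_1=M_2^\dagger M_2=I/2\) (WLOG). The determinant \(\det(M_1+tM_2)\) is a polynomial in \(t\) of degree at most two, with leading coefficient \(\det M_2\ne0\). It therefore has a root \(t\in\mathbb C\). At that root \(M_1+tM_2\) is nonzero, because \(M_1,M_2\) are independent, and has rank one, so it corresponds to a product vector. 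This contradicts the previous paragraph. Hence the range of \(\omega\) is one-dimensional, \(\omega\) is pure, and it is maximally entangled, i.e.\ an EPR pair up to local unitaries.

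The equivalent pure-state case could alternatively be deduced from Theorem~\ref{thm:main_pure_resource_equality}. The only delicate point I expect is the step from "\(E_{\mathrm F}=1\)" to "every range vector is maximally entangled." It needs an attained decomposition containing an arbitrary range vector with positive weight. This is standard: for any \(\ket v\) in the range, \(\omega-\epsilon\ket v\!\bra v\succeq0\) for small \(\epsilon>0\). Decompose the remainder into pure states and add \(\epsilon\ket v\!\bra v\). The resulting decomposition has average entropy one only if \(\ket v\) has entropy one.
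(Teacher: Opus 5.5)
Your proposal is correct and follows essentially the same route as the paper. Both arguments rest on the fact that every two-dimensional subspace of \(\mathbb C^2\otimes\mathbb C^2\) contains a product vector, and both split off a small weight on that vector to contradict \(E_{\mathrm F}(\omega)\ge1\).

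The differences are minor. You prove that product-vector lemma directly via the determinant polynomial, whereas the paper cites it. You also handle the pure and mixed cases uniformly through \(E_{\mathrm F}(\omega)=1\), whereas the paper bounds \(E_{\mathrm F}(\omega)\le1-p<1\) for mixed \(\omega\) and treats pure \(\omega\) separately.
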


The proof uses the one-ebit lower bound and the fact that every mixed
two-qubit state has entanglement of formation strictly below one; see
Corollary~\ref{cor:two_qubit_resource_characterization} for the full proof.

\section{Discussion}

We have determined the preshared entanglement that finite-round LOCC requires to match the optimal global CPTN performance in the \(2\to1\) blind purification of arbitrary bipartite pure states under depolarizing noise: one ebit. One EPR pair attains the benchmark (Theorem~\ref{thm:main_global_cptn}); no finite-dimensional resource with less than one ebit of entanglement of formation attains it (Theorem~\ref{thm:main_one_ebit_necessity}); and among pure one-ebit resources, only those locally unitarily equivalent to an EPR pair attain it (Theorem~\ref{thm:main_pure_resource_equality}). For two-qubit resources, the purity and one-ebit assumptions can be dropped:
the benchmark is attainable if and only if the resource is locally unitarily
equivalent to an EPR pair (Corollary~\ref{cor:two_qubit_resource_uniqueness}). Notably, the one-ebit
threshold is uniform in the local dimension and the noise strength: it holds for every \(d\ge2\) and every \(0<\gamma<1\), although the benchmark \(G_\star(D,\gamma)\) itself depends on both. Although one ebit is sufficient, Theorem~\ref{thm:main_pure_resource_equality} adds that the form
of the resource matters as well, and Corollary~\ref{cor:entropy_not_sufficient} exhibits pure states with more than one ebit that still do not attain the optimum. This shows that the
amount of entanglement alone does not capture a resource's operational usefulness.

% One EPR pair enables Alice and Bob to implement the symmetric projection on the
% two complete copies by LOCC and thereby attain the global CPTN benchmark.  The
% minimum resource entanglement required is one ebit, independently of \(d\).
% Although the one-ebit lower bound is tight, more than one ebit of entanglement
% entropy does not by itself guarantee attainment of the global benchmark:
% Corollary~\ref{cor:entropy_not_sufficient} gives an explicit pure-state
% counterexample.  

Several extensions remain open. First, below the one-ebit threshold, what a weaker resource can still deliver remains unknown: less than one ebit cannot attain the global optimum but may still enable
intermediate gains, and the attainable gain as a function of the preshared entanglement is unknown.  This entanglement--gain tradeoff would quantify what sub-threshold resources provide where high-quality entanglement is scarce.

% Resources below one ebit cannot attain the global optimum, but
% may still yield intermediate gains; the optimal performance below one ebit
% remains open.

Second, for \(n>2\)
copies, projection onto the fully symmetric subspace gives a natural purification protocol for depolarizing noise
\cite{barenco1997stabilization,yao2025protocols}.
Optimal qubit purification has been characterized for arbitrary copy number \cite{cirac1999optimal}, whereas the conditional-fidelity optimality of symmetric projection for general \(n\) and \(D\) remains conjectural \cite{yao2025protocols}. %For general \(n>2\) and \(D=d^2\) is of direct interest for the many-copy regime relevant to purification rates \cite{keyl2001rate}.
Moreover, maximizing conditional fidelity alone
would not establish optimality for the success-probability-weighted fidelity gain considered here. 
%Maximizing conditional fidelity does not, by itself, establish optimality for the success-probability-weighted fidelity gain considered here.
A distributed implementation exists by generalizing
the coherent control of Algorithm~\ref{alg:one_epr_purification}: since each permutation of the complete bipartite copies factors into local permutations, coherently correlating the \(n!\) permutation labels in a shared maximally entangled control register provides a LOCC implementation of this particular projection. 
This construction uses \(\log_2(n!)\) ebits, and whether less entanglement suffices is unknown. 
Determining the global optimum and its minimal entanglement cost
for general \(n>2\) and \(D=d^2\) is of direct interest for the many-copy regime relevant to purification rates \cite{keyl2001rate}.

% Since each permutation of the complete bipartite copies factors into local
% permutations, coherently correlating the \(n!\) local permutation labels
% in 
% Determining the corresponding optimal global benchmark for general
% \(n>2\) and \(D=d^2\), together with the minimum preshared entanglement
% required to attain it, remains an open problem.

Third, the threshold is tied to the noise model. Depolarizing noise is an isotropic benchmark, while dephasing instead selects a preferred basis and makes the baseline fidelity target-dependent.  One EPR pair
still implements the distributed swap test, but that protocol may not remain
globally optimal---it is close to optimal for Pauli noise but
substantially suboptimal for amplitude damping \cite{yao2025protocols}. 
% , and local dephasing may require less assistance.  Existing
% optimizations likewise find the swap test close to optimal for Pauli noise but
% substantially suboptimal for amplitude damping \cite{yao2025protocols}.
% Consequently, the minimum resource cost generally depends on the noise model.

Our results illustrate how low-dimensional entangled control systems can enable collective noise reduction in distributed systems with arbitrarily large local state dimensions. As a benchmark, the one-ebit threshold helps distinguish fundamental entanglement requirements from the additional entanglement used by particular implementations. Such benchmarks could guide the allocation of shared entanglement to nonlocal operations and noise management in quantum networks and modular quantum computers.

\noindent\textbf{Formal verification and code availability.}
In the setting of Theorem~\ref{thm:main_one_ebit_necessity},
we have formally verified in Lean~4~\cite{moura2021lean} that any
finite-dimensional pure resource \(\ket\eta\) enabling a PPT
instrument to attain \(G_\star(D,\gamma)\) satisfies \(E(\eta)\ge1\).
This verification follows the same core argument as the appendix proof.
We report the verification for pure resources only, since the
mixed-resource bound follows by a short convex-roof argument.
The formalization uses Mathlib~\cite{mathlib2020} and
Lean-QIT~\cite{zhu2026lean}.
Source code is available at
\url{https://github.com/jyzhau/State-Purification}.

\textbf{Acknowledgment.} This work was supported by the National Natural Science Foundation of China (Grant. No.~92576114, 12447107), the Guangdong Provincial Quantum Science Strategic Initiative (Grant Nos.~GDZX2403008, GDZX2503001), the Guangdong Provincial Key Lab of Integrated Communication, Sensing and Computation for Ubiquitous Internet of Things (Grant No.~2023B1212010007), and the Guangdong Basic and Applied Basic Research Foundation (Grant Nos.~2026A1515030035 and 2025A1515110223).

\bibliography{ref}

\clearpage
\appendix

\section{Direct Evaluation of the One-EPR Algorithm}
\label{app:one_epr_protocol_calculation}

This appendix evaluates Algorithm~\ref{alg:one_epr_purification} directly from
its two local controlled-swap gates and its two accepted measurement outcomes.
Let \(X\) be an arbitrary density operator on
\(A_1B_1A_2B_2\); in the purification task it will be the mixed state
\(\mathcal N^\gamma(\psi)^{\otimes2}\).  Before any gate is applied, the EPR
resource and the two noisy copies are in the joint state
\[
\ket{\Phi_2}\!\bra{\Phi_2}_{A_{\mathrm r}B_{\mathrm r}}\otimes X
=
\frac{1}{2}
\Bigl(
\ket{00}+\ket{11}
\Bigr)
\Bigl(
\bra{00}+\bra{11}
\Bigr)_{A_{\mathrm r}B_{\mathrm r}}
\otimes X.
\]
Alice and Bob then apply, respectively, the local controlled-swap gates
\[
\begin{aligned}
\mathsf C_A
&=
\ket0\!\bra0_{A_{\mathrm r}}\otimes I_{A_1A_2}
+\ket1\!\bra1_{A_{\mathrm r}}\otimes\mathsf S_{A_1A_2},\\
\mathsf C_B
&=
\ket0\!\bra0_{B_{\mathrm r}}\otimes I_{B_1B_2}
+\ket1\!\bra1_{B_{\mathrm r}}\otimes\mathsf S_{B_1B_2}.
\end{aligned}
\]
The two gates act on disjoint registers and hence commute.  The \(\ket{00}\)
component of the EPR state activates neither swap, whereas its \(\ket{11}\)
component activates both.  Therefore, after the controlled swaps, the joint
state is
\[
\frac12
\Bigl(
\ket{00}\otimes I_{A_1B_1A_2B_2}
+
\ket{11}\otimes
\mathsf{S}_{A_1A_2}\otimes\mathsf{S}_{B_1B_2}
\Bigr)
X
\Bigl(
\bra{00}\otimes I_{A_1B_1A_2B_2}
+
\bra{11}\otimes
\mathsf{S}_{A_1A_2}\otimes\mathsf{S}_{B_1B_2}
\Bigr).
\]

The two Hadamard gates act on the resource qubits as
\[
\begin{aligned}
(H\otimes H)\ket{00}
&=
\frac12
\bigl(\ket{00}+\ket{01}+\ket{10}+\ket{11}\bigr),\\
(H\otimes H)\ket{11}
&=
\frac12
\bigl(\ket{00}-\ket{01}-\ket{10}+\ket{11}\bigr).
\end{aligned}
\]
Applying these identities on both sides of the controlled-swap state gives the
following joint density operator after the Hadamard gates:
\[
\begin{aligned}
\Omega_H
=\frac18
\biggl[
&\bigl(\ket{00}+\ket{11}\bigr)_{A_{\mathrm r}B_{\mathrm r}}
\otimes
\bigl(I+\mathsf S_{A_1A_2}\otimes\mathsf S_{B_1B_2}\bigr)\\
&+
\bigl(\ket{01}+\ket{10}\bigr)_{A_{\mathrm r}B_{\mathrm r}}
\otimes
\bigl(I-\mathsf S_{A_1A_2}\otimes\mathsf S_{B_1B_2}\bigr)
\biggr]X\\
\times\biggl[
&\bigl(\bra{00}+\bra{11}\bigr)_{A_{\mathrm r}B_{\mathrm r}}
\otimes
\bigl(I+\mathsf S_{A_1A_2}\otimes\mathsf S_{B_1B_2}\bigr)\\
&+
\bigl(\bra{01}+\bra{10}\bigr)_{A_{\mathrm r}B_{\mathrm r}}
\otimes
\bigl(I-\mathsf S_{A_1A_2}\otimes\mathsf S_{B_1B_2}\bigr)
\biggr].
\end{aligned}
\]
Alice and Bob now measure \(A_{\mathrm r}\) and \(B_{\mathrm r}\) in the
computational basis.  The joint projector associated with outcome
\((c_A,c_B)\) is
\[
\Pi_{c_Ac_B}
:=
\ket{c_Ac_B}\!\bra{c_Ac_B}_{A_{\mathrm r}B_{\mathrm r}}
\otimes I_{A_1B_1A_2B_2}.
\]
Reading the \(00\) and \(11\) coefficients directly from \(\Omega_H\) and
projecting on both sides gives
\[
\begin{aligned}
\Pi_{00}\Omega_H\Pi_{00}
&=
\frac18\ket{00}\!\bra{00}_{A_{\mathrm r}B_{\mathrm r}}\otimes
\bigl(I+\mathsf S_{A_1A_2}\otimes\mathsf S_{B_1B_2}\bigr)X
\bigl(I+\mathsf S_{A_1A_2}\otimes\mathsf S_{B_1B_2}\bigr),\\
\Pi_{11}\Omega_H\Pi_{11}
&=
\frac18\ket{11}\!\bra{11}_{A_{\mathrm r}B_{\mathrm r}}\otimes
\bigl(I+\mathsf S_{A_1A_2}\otimes\mathsf S_{B_1B_2}\bigr)X
\bigl(I+\mathsf S_{A_1A_2}\otimes\mathsf S_{B_1B_2}\bigr).
\end{aligned}
\]
The two outcomes are measured and recorded separately.  Hence the unnormalized
branch associated with the event \(c_A=c_B\) is the classical coarse-graining
of these two branches:
\[
\begin{aligned}
&\Pi_{00}\Omega_H\Pi_{00}
+
\Pi_{11}\Omega_H\Pi_{11}\\
&\quad=
\frac18
\bigl(\ket{00}\!\bra{00}+\ket{11}\!\bra{11}\bigr)_{A_{\mathrm r}B_{\mathrm r}}
\otimes
\bigl(I+\mathsf S_{A_1A_2}\otimes\mathsf S_{B_1B_2}\bigr)X
\bigl(I+\mathsf S_{A_1A_2}\otimes\mathsf S_{B_1B_2}\bigr).
\end{aligned}
\]
After discarding the measured resource qubits, the unnormalized accepted state
of the two noisy copies is therefore
\[
\begin{aligned}
&\operatorname{Tr}_{A_{\mathrm r}B_{\mathrm r}}\!\left[
\Pi_{00}\Omega_H\Pi_{00}
+
\Pi_{11}\Omega_H\Pi_{11}
\right]\\
&\quad=
\frac14
\bigl(I+\mathsf S_{A_1A_2}\otimes\mathsf S_{B_1B_2}\bigr)X
\bigl(I+\mathsf S_{A_1A_2}\otimes\mathsf S_{B_1B_2}\bigr).
\end{aligned}
\]
Define
\[
P_{\rm sym}
:=
\frac{
I_{A_1B_1A_2B_2}
+
\mathsf{S}_{A_1A_2}\otimes\mathsf{S}_{B_1B_2}
}{2},
\]
which is a projector because
\((\mathsf S_{A_1A_2}\otimes\mathsf S_{B_1B_2})^2=I\).  The preceding
unnormalized state is exactly
\[
\begin{aligned}
&\frac14
\bigl(I+\mathsf S_{A_1A_2}\otimes\mathsf S_{B_1B_2}\bigr)X
\bigl(I+\mathsf S_{A_1A_2}\otimes\mathsf S_{B_1B_2}\bigr)\\
&\qquad=
P_{\rm sym}XP_{\rm sym}.
\end{aligned}
\]
The outcomes \(01\) and \(10\) are rejected and therefore do not contribute to
this accepted state.
After discarding the second copy, the unnormalized accepted output on
\(A_1B_1\) is therefore
\[
\operatorname{Tr}_{A_2B_2}
\left[P_{\rm sym}XP_{\rm sym}\right].
\]
Since \(\mathsf{S}_{A_1A_2}\otimes\mathsf{S}_{B_1B_2}\) is the swap of the two
complete bipartite copies, \(P_{\rm sym}\) projects onto their symmetric
subspace.  Its implementation above nevertheless uses only Alice's local swap,
Bob's local swap, and comparison of their classical outcomes.
In the channel and Choi notation used in the following appendix, this retained
output is denoted by \(A'B'\).

Now let \(\psi=\ket{\psi}\!\bra{\psi}\) be the unknown target state.  For the
two-copy input \(X=\mathcal N^\gamma(\psi)^{\otimes2}\), the two tensor factors
are identical, so \(\mathsf{S}_{A_1A_2}\otimes \mathsf{S}_{B_1B_2}\) commutes with \(X\).
Hence
\[
P_{\rm sym}
\mathcal N^\gamma(\psi)^{\otimes2}
P_{\rm sym}
=
\frac{
\mathcal N^\gamma(\psi)^{\otimes2}
+
\left(\mathsf{S}_{A_1A_2}\otimes \mathsf{S}_{B_1B_2}\right)
\mathcal N^\gamma(\psi)^{\otimes2}
}{2}.
\]
The two required partial traces are
\[
\operatorname{Tr}_{A_2B_2}
\left[\mathcal N^\gamma(\psi)^{\otimes2}\right]
=
\mathcal N^\gamma(\psi),
\qquad
\operatorname{Tr}_{A_2B_2}
\left[
\left(\mathsf{S}_{A_1A_2}\otimes \mathsf{S}_{B_1B_2}\right)
\mathcal N^\gamma(\psi)^{\otimes2}
\right]
=
\bigl(\mathcal N^\gamma(\psi)\bigr)^2.
\]
To see the second identity, choose local bases
\(\{\ket{a}\}_{a=1}^d\) and \(\{\ket{b}\}_{b=1}^d\), and write
\(\ket{a,b}:=\ket{a}\otimes\ket{b}\).  Expanding the two local swaps and
regrouping the labelled tensor factors according to
\(A_1B_1:A_2B_2\) gives
\[
\begin{aligned}
\mathsf S_{A_1A_2}\otimes\mathsf S_{B_1B_2}
&=
\sum_{a,a',b,b'=1}^d
\bigl(\ket{a}\!\bra{a'}\bigr)_{A_1}\otimes
\bigl(\ket{a'}\!\bra{a}\bigr)_{A_2}
\otimes
\bigl(\ket{b}\!\bra{b'}\bigr)_{B_1}\otimes
\bigl(\ket{b'}\!\bra{b}\bigr)_{B_2}\\
&=
\sum_{a,a',b,b'=1}^d
\bigl(\ket{a,b}\!\bra{a',b'}\bigr)_{A_1B_1}
\otimes
\bigl(\ket{a',b'}\!\bra{a,b}\bigr)_{A_2B_2}\\
&=
\sum_{i,j=1}^D
\bigl(\ket{i}\!\bra{j}\bigr)_{A_1B_1}\otimes
\bigl(\ket{j}\!\bra{i}\bigr)_{A_2B_2}.
\end{aligned}
\]
Here the last line uses the composite labels \(i=(a,b)\),
\(j=(a',b')\), and \(D=d^2\).  Thus the product of the two local swaps is
exactly the swap of the two complete bipartite copies.
Consequently, for operators \(Y\) and \(Z\) on one copy,
\[
\begin{aligned}
\operatorname{Tr}_{A_2B_2}\!\left[
\left(\mathsf S_{A_1A_2}\otimes\mathsf S_{B_1B_2}\right)
\left(Y_{A_1B_1}\otimes Z_{A_2B_2}\right)
\right]
&=
\sum_{i,j=1}^D
\ket{i}\!\bra{j}\,Y\,
\operatorname{Tr}\!\left[\ket{j}\!\bra{i}\,Z\right]\\
&=
\sum_{i,j=1}^D
\langle i|Z|j\rangle\ket{i}\!\bra{j}\,Y
=
ZY.
\end{aligned}
\]
Taking \(Y=Z=\mathcal N^\gamma(\psi)\) gives the stated square.
The unnormalized accepted output is
\begin{equation}\label{eq:one_epr_accepted_output}
\widehat{\sigma}_\psi
=
p_\psi\sigma_\psi
=
\frac{
\mathcal N^\gamma(\psi)
+\bigl(\mathcal N^\gamma(\psi)\bigr)^2
}{2}.
\end{equation}

The input fidelity and the purity of the noisy state are
\[
\mathrm F\!\left(\psi,\mathcal N^\gamma(\psi)\right)
=
\left\langle\psi\left|\mathcal N^\gamma(\psi)\right|\psi\right\rangle
=
1-\gamma+\frac{\gamma}{D},
\]
and
\[
\operatorname{Tr}\!\left[\bigl(\mathcal N^\gamma(\psi)\bigr)^2\right]
=
(1-\gamma)^2
+\frac{2\gamma(1-\gamma)}{D}
+\frac{\gamma^2}{D},
\]
respectively.  Taking the trace of
Eq.~\eqref{eq:one_epr_accepted_output}, and using
\(\operatorname{Tr}[\mathcal N^\gamma(\psi)]=1\), gives the accepted
probability:
\[
\begin{aligned}
p_\psi
&=
\operatorname{Tr}[\widehat{\sigma}_\psi]\\
&=
\frac{
1+\operatorname{Tr}\!\left[\bigl(\mathcal N^\gamma(\psi)\bigr)^2\right]
}{2}.
\end{aligned}
\]
To evaluate the target overlap, now use that \(\ket{\psi}\) is an eigenvector of
\(\mathcal N^\gamma(\psi)\) with eigenvalue
\(\mathrm F\!\left(\psi,\mathcal N^\gamma(\psi)\right)\).  It follows that
\[
\begin{aligned}
\langle\psi|\widehat{\sigma}_\psi|\psi\rangle
&=
\frac{
\langle\psi|\mathcal N^\gamma(\psi)|\psi\rangle
+
\langle\psi|\bigl(\mathcal N^\gamma(\psi)\bigr)^2|\psi\rangle
}{2}\\
&=
\frac{
\mathrm F\!\left(\psi,\mathcal N^\gamma(\psi)\right)
+\mathrm F\!\left(\psi,\mathcal N^\gamma(\psi)\right)^2
}{2}\\
&=
p_\psi\mathrm F(\psi,\sigma_\psi).
\end{aligned}
\]
The fidelity of the normalized accepted output is therefore
\begin{equation}\label{eq:one_epr_conditional_fidelity}
\begin{aligned}
\mathrm F(\psi,\sigma_\psi)
&=
\frac{
\mathrm F\!\left(\psi,\mathcal N^\gamma(\psi)\right)
+\mathrm F\!\left(\psi,\mathcal N^\gamma(\psi)\right)^2
}{
1+\operatorname{Tr}\!\left[\bigl(\mathcal N^\gamma(\psi)\bigr)^2\right]
}\\
&=
\frac{
\left(1-\gamma+\frac{\gamma}{D}\right)
\left(2-\gamma+\frac{\gamma}{D}\right)
}{
1+(1-\gamma)^2
+\frac{2\gamma(1-\gamma)}{D}
+\frac{\gamma^2}{D}
}.
\end{aligned}
\end{equation}

The success-probability-weighted fidelity improvement appearing in the
postselected benchmark is
\[
p_\psi
\left[
\mathrm F(\psi,\sigma_\psi)
-\mathrm F\!\left(\psi,\mathcal N^\gamma(\psi)\right)
\right]
=
\frac{\mathrm F\!\left(\psi,\mathcal N^\gamma(\psi)\right)}{2}
\left(
\mathrm F\!\left(\psi,\mathcal N^\gamma(\psi)\right)
-\operatorname{Tr}\!\left[\bigl(\mathcal N^\gamma(\psi)\bigr)^2\right]
\right).
\]
Here
\[
\mathrm F\!\left(\psi,\mathcal N^\gamma(\psi)\right)
-\operatorname{Tr}\!\left[\bigl(\mathcal N^\gamma(\psi)\bigr)^2\right]
=
\gamma(1-\gamma)
\left(1-\frac1D\right),
\]
and hence
\begin{equation}\label{eq:one_epr_direct_gain}
p_\psi
\left[
\mathrm F(\psi,\sigma_\psi)
-\mathrm F\!\left(\psi,\mathcal N^\gamma(\psi)\right)
\right]
=
\frac12
\left(1-\gamma+\frac{\gamma}{D}\right)
\gamma(1-\gamma)
\left(1-\frac1D\right).
\end{equation}
The right-hand side is strictly positive for \(d\ge2\) and
\(0<\gamma<1\), so the conditional fidelity is strictly improved.

For comparison with the global SDP below, we finally record the Choi operator
of the accepted map.  In the three-leg ket \(|a,b,c\rangle\) below, the first
two entries belong to the Choi-input coordinate copies
\(\mathbb C^D_{A_1B_1}\) and \(\mathbb C^D_{A_2B_2}\), while the third belongs
to the output \(\mathbb C^D_{A'B'}\).  Define the symmetric contraction
isometry
\begin{equation}\label{eq:gd_Wplus_definition}
W_+|a\rangle
=
\frac{1}{\sqrt{2(D+1)}}\sum_i
\left(|a,i,i\rangle+|i,a,i\rangle\right),
\qquad
P_{W_+}:=W_+W_+^\dagger .
\end{equation}
With \(A_1B_1\) relabeled as the output \(A'B'\), a Kraus family for
\(X\mapsto\operatorname{Tr}_{A_2B_2}[P_{\rm sym}XP_{\rm sym}]\) is
\[
K_i
:=
\left(I_{A_1B_1}\otimes\langle i|_{A_2B_2}\right)P_{\rm sym},
\qquad i=1,\ldots,D.
\]
Using input-first vectorization and the three-leg convention just stated,
\[
\begin{aligned}
|K_i\rangle\!\rangle
&:=
\sum_{a,b}|a,b\rangle\otimes K_i|a,b\rangle\\
&=
\frac12\sum_a
\left(|a,i,a\rangle+|i,a,a\rangle\right)\\
&=
\sqrt{\frac{D+1}{2}}\,W_+|i\rangle.
\end{aligned}
\]
The third entry in each three-leg ket is on \(A'B'\).  Hence the Choi operator
is
\begin{equation}\label{eq:one_epr_accepted_choi}
\sum_{i=1}^D|K_i\rangle\!\rangle\langle\!\langle K_i|
=
\frac{D+1}{2}P_{W_+}.
\end{equation}

\section{Global Optimizer and PPT Resource Constraints}
\label{app:high_rank_entropy_lower_bound}

The two noisy input copies are \(A_1B_1\) and \(A_2B_2\), the output copy is
\(A'B'\), and the preshared resource state is held in
\(A_{\mathrm r}B_{\mathrm r}\).  Write \(R:=A_{\mathrm r}B_{\mathrm r}\) for
the joint resource register.  Each local register has dimension \(d\), so each
bipartite copy has dimension \(D=d^2\).  We first determine the unrestricted
global CPTN benchmark and then formulate the PPT relaxation of LOCC instruments
with a preshared resource state.  Because every LOCC instrument belongs to the
larger PPT class, any necessary resource bound established for PPT instruments
also applies to LOCC protocols.

\subsection{Optimal global CPTN benchmark}

For a map \(\mathcal E:\mathcal B(X)\to\mathcal B(Y)\), fix an orthonormal
basis \(\{|i\rangle_X\}\) and use the input-first Choi convention
\[
J_{\mathcal E}
:=
\sum_{i,j}
|i\rangle\langle j|_X
\otimes
\mathcal E\!\left(|i\rangle\langle j|_X\right),
\]
where the first factor is the Choi-input coordinate copy of \(X\).  Then
\[
\mathcal E(Z)
=
\operatorname{Tr}_X\!\left[
\left(Z^T_X\otimes I_Y\right)J_{\mathcal E}
\right].
\]
Thus the CPTN constraints are \(J_{\mathcal E}\succeq0\) and
\(\operatorname{Tr}_YJ_{\mathcal E}\preceq I_X\).  All transposes and partial
transposes below use the tensor-product bases induced by these choices.

Define the global three-leg Choi space
\[
\mathcal H
:=
\mathbb C^D_{A_1B_1}
\otimes
\mathbb C^D_{A_2B_2}
\otimes
\mathbb C^D_{A'B'}.
\]
In a three-leg ket \(|a,b,c\rangle\), the first two entries are composite
\(AB\) input indices of dimension \(D\), and the third is the output index.
For \(|u\rangle=\sum_i u_i|i\rangle\), write
\(|u^*\rangle:=\sum_i\overline{u_i}|i\rangle\).
Define
\begin{equation}\label{eq:gd_noise_channel}
\mathcal N^\gamma(\rho)=(1-\gamma)\rho+\frac{\gamma}{D}I_D.
\end{equation}
For pure \(\psi\),
\(\mathrm F(\psi,\mathcal N^\gamma(\psi))=1-\gamma+\gamma/D\).  If
\(J\in\mathcal B(\mathcal H)\) is the Choi operator of a CPTN success map
\(\mathcal E\), then
\[
\begin{aligned}
g_\psi(\mathcal E)
&=
\langle\psi|\widehat{\sigma}_\psi|\psi\rangle
-p_\psi\,\mathrm F\!\left(\psi,\mathcal N^\gamma(\psi)\right)\\
&=
\operatorname{Tr}\!\left[
\left(
\psi-\mathrm F\!\left(\psi,\mathcal N^\gamma(\psi)\right)I_D
\right)
\mathcal E\!\left(\mathcal N^\gamma(\psi)^{\otimes2}\right)
\right].
\end{aligned}
\]
Averaging this identity over \(\psi\) and using the Choi trace identity gives
the objective \(\operatorname{Tr}[MJ]\).  Hence the optimal global CPTN
benchmark is given by the SDP
\begin{equation}\label{eq:gd_global_cptn_sdp}
\begin{aligned}
\max_J\quad & \operatorname{Tr}[MJ]\\
\text{\rm s.t.}\quad & J\succeq0,\\
& \operatorname{Tr}_{A'B'}J\preceq I_{A_1B_1A_2B_2},
\end{aligned}
\end{equation}
Using \(\int d\psi\) to denote integration with respect to \(\mu_P\), the
performance operator is
\begin{equation}\label{eq:gd_global_performance_operator}
M=
\int d\psi\,
\left(\mathcal N^\gamma(\psi)^{\otimes2}\right)^T
\otimes
\left[
\psi-\left(1-\gamma+\frac{\gamma}{D}\right)I_D
\right].
\end{equation}
\begin{lemma}[Unique optimizer of the global CPTN benchmark]
\label{lem:general_d_global_optimum}
For \(0<\gamma<1\), the SDP \eqref{eq:gd_global_cptn_sdp} has value
\[
G_\star(D,\gamma)
=
\frac12
\left(1-\gamma+\frac{\gamma}{D}\right)
\gamma(1-\gamma)
\left(1-\frac1D\right),
\]
and its unique optimizer is
\[
J_\star=\frac{D+1}{2}P_{W_+}.
\]
\end{lemma}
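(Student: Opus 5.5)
The plan is to solve the SDP \eqref{eq:gd_global_cptn_sdp} by exhibiting matching primal and dual feasible points, and then to obtain uniqueness from complementary slackness.

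\emph{Symmetry reduction.} First compute \(M\) explicitly. Expanding \(\mathcal N^\gamma(\psi)^{\otimes2}\) gives terms in \(\psi^{\otimes2}\), \(\psi\otimes I\), \(I\otimes\psi\) and \(I\otimes I\), each tensored with \(\psi\) or \(I\) on the output leg. The Haar integrals \(\int\psi^{\otimes k}\,d\psi=\Pi_{\mathrm{sym}}^{(k)}/\binom{D+k-1}{k}\) for \(k\le3\) then express \(M\) in the commutant of \(\bar U\otimes\bar U\otimes U\) (after the transposes). In particular, \(M\) is a combination of partially transposed permutation operators on three legs, and it commutes with the swap of the two input legs.

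Averaging any feasible \(J\) over this group, and over the input swap, preserves both feasibility and the objective. We may therefore restrict to \(J\) in the corresponding walled-Brauer-type algebra. This algebra is spanned by \(I\), \(\mathsf S_{12}\), and the partial transposes \(\Phi_{13}\), \(\Phi_{23}\) (unnormalized maximally entangled projectors between an input leg and the output leg), together with their products. This leaves a small finite-dimensional SDP.

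\emph{Primal value.} The value attained by \(J_\star=\tfrac{D+1}{2}P_{W_+}\) is already known: it is the Choi operator of Algorithm~\ref{alg:one_epr_purification} by \eqref{eq:one_epr_accepted_choi}, and \eqref{eq:one_epr_direct_gain} gives its value, which is independent of \(\psi\). So \(\operatorname{Tr}[MJ_\star]=G_\star(D,\gamma)\).

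\emph{Dual certificate.} Feasibility of \(J_\star\) is immediate, since \(\operatorname{Tr}_{A'B'}J_\star=P_{\rm sym}^{A_1B_1A_2B_2}\preceq I\). The dual problem is: minimize \(\operatorname{Tr}Y\) subject to \(Y\succeq0\) and \(Y\otimes I_{A'B'}\succeq M\). Complementary slackness suggests the form of \(Y\): it should be supported on the symmetric subspace, with \((Y\otimes I-M)W_+=0\). Take the ansatz \(Y=y\,P_{\rm sym}\) plus possibly a multiple of \(\int\psi^T\otimes\psi^T\,d\psi\)-type terms, i.e.\ an element of the \(\bar U\otimes\bar U\) commutant, which is spanned by \(I\) and \(\mathsf S\). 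Then fix the coefficients from \(\operatorname{Tr}Y=G_\star\) and from the kernel condition.

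\emph{Main obstacle.} The difficult step is verifying \(Y\otimes I-M\succeq0\). The plan is to block-diagonalize the three-leg algebra. Split the inputs into their symmetric and antisymmetric parts. Then decompose \(\mathrm{Sym}^2(\mathbb C^D)^*\otimes\mathbb C^D\) and \(\wedge^2(\mathbb C^D)^*\otimes\mathbb C^D\) into \(U\)-irreps. Each of these spaces contains one copy of the defining irrep (the range of \(W_+\), resp.\ the analogous \(W_-\)) plus a traceless complement. On each isotypic block \(Y\otimes I-M\) reduces to a scalar, or to a \(1\times1\) block, because each irrep occurs with multiplicity one after the swap symmetry is imposed. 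This reduces the check to a handful of eigenvalue inequalities in \(\gamma\) and \(D\).

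\emph{Uniqueness.} Suppose the computation gives strictly positive eigenvalues on every block except \(\operatorname{range}(W_+)\). Then any optimal \(J\) must satisfy \(\operatorname{Tr}[(Y\otimes I-M)J]=0\), which forces \(\operatorname{supp}J\subseteq\operatorname{range}W_+\). The twirled optimum is therefore \(cP_{W_+}\), with \(c\) fixed by the objective. For a general, non-symmetrized optimizer, strict complementarity still confines it to \(\operatorname{range}W_+\). Every operator there has the form \(W_+XW_+^\dagger\), and then the value together with the constraint \(\operatorname{Tr}_{A'B'}W_+XW_+^\dagger\preceq I\) (whose only saturating choice is proportional to identity by the trace-count) pins down \(X=\tfrac{D+1}{2}I\).

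If strictness fails, I would add to \(Y\) a small multiple of the antisymmetric projector to separate the \(W_-\) block, since \(0<\gamma<1\) should leave slack there.
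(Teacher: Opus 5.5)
Your plan is correct and is essentially the paper's argument. The paper splits the Choi space into $\operatorname{Ran}W_+\oplus\operatorname{Ran}W_-\oplus\mathcal L_+\oplus\mathcal L_-$ via mixed Schur--Weyl duality plus input-swap parity, and writes $M$ in terms of $I$, the input swap, $C_{\rm diag}$ and $C_{\rm cross}$, each of which is scalar on these sectors. It certifies optimality with exactly your dual point $Y=\lambda_{W_+}P_{\rm sym}$. It then gets uniqueness by forcing $\operatorname{supp}J\subseteq\operatorname{Ran}W_+$ and pinning $W_+^\dagger JW_+=\tfrac{D+1}{2}I_D$ through $\langle u,u|\operatorname{Tr}_{A'B'}J|u,u\rangle=\tfrac{2}{D+1}\langle u|W_+^\dagger JW_+|u\rangle$ together with the trace count.

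The strictness you were unsure about does hold: the paper computes $\lambda_{W_-},\lambda_{\mathcal L_+},\lambda_{\mathcal L_-}<0<\lambda_{W_+}$, so your fallback is never needed. It would also fail if it were, because adding a multiple of $P_{\rm asym}$ to $Y$ raises $\operatorname{Tr}Y$ above $G_\star$. That destroys the zero duality gap that your complementary-slackness step relies on.
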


\begin{proof}
Under the input-first Choi convention, a physical transformation by \(U\) is
represented by \(\overline U\) on each Choi-input coordinate factor.  The
relevant global representation is therefore
\[
\pi_D(U)
:=
\overline U_{A_1B_1}\otimes\overline U_{A_2B_2}\otimes U_{A'B'},
\qquad U\in U(D).
\]
Here \(U\) is an unrestricted unitary on a complete \(AB\) copy and need not
factor across Alice and Bob.  Haar invariance gives
\[
[M,\pi_D(U)]=0,
\qquad U\in U(D).
\]

Mixed Schur--Weyl duality
\cite{benkart1994tensor,grinko2023gelfand,nguyen2023mixed} for
\(\mathbb C^D_{A'B'}\otimes\mathbb C^D_{A_1B_1}
\otimes\mathbb C^D_{A_2B_2}\), with the first factor carrying \(U\) and the
last two carrying \(\overline U\), gives
\[
\mathcal H
\simeq
\bigoplus_{\lambda\in\operatorname{Irr}(\mathcal A^D_{1,2})}
V_\lambda^{\mathcal A^D_{1,2}}\otimes V_\lambda^{U(D)},
\qquad
\pi_D(U)
\simeq
\bigoplus_{\lambda}
I_{d_\lambda}\otimes\phi_\lambda(U),
\]
where
\[
d_\lambda
:=
\dim V_\lambda^{\mathcal A^D_{1,2}}
=
|\operatorname{Paths}(\lambda)|
\]
is the multiplicity of the irreducible \(U(D)\)-module
\(V_\lambda^{U(D)}\).  For \((n,m)=(1,2)\), the path rule gives
\[
\begin{array}{rcl}
(\varnothing,\varnothing)
&\to&((1),\varnothing)\to(\varnothing,\varnothing)
       \to(\varnothing,(1)),\\
(\varnothing,\varnothing)
&\to&((1),\varnothing)\to((1),(1))
       \to(\varnothing,(1)),\\
(\varnothing,\varnothing)
&\to&((1),\varnothing)\to((1),(1))
       \to((1),(2)),\\
(\varnothing,\varnothing)
&\to&((1),\varnothing)\to((1),(1))
       \to((1),(1,1)).
\end{array}
\]
Set
\[
\lambda_0:=(\varnothing,(1)),
\qquad
\lambda_+:=((1),(2)),
\qquad
\lambda_-:=((1),(1,1)).
\]
Their path multiplicities are
\[
d_{\lambda_0}=2,
\qquad
d_{\lambda_+}=d_{\lambda_-}=1.
\]

For a leaf \(\lambda=(\lambda_l,\lambda_r)\), pad both diagrams by zeros and
associate the dominant weight
\[
\Lambda_i:=\lambda_{l,i}-\lambda_{r,D+1-i},
\qquad 1\le i\le D.
\]
The Weyl dimension formula
\cite[Sec.~2]{grinko2023efficient}
\[
\dim V_\Lambda
=
\prod_{1\le i<j\le D}
\frac{\Lambda_i-\Lambda_j+j-i}{j-i}
\]
gives
\[
\begin{array}{c|c|c|c}
\lambda&\Lambda&d_\lambda&\dim V_\lambda^{U(D)}\\ \hline
\lambda_0&(0^{D-1},-1)&2&D\\
\lambda_+&(1,0^{D-2},-2)&1&\dfrac{D(D-1)(D+2)}2\\
\lambda_-&(1,0^{D-3},-1,-1)&1&\dfrac{D(D+1)(D-2)}2.
\end{array}
\]
Here \(0^r\) denotes a string of \(r\) zeros.  The admissibility condition is
\(\ell(\lambda_l)+\ell(\lambda_r)\le D\); since \(D=d^2\ge4\), all three
leaves are admissible.  The dimension identity
\[
2D+\frac{D(D-1)(D+2)}2+\frac{D(D+1)(D-2)}2
=D^3=\dim\mathcal H
\]
checks that these abstract summands exhaust the three-leg space.

We next identify the multiplicity-two summand concretely.  Alongside the
\(W_+\) in \eqref{eq:gd_Wplus_definition}, define
\[
W_-|a\rangle
=
\frac{1}{\sqrt{2(D-1)}}\sum_i
\left(|a,i,i\rangle-|i,a,i\rangle\right),
\qquad
P_{W_-}:=W_-W_-^\dagger.
\]
For every \(U\in U(D)\), unitarity gives
\[
\sum_i \overline U|i\rangle\otimes U|i\rangle
=
\sum_{j,k}(\overline U U^T)_{jk}|j,k\rangle
=
\sum_j|j,j\rangle.
\]
Applying this identity to the two terms in each \(W_\pm\), and contracting
their explicit definitions, gives
\begin{equation}\label{eq:W_intertwining_identities}
W_\pm^\dagger W_\pm=I_D,
\qquad
W_+^\dagger W_-=0,
\qquad
\pi_D(U)W_\pm=W_\pm\overline U.
\end{equation}
The weight \((0^{D-1},-1)\) is the dual defining representation, so
\eqref{eq:W_intertwining_identities} exhibits two orthogonal concrete copies of
the \(\lambda_0\)-type.  Their total dimension \(2D\) matches the path
multiplicity and Weyl dimension above.  Hence
\[
\operatorname{Ran}W_+\oplus\operatorname{Ran}W_-
\]
is the full \(\lambda_0\)-isotypic component, on which the \(U(D)\)-commutant
is \(\mathcal B(\mathbb C^2)\otimes I_D\).

The two identical input copies provide an additional commuting symmetry.  On
the full Choi space, their exchange acts as
\[
\bigl(\mathsf{S}_{A_1A_2}\otimes\mathsf{S}_{B_1B_2}\otimes I_{A'B'}\bigr)
|x,y,z\rangle
=
|y,x,z\rangle.
\]
It commutes with every \(\pi_D(U)\).  Moreover, each integrand in
\eqref{eq:gd_global_performance_operator} is invariant under this exchange,
because its two input factors are identical.  Therefore
\[
\left[
M,
\mathsf{S}_{A_1A_2}\otimes\mathsf{S}_{B_1B_2}\otimes I_{A'B'}
\right]
=0.
\]
Conjugation by either this swap or \(\pi_D(U)\) preserves the SDP constraints,
so an optimizer may be chosen in their joint commutant.

Directly from the definitions of \(W_+\) and \(W_-\),
\begin{equation}\label{eq:W_input_swap_parity}
\bigl(
\mathsf{S}_{A_1A_2}\otimes\mathsf{S}_{B_1B_2}\otimes I_{A'B'}
\bigr)W_\pm
=\pm W_\pm.
\end{equation}
Indeed, the input-copy swap exchanges the two summands in each definition of
\(W_\pm\), preserving their sum and negating their difference.
Thus the swap selects a parity basis in the two-dimensional multiplicity
space.  Explicitly,
\[
|+\rangle\otimes|a\rangle\longmapsto W_+|a\rangle,
\qquad
|-\rangle\otimes|a\rangle\longmapsto W_-|a\rangle,
\]
under which
\[
\left.\pi_D(U)\right|_{\operatorname{Ran}W_+\oplus\operatorname{Ran}W_-}
\simeq I_2\otimes\overline U,
\qquad
\left.
\bigl(\mathsf{S}_{A_1A_2}\otimes\mathsf{S}_{B_1B_2}\otimes I_{A'B'}\bigr)
\right|_{\operatorname{Ran}W_+\oplus\operatorname{Ran}W_-}
\simeq
\begin{pmatrix}1&0\\0&-1\end{pmatrix}\otimes I_D.
\]
In particular, the swap symmetry removes the off-diagonal multiplicity block
of \(M\) between \(\operatorname{Ran}W_+\) and \(\operatorname{Ran}W_-\).

Recall the symmetric input projector from the direct protocol calculation and
define its antisymmetric counterpart:
\[
P_{\rm sym}
:=
\frac{
I_{A_1B_1A_2B_2}+\mathsf{S}_{A_1A_2}\otimes\mathsf{S}_{B_1B_2}
}{2},
\qquad
P_{\rm asym}
:=
\frac{
I_{A_1B_1A_2B_2}-\mathsf{S}_{A_1A_2}\otimes\mathsf{S}_{B_1B_2}
}{2}.
\]
On the full Choi space, these projectors are extended trivially on \(A'B'\).
Recall that \(P_{W_\pm}=W_\pm W_\pm^\dagger\) are the orthogonal projectors
onto \(\operatorname{Ran}W_\pm\).  Equation~\eqref{eq:W_input_swap_parity}
gives
\[
\left(P_{\rm sym}\otimes I_{A'B'}\right)W_+=W_+,
\qquad
\left(P_{\rm asym}\otimes I_{A'B'}\right)W_-=W_-.
\]
Hence
\[
\operatorname{Ran}W_+
\subseteq
\operatorname{Ran}\left(P_{\rm sym}\otimes I_{A'B'}\right),
\qquad
\operatorname{Ran}W_-
\subseteq
\operatorname{Ran}\left(P_{\rm asym}\otimes I_{A'B'}\right),
\]
or, equivalently,
\[
P_{W_+}\preceq P_{\rm sym}\otimes I_{A'B'},
\qquad
P_{W_-}\preceq P_{\rm asym}\otimes I_{A'B'}.
\]
Define the residual parity subspaces by
\[
\begin{aligned}
\mathcal L_+
&:=
\operatorname{Ran}\left(P_{\rm sym}\otimes I_{A'B'}\right)
\cap\left(\operatorname{Ran}W_+\right)^\perp,\\
\mathcal L_-
&:=
\operatorname{Ran}\left(P_{\rm asym}\otimes I_{A'B'}\right)
\cap\left(\operatorname{Ran}W_-\right)^\perp.
\end{aligned}
\]
Their orthogonal projectors are
\[
P_{\mathcal L_+}
:=
P_{\rm sym}\otimes I_{A'B'}-P_{W_+},
\qquad
P_{\mathcal L_-}
:=
P_{\rm asym}\otimes I_{A'B'}-P_{W_-}.
\]
The intertwining identities imply that \(P_{W_\pm}\) commute with
\(\pi_D(U)\); the input-parity projectors do as well.  Hence
\(\mathcal L_\pm\) are invariant under \(\pi_D(U)\), and the input-copy swap
acts on them with parity \(\pm1\).
Their dimensions are
\[
\begin{aligned}
\dim\mathcal L_+
&=\frac{D^2(D+1)}2-D
=\frac{D(D-1)(D+2)}2,\\
\dim\mathcal L_-
&=\frac{D^2(D-1)}2-D
=\frac{D(D+1)(D-2)}2.
\end{aligned}
\]
These dimensions match the Weyl dimensions of the two remaining
multiplicity-one types.  Their one-dimensional
\(\mathcal A^D_{1,2}\)-modules carry input-swap characters \(+1\) for
\(\lambda_+\) and \(-1\) for \(\lambda_-\).  Hence
\(\mathcal L_+\) and \(\mathcal L_-\) realize the \(\lambda_+\)- and
\(\lambda_-\)-types, respectively.  The resulting identification is
\[
\begin{array}{c|c|c|c}
\text{mixed-Schur label}&\text{input-swap parity}&
\text{concrete subspace}&\text{dimension}\\ \hline
(\varnothing,(1))&+1&\operatorname{Ran}W_+&D\\
(\varnothing,(1))&-1&\operatorname{Ran}W_-&D\\
((1),(2))&+1&\mathcal L_+&
\frac{D(D-1)(D+2)}2\\
((1),(1,1))&-1&\mathcal L_-&
\frac{D(D+1)(D-2)}2.
\end{array}
\]
Thus
\[
\mathcal H
=
\operatorname{Ran}W_+\oplus\operatorname{Ran}W_-
\oplus\mathcal L_+\oplus\mathcal L_-.
\]
In the physical ordering \(A_1B_1,A_2B_2,A'B'\), let
\(\mathsf{S}_{A_1A'}\otimes \mathsf{S}_{B_1B'}\) and
\(\mathsf{S}_{A_2A'}\otimes \mathsf{S}_{B_2B'}\) be the two algebraic
input--output permutations, acting trivially on the remaining input copy.  Thus
\[
\left(\mathsf{S}_{A_1A'}\otimes \mathsf{S}_{B_1B'}\right)
|x,y,z\rangle=|z,y,x\rangle,
\qquad
\left(\mathsf{S}_{A_2A'}\otimes \mathsf{S}_{B_2B'}\right)
|x,y,z\rangle=|x,z,y\rangle.
\]
The projector \(P_{\rm sym}^{(3)}\) onto the fully symmetric subspace of
\((\mathbb C^D)^{\otimes3}\) is then
\[
\begin{aligned}
6P_{\rm sym}^{(3)}
&={}I_{\mathcal H}
+\mathsf{S}_{A_1A_2}\otimes \mathsf{S}_{B_1B_2}\otimes I_{A'B'}
\\
&+\mathsf{S}_{A_1A'}\otimes \mathsf{S}_{B_1B'}
+\mathsf{S}_{A_2A'}\otimes \mathsf{S}_{B_2B'}\\
&+\left(\mathsf{S}_{A_2A'}\otimes \mathsf{S}_{B_2B'}\right)
 \left(\mathsf{S}_{A_1A'}\otimes \mathsf{S}_{B_1B'}\right)\\
&+\left(\mathsf{S}_{A_1A'}\otimes \mathsf{S}_{B_1B'}\right)
 \left(\mathsf{S}_{A_2A'}\otimes \mathsf{S}_{B_2B'}\right).
\end{aligned}
\]
For the objective calculation, use the Haar moments
\[
\int d\psi\,\psi=\frac{I_D}{D},
\qquad
\int d\psi\,\psi^{\otimes2}=\frac{P_{\rm sym}}{\binom{D+1}{2}},
\qquad
\int d\psi\,\psi^{\otimes3}=\frac{P_{\rm sym}^{(3)}}{\binom{D+2}{3}}.
\]
An operator subscript will record the physical composite Choi leg on which it
acts.  Thus the three copies of \(\psi\) are written
\(\psi_{A_1B_1}\), \(\psi_{A_2B_2}\), and \(\psi_{A'B'}\).  Since partial
transpose over the two input copies is linear,
\[
\int d\psi\,
\left(
\psi_{A_1B_1}\otimes\psi_{A_2B_2}\otimes\psi_{A'B'}
\right)^{T_{A_1B_1A_2B_2}}
=
\left(
\int d\psi\,
\psi_{A_1B_1}\otimes\psi_{A_2B_2}\otimes\psi_{A'B'}
\right)^{T_{A_1B_1A_2B_2}},
\]
and similarly for the two-factor moments.  Therefore, after expanding
\[
\mathcal N^\gamma(\psi)^{\otimes2}
=(1-\gamma)^2\psi^{\otimes2}
+\frac{\gamma(1-\gamma)}{D}(\psi\otimes I_D+I_D\otimes\psi)
+\frac{\gamma^2}{D^2}I_D\otimes I_D,
\]
the performance operator is computed as
\[
\begin{aligned}
M={}&(1-\gamma)^2
\left(
\int d\psi\,
\psi_{A_1B_1}\otimes\psi_{A_2B_2}\otimes\psi_{A'B'}
\right)^{T_{A_1B_1A_2B_2}}\\
&-\left(1-\gamma+\frac{\gamma}{D}\right)(1-\gamma)^2
\left(
\int d\psi\,
\psi_{A_1B_1}\otimes\psi_{A_2B_2}\otimes I_{A'B'}
\right)^{T_{A_1B_1A_2B_2}}\\
&+\frac{\gamma(1-\gamma)}{D}
\left(
\int d\psi\,
\psi_{A_1B_1}\otimes I_{A_2B_2}\otimes\psi_{A'B'}
\right)^{T_{A_1B_1A_2B_2}}\\
&+\frac{\gamma(1-\gamma)}{D}
\left(
\int d\psi\,
I_{A_1B_1}\otimes\psi_{A_2B_2}\otimes\psi_{A'B'}
\right)^{T_{A_1B_1A_2B_2}}\\
&-\frac{2\gamma(1-\gamma)}{D^2}
\left(1-\gamma+\frac{\gamma}{D}\right)I_{\mathcal H}\\
&-\frac{\gamma^2(1-\gamma)}{D^2}
\left(1-\frac1D\right)I_{\mathcal H}.
\end{aligned}
\]
Introduce the contraction maps
\[
\Gamma_1|a\rangle:=\sum_i|i,a,i\rangle,
\qquad
\Gamma_2|a\rangle:=\sum_i|a,i,i\rangle,
\]
and set
\[
C_{\rm diag}:=\Gamma_1\Gamma_1^\dagger+\Gamma_2\Gamma_2^\dagger,
\qquad
C_{\rm cross}:=\Gamma_1\Gamma_2^\dagger+\Gamma_2\Gamma_1^\dagger.
\]
The required matrix-unit expansions of these permutations after partial
transpose are
\[
\begin{aligned}
\left(\mathsf{S}_{A_1A'}\otimes \mathsf{S}_{B_1B'}\right)^{T_{A_1B_1A_2B_2}}
&=\Gamma_1\Gamma_1^\dagger\\
&=\sum_{i,j,a}|i,a,i\rangle\langle j,a,j|,\\[2pt]
\left(\mathsf{S}_{A_2A'}\otimes \mathsf{S}_{B_2B'}\right)^{T_{A_1B_1A_2B_2}}
&=\Gamma_2\Gamma_2^\dagger\\
&=\sum_{i,j,a}|a,i,i\rangle\langle a,j,j|,\\[2pt]
\left[
\left(\mathsf{S}_{A_2A'}\otimes \mathsf{S}_{B_2B'}\right)
\left(\mathsf{S}_{A_1A'}\otimes \mathsf{S}_{B_1B'}\right)
\right]^{T_{A_1B_1A_2B_2}}
&=\Gamma_2\Gamma_1^\dagger\\
&=\sum_{i,j,a}|a,i,i\rangle\langle j,a,j|,\\[2pt]
\left[
\left(\mathsf{S}_{A_1A'}\otimes \mathsf{S}_{B_1B'}\right)
\left(\mathsf{S}_{A_2A'}\otimes \mathsf{S}_{B_2B'}\right)
\right]^{T_{A_1B_1A_2B_2}}
&=\Gamma_1\Gamma_2^\dagger\\
&=\sum_{i,j,a}|i,a,i\rangle\langle a,j,j|.
\end{aligned}
\]
For example,
\[
\begin{aligned}
\mathsf{S}_{A_1A'}\otimes \mathsf{S}_{B_1B'}
&=\sum_{i,j,a}|j,a,i\rangle\langle i,a,j|,\\
\left(\mathsf{S}_{A_1A'}\otimes \mathsf{S}_{B_1B'}\right)^{T_{A_1B_1A_2B_2}}
&=\sum_{i,j,a}|i,a,i\rangle\langle j,a,j|;
\end{aligned}
\]
and similarly for the other three lines.  Also the identity and the full
input-copy swap are unchanged by \(T_{A_1B_1A_2B_2}\).
Consequently the three moment terms are
\[
\begin{aligned}
&\left(\int d\psi\,
\psi_{A_1B_1}\otimes\psi_{A_2B_2}\otimes\psi_{A'B'}
\right)^{T_{A_1B_1A_2B_2}}\\
&\qquad=
\frac{
I_{\mathcal H}
+\mathsf{S}_{A_1A_2}\otimes \mathsf{S}_{B_1B_2}\otimes I_{A'B'}
+C_{\rm diag}+C_{\rm cross}
}
{D(D+1)(D+2)},\\[3pt]
&\left(\int d\psi\,
\psi_{A_1B_1}\otimes\psi_{A_2B_2}\otimes I_{A'B'}
\right)^{T_{A_1B_1A_2B_2}}\\
&\qquad=
\frac{
I_{\mathcal H}
+\mathsf{S}_{A_1A_2}\otimes \mathsf{S}_{B_1B_2}\otimes I_{A'B'}
}{D(D+1)},\\[3pt]
&\left(\int d\psi\,
\psi_{A_1B_1}\otimes I_{A_2B_2}\otimes\psi_{A'B'}
\right)^{T_{A_1B_1A_2B_2}}\\
&\quad+
\left(\int d\psi\,
I_{A_1B_1}\otimes\psi_{A_2B_2}\otimes\psi_{A'B'}
\right)^{T_{A_1B_1A_2B_2}}\\
&\qquad=
\frac{2I_{\mathcal H}+C_{\rm diag}}{D(D+1)}.
\end{aligned}
\]
The definitions of \(W_\pm\) give
\[
\begin{aligned}
\Gamma_1&=\sqrt{\frac{D+1}{2}}\,W_+
     -\sqrt{\frac{D-1}{2}}\,W_-,\\
\Gamma_2&=\sqrt{\frac{D+1}{2}}\,W_+
     +\sqrt{\frac{D-1}{2}}\,W_-.
\end{aligned}
\]
Substituting the above expressions for \(\Gamma_1\) and \(\Gamma_2\)
into the definitions of \(C_{\rm diag}\) and \(C_{\rm cross}\),
we find that the mixed terms cancel in both sums:
\[
\begin{aligned}
C_{\rm diag}
&=2\frac{D+1}{2}P_{W_+}
  +2\frac{D-1}{2}P_{W_-}
=(D+1)P_{W_+}+(D-1)P_{W_-},\\
C_{\rm cross}
&=2\frac{D+1}{2}P_{W_+}
  -2\frac{D-1}{2}P_{W_-}
=(D+1)P_{W_+}-(D-1)P_{W_-}.
\end{aligned}
\]
In particular, both operators vanish on
\(\mathcal L_+\oplus\mathcal L_-\), and their scalar values on the four
joint symmetry sectors are
\[
\begin{array}{c|cccc}
&W_+&W_-&\mathcal L_+&\mathcal L_-\\ \hline
I_{\mathcal H}&1&1&1&1\\
\text{input-copy swap}&1&-1&1&-1\\
C_{\rm diag}&D+1&D-1&0&0\\
C_{\rm cross}&D+1&-(D-1)&0&0.
\end{array}
\]
Every operator in the displayed formula for \(M\) is scalar on each table
column.  Hence, for \(\chi\ne\chi'\),
\[
P_\chi MP_{\chi'}=0,
\qquad
P_\chi MP_\chi=\lambda_\chi P_\chi,
\]
so the table directly yields the spectral decomposition
\(M=\sum_\chi\lambda_\chi P_\chi\).
For a sector \(\chi\), denote the three entries in the input-copy-swap,
\(C_{\rm diag}\), and \(C_{\rm cross}\) rows by
\(s_\chi,c_\chi,r_\chi\).  Substitution into the displayed formula for \(M\)
first gives the single scalar formula
\[
\begin{aligned}
\lambda_\chi
={}&(1-\gamma)^2
\left[
\frac{1+s_\chi+c_\chi+r_\chi}{D(D+1)(D+2)}
-\left(1-\gamma+\frac{\gamma}{D}\right)
\frac{1+s_\chi}{D(D+1)}
\right]\\
&+\frac{\gamma(1-\gamma)}D
\left[
\frac{2+c_\chi}{D(D+1)}
-\frac{2}{D}\left(1-\gamma+\frac{\gamma}{D}\right)
\right]
-\frac{\gamma^2(1-\gamma)}{D^2}
\left(1-\frac1D\right).
\end{aligned}
\]
Simplifying gives
\[
\begin{aligned}
\lambda_{W_+}
&=\frac{(1-\gamma)\gamma(D-1)}{D^2(D+1)}
\left(1-\gamma+\frac{\gamma}{D}\right),\\
\lambda_{W_-}
&=-\frac{(1-\gamma)\gamma}{D^2}
\left(1-\gamma+\frac{\gamma}{D}\right),\\
\lambda_{\mathcal L_-}
&=-\frac{(1-\gamma)\gamma}{D(D+1)}
\left(2-\frac{D^2-1}{D^2}\gamma\right),\\
\lambda_{\mathcal L_+}
&=-\frac{1-\gamma}{D^3(D+1)(D+2)}\Bigl[
D(D+1)^2+2+2(D-2)(1-\gamma)\\
&\hspace{35mm}+(D-1)^2(D+2)(1-\gamma)^2
\Bigr].
\end{aligned}
\]
For our \(D\ge4\) and \(0<\gamma<1\), the last three eigenvalues are strictly
negative, while \(\lambda_{W_+}>0\).  Thus the only positive objective sector is
\(\operatorname{Ran}W_+\).

Since
\[
P_{W_+}
=\frac{(\Gamma_1+\Gamma_2)(\Gamma_1+\Gamma_2)^\dagger}{2(D+1)},
\]
we compute its output trace explicitly.  Write each composite input label as
\(x=(a,b)\) and \(y=(c,d)\).  On input basis vectors
\(|x,y\rangle,|x',y'\rangle\), contraction of \(A'B'\) gives
\[
\begin{aligned}
\langle x,y|\operatorname{Tr}_{A'B'}(\Gamma_1\Gamma_1^\dagger)|x',y'\rangle
&=\delta_{x,x'}\delta_{y,y'},\\
\langle x,y|\operatorname{Tr}_{A'B'}(\Gamma_2\Gamma_2^\dagger)|x',y'\rangle
&=\delta_{x,x'}\delta_{y,y'},\\
\langle x,y|\operatorname{Tr}_{A'B'}(\Gamma_1\Gamma_2^\dagger)|x',y'\rangle
&=\delta_{x,y'}\delta_{y,x'},\\
\langle x,y|\operatorname{Tr}_{A'B'}(\Gamma_2\Gamma_1^\dagger)|x',y'\rangle
&=\delta_{x,y'}\delta_{y,x'}.
\end{aligned}
\]
Therefore
\begin{equation}\label{eq:contraction_output_trace_identities}
\begin{aligned}
\operatorname{Tr}_{A'B'}(\Gamma_1\Gamma_1^\dagger)
&=\operatorname{Tr}_{A'B'}(\Gamma_2\Gamma_2^\dagger)
=I_{A_1B_1A_2B_2},\\
\operatorname{Tr}_{A'B'}(\Gamma_1\Gamma_2^\dagger)
&=\operatorname{Tr}_{A'B'}(\Gamma_2\Gamma_1^\dagger)
=\mathsf{S}_{A_1A_2}\otimes \mathsf{S}_{B_1B_2}.
\end{aligned}
\end{equation}
This calculation uses only the three-leg contraction pattern and hence holds
in any single-leg dimension.  It now follows that
\[
\operatorname{Tr}_{A'B'}P_{W_+}
=
\frac{
2I_{A_1B_1A_2B_2}
+2\mathsf{S}_{A_1A_2}\otimes \mathsf{S}_{B_1B_2}
}{2(D+1)}
=
\frac{2}{D+1}P_{\rm sym}.
\]
Hence
\[
J_\star=\frac{D+1}{2}P_{W_+},
\qquad
\operatorname{Tr}_{A'B'}J_\star
=P_{\rm sym}\preceq I_{A_1B_1A_2B_2},
\]
so \(J_\star\) is feasible and
\[
\operatorname{Tr}[MJ_\star]
=
\lambda_{W_+}\frac{D(D+1)}2
=
G_\star(D,\gamma).
\]

For an arbitrary feasible \(J\), put \(P_\perp=I_{\mathcal H}-P_{W_+}\).  The sector
decomposition has no cross blocks between \(W_+\) and \(P_\perp\), and hence
\[
M=\lambda_{W_+}P_{W_+}+P_\perp MP_\perp,
\]
where \(P_\perp MP_\perp\) is negative definite on
\(\operatorname{Ran}P_\perp\).
Since \(P_\perp JP_\perp\succeq0\), this gives
\[
\operatorname{Tr}[MJ]
=
\lambda_{W_+}\operatorname{Tr}[P_{W_+}J]
+
\operatorname{Tr}[P_\perp MP_\perp\,P_\perp JP_\perp]
\le
\lambda_{W_+}\operatorname{Tr}[P_{W_+}J].
\]
The range of \(W_+\) lies in the input-symmetric subspace, so
\[
P_{W_+}\preceq P_{\rm sym}\otimes I_{A'B'}.
\]
Therefore
\[
\operatorname{Tr}[P_{W_+}J]
\le
\operatorname{Tr}[P_{\rm sym}\operatorname{Tr}_{A'B'}J]
\le
\operatorname{Tr}P_{\rm sym}
=\frac{D(D+1)}2.
\]
Thus \(\operatorname{Tr}[MJ]\le G_\star(D,\gamma)\), and the feasible
\(J_\star\) above is optimal.

The same upper bound has an explicit dual certificate.  The dual of
\eqref{eq:gd_global_cptn_sdp} is
\[
\begin{aligned}
\min_Y\quad &\operatorname{Tr}Y\\
\text{\rm s.t.}\quad &Y\succeq0,\\
&Y\otimes I_{A'B'}\succeq M.
\end{aligned}
\]
Indeed,
\(M\preceq\lambda_{W_+}P_{W_+}\preceq
\lambda_{W_+}P_{\rm sym}\otimes I_{A'B'}\), so the choice
\(Y=\lambda_{W_+}P_{\rm sym}\) is dual feasible and has value
\(\operatorname{Tr}Y=G_\star(D,\gamma)\).

It remains to prove uniqueness.  Let \(J\) be any feasible operator attaining
\(G_\star(D,\gamma)\).  Since \(\lambda_{W_+}>0\), equality must hold throughout
\begin{equation}\label{eq:gd_optimizer_equality_chain}
G_\star(D,\gamma)
=\operatorname{Tr}[MJ]
\le
\lambda_{W_+}\operatorname{Tr}[P_{W_+}J]
\le
\lambda_{W_+}\operatorname{Tr}P_{\rm sym}
=G_\star(D,\gamma).
\end{equation}
Since
\(\lambda_{W_-},\lambda_{\mathcal L_+},\lambda_{\mathcal L_-}<0\), the restriction of \(M\)
to \(\operatorname{Ran}P_\perp\) is strictly negative.  Hence there is
\(\mu>0\) such that
\[
-P_\perp MP_\perp\succeq\mu P_\perp.
\]
Together with the decomposition of \(\operatorname{Tr}[MJ]\) above,
\eqref{eq:gd_optimizer_equality_chain} forces the complementary-sector
contribution to vanish:
\[
0=
\operatorname{Tr}\!\left[
(-P_\perp MP_\perp)P_\perp JP_\perp
\right]
\ge
\mu\operatorname{Tr}[P_\perp JP_\perp]
\ge0.
\]
Thus \(P_\perp JP_\perp=0\), and positivity gives
\(P_\perp J=JP_\perp=0\): indeed,
\(P_\perp JP_\perp=(J^{1/2}P_\perp)^\dagger
(J^{1/2}P_\perp)=0\), so \(J^{1/2}P_\perp=0\).
Consequently, \(J=P_{W_+}JP_{W_+}\).  With
\(A:=W_+^\dagger J W_+\) acting on \(\mathbb C^D\),
\[
A\succeq0,\qquad
J
=W_+W_+^\dagger J W_+W_+^\dagger
=W_+AW_+^\dagger.
\]
Moreover, \eqref{eq:gd_optimizer_equality_chain} gives
\[
\operatorname{Tr}A
=\operatorname{Tr}[P_{W_+}J]
=\operatorname{Tr}P_{\rm sym}
=\frac{D(D+1)}2.
\]
For every unit vector \(|u\rangle\in\mathbb C^D\), let \(|u,u\rangle\) denote the
vector having coordinates \(u\) on each of the two Choi-input coordinate copies.
Then
\[
\begin{aligned}
(\langle u,u|\otimes I_D)W_+
&=\frac{2}{\sqrt{2(D+1)}}
\left(\sum_i\langle u|i\rangle|i\rangle\right)\langle u|\\
&=
\sqrt{\frac{2}{D+1}}\,|u^*\rangle\langle u|.
\end{aligned}
\]
Consequently, using \(J=W_+AW_+^\dagger\) and tracing the remaining output
operator,
\[
\langle u,u|\operatorname{Tr}_{A'B'}J|u,u\rangle
=
\operatorname{Tr}\!\left[
\bigl((\langle u,u|\otimes I_D)W_+\bigr)
A
\bigl((\langle u,u|\otimes I_D)W_+\bigr)^\dagger
\right]
=
\frac{2}{D+1}\langle u|A|u\rangle,
\]
and the feasibility condition
\(\operatorname{Tr}_{A'B'}J\preceq I_{A_1B_1A_2B_2}\) therefore gives
\(A\preceq(D+1)I_D/2\).  Its trace already equals the trace of this upper
bound; therefore
\[
A=\frac{D+1}{2}I_D,
\qquad
J=\frac{D+1}{2}P_{W_+}=J_\star.
\]
\end{proof}

Comparing \eqref{eq:one_epr_accepted_choi} with the unique optimizer above
shows that the accepted map of Algorithm~\ref{alg:one_epr_purification} has
Choi operator \(J_\star\).

\subsection{PPT relaxation}
\label{app:ppt_relaxation}

Before introducing local coordinates, recall the instrument description of an
LOCC protocol.  A finite-round LOCC protocol is a sequence of local quantum
instruments in which the instrument selected at a later round may depend on
the classical outcomes obtained in earlier rounds.  Along each fixed classical
history, the corresponding CP maps are composed; for example, two successive
outcomes \(j\) and \(k\) produce the branch
\(\mathcal B_{k|j}\circ\mathcal A_j\). By contrast, coarse-graining adds the
CP maps associated with histories assigned the same retained outcome
\cite{chitambar2014everything}.

Let \((\mathcal E_\lambda)_{\lambda\in\Theta}\) be the terminal instrument of
the uninserted LOCC protocol, where \(\lambda\) records its complete classical
history.  Partition the terminal histories into the disjoint accepted and
rejected sets \(\Theta_{\mathrm S}\) and \(\Theta_{\mathrm F}\), so that
\(\Theta=\Theta_{\mathrm S}\cup\Theta_{\mathrm F}\) and
\(\Theta_{\mathrm S}\cap\Theta_{\mathrm F}=\varnothing\).
The CPTN map \(\mathcal E\) used in the main text is the coarse-graining of the
accepted histories.
\[
\mathcal E
=
\sum_{\lambda\in\Theta_{\mathrm S}}\mathcal E_\lambda.
\]
We take all terminal maps to have output space \(A'B'\); rejected histories can
be completed by a fixed local preparation.  Retaining an explicit classical
outcome register then gives the quantum--classical channel
\[
X
\longmapsto
\left(
\sum_{\lambda\in\Theta_{\mathrm S}}\mathcal E_\lambda(X)
\right)\otimes|\mathrm S\rangle\langle\mathrm S|
+
\left(
\sum_{\lambda\in\Theta_{\mathrm F}}\mathcal E_\lambda(X)
\right)\otimes|\mathrm F\rangle\langle\mathrm F|.
\]
The Choi operator of the accepted coarse-graining is
\[
S
:=
J_{\mathcal E}
=
\sum_{\lambda\in\Theta_{\mathrm S}}J_{\mathcal E_\lambda}.
\]
The Choi operator of the rejected coarse-graining is
\[
F
:=
\sum_{\lambda\in\Theta_{\mathrm F}}J_{\mathcal E_\lambda}.
\]
They act on \(\mathcal H\otimes R\) before the resource state is inserted.

Under the input-first Choi convention, inserting a normalized resource
\(\omega_R\) into a branch with Choi operator \(J_{\mathcal E}\) gives
\begin{equation}\label{eq:resource_insertion_convention}
J_{\mathcal E(\,\cdot\,\otimes\omega_R)}
=
\operatorname{Tr}_R\!\left[
\left(I_{\mathcal H}\otimes\omega_R^T\right)J_{\mathcal E}
\right].
\end{equation}
In particular, for the accepted branch \(J_{\mathcal E}=S\), the objective is
\[
\operatorname{Tr}\!\left[
M J_{\mathcal E(\,\cdot\,\otimes\omega_R)}
\right]
=
\operatorname{Tr}\!\left[(M\otimes\omega_R^T)S\right].
\]
Thus the full transpose in the objective comes from resource insertion; it is
distinct from the Bob-side partial transpose in the PPT constraints.

Fix a normalized resource \(\omega_R\).  Since the inclusions from LOCC to SEP
and then to PPT hold branchwise, replacing LOCC by branchwise PPT constraints
gives the relaxation
\begin{equation}\label{eq:fixed_resource_ppt_sdp}
\begin{aligned}
\max_{S,F}\quad
&\operatorname{Tr}\!\left[(M\otimes\omega_R^T)S\right]\\
\text{\rm s.t.}\quad
&S,F\succeq0,\\
&S^{T_{B_1B_2B'B_{\mathrm r}}}\succeq0,\qquad
F^{T_{B_1B_2B'B_{\mathrm r}}}\succeq0,\\
&\operatorname{Tr}_{A'B'}(S+F)
=I_{A_1B_1A_2B_2}\otimes I_R,
\end{aligned}
\end{equation}
where \(S,F\in\mathcal B(\mathcal H\otimes R)\).  The final constraint is
imposed on the full instrument before resource insertion.  Inserting the
normalized resource contracts \(I_R\) to \(1\), leaving the trace-preservation
identity on \(A_1B_1A_2B_2\).  The PPT conditions apply only to the uninserted
instrument; after resource insertion, the effective branches need only satisfy
the global CPTN constraints and need not remain PPT across Alice and Bob.

\subsection{Local mixed-Schur reduction}

We next exploit the local symmetry of the fixed-resource relaxation to express
its branches in independent Alice and Bob coordinates.
Let the three global Choi legs in dimension \(D\) be
\[
(A_1B_1),(A_2B_2),(A'B'),
\]
and reorder the corresponding local factors as
\[
A_1,A_2,A',B_1,B_2,B'.
\]
When the total dimension needs emphasis, write \(W_+^{(D)}\) for the isometry
in \eqref{eq:gd_Wplus_definition}.
For each side \(X\in\{A,B\}\), set
\[
H_X:=
\mathbb C^d_{X_1}
\otimes\mathbb C^d_{X_2}
\otimes\mathbb C^d_{X'} .
\]
The corresponding local representation is
\[
\pi_X(U):=\overline U_{X_1}\otimes\overline U_{X_2}\otimes U_{X'},
\qquad U\in U(d).
\]
Under the fixed Alice--Bob reorder above,
\[
\mathcal H\simeq H_A\otimes H_B .
\]
\begin{lemma}[Local-twirl reduction]
\label{lem:local_twirl_reduction}
For every fixed normalized resource \(\omega_R\), any feasible pair \(S,F\)
in \eqref{eq:fixed_resource_ppt_sdp} can be replaced, without changing the
objective value, by a feasible pair invariant under conjugation by
\[
\pi_A(U)\otimes\pi_B(V)\otimes I_R,
\qquad U,V\in U(d).
\]
\end{lemma}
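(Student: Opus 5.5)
The plan is the standard twirling argument: average \(S\) and \(F\) over the compact group \(G:=U(d)\times U(d)\) with respect to its normalized Haar measure. Concretely, I define
\[
\tilde S:=\int dU\,dV\,(\pi_A(U)\otimes\pi_B(V)\otimes I_R)\,S\,(\pi_A(U)\otimes\pi_B(V)\otimes I_R)^\dagger,
\]
and define \(\tilde F\) in the same way. By invariance of the Haar measure, \(\tilde S\) and \(\tilde F\) are invariant under conjugation by every \(\pi_A(U)\otimes\pi_B(V)\otimes I_R\). It then remains to show two things: the twirled pair is feasible for \eqref{eq:fixed_resource_ppt_sdp}, and it has the same objective value.

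\textbf{Objective.} Under the Alice--Bob reordering \(\mathcal H\simeq H_A\otimes H_B\), the local operator \(\pi_A(U)\otimes\pi_B(V)\) coincides with \(\pi_D(U\otimes V)\), where \(U\otimes V\in U(D)\). This holds because \(\overline{U\otimes V}=\overline U\otimes\overline V\) factor by factor. The proof of Lemma~\ref{lem:general_d_global_optimum} showed \([M,\pi_D(W)]=0\) for all \(W\in U(D)\), so in particular \(M\) commutes with every local \(\pi_A(U)\otimes\pi_B(V)\). Since the twirl acts trivially on \(R\), the operator \(M\otimes\omega_R^T\) also commutes with each conjugating unitary. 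Cyclicity of the trace then gives
\[
\operatorname{Tr}[(M\otimes\omega_R^T)\,g S g^\dagger]=\operatorname{Tr}[(M\otimes\omega_R^T)S]
\]
for each group element \(g\), and integrating over \(G\) preserves the objective.

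\textbf{Positivity and trace condition.} Conjugation by a unitary preserves positivity, and an integral of positive operators is positive, so \(\tilde S,\tilde F\succeq0\). For the trace condition, the output factors \(U_{A'}\otimes V_{B'}\) of the conjugating unitary cancel under \(\operatorname{Tr}_{A'B'}\). This gives
\[
\operatorname{Tr}_{A'B'}(gSg^\dagger)=g_{\rm in}\bigl(\operatorname{Tr}_{A'B'}S\bigr)g_{\rm in}^\dagger,
\qquad
g_{\rm in}:=\overline U_{A_1}\otimes\overline U_{A_2}\otimes\overline V_{B_1}\otimes\overline V_{B_2}\otimes I_R.
\]
Hence \(\operatorname{Tr}_{A'B'}(\tilde S+\tilde F)\) is the twirl of \(I\otimes I_R\), which is again \(I\otimes I_R\).

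\textbf{PPT condition (the one step needing care).} We must check that the partial transpose on \(B_1B_2B'B_{\mathrm r}\) commutes with the twirl in the appropriate sense. The conjugating unitary is a product \(g=g_A\otimes g_B\), with \(g_A=\pi_A(U)\otimes I_{A_{\mathrm r}}\) acting on Alice's registers and \(g_B=\pi_B(V)\otimes I_{B_{\mathrm r}}\) acting on Bob's. For such a product,
\[
\bigl((g_A\otimes g_B)X(g_A\otimes g_B)^\dagger\bigr)^{T_B}=(g_A\otimes\overline{g_B})\,X^{T_B}\,(g_A\otimes\overline{g_B})^\dagger,
\]
which can be checked on product operators \(X=X_A\otimes X_B\) using \((g_BX_Bg_B^\dagger)^T=\overline{g_B}X_B^T g_B^T\). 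The operator \(\overline{g_B}\) is still unitary, so
\[
\tilde S^{T_B}=\int dU\,dV\,(g_A\otimes\overline{g_B})\,S^{T_B}\,(g_A\otimes\overline{g_B})^\dagger\succeq0,
\]
and likewise \(\tilde F^{T_B}\succeq0\). The twirled pair is therefore feasible, is invariant, and attains the same objective value, which proves the lemma.
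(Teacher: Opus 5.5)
Your proposal is correct and takes essentially the same route as the paper. Both arguments Haar-average over \(U(d)\times U(d)\) and preserve positivity. Both handle the PPT constraint by rewriting the twirl of \(K^{T_{B_1B_2B'B_{\mathrm r}}}\) as a conjugation by the unitary \(\pi_A(U)\otimes\overline{\pi_B(V)}\otimes I_R\), cancel the output unitaries under \(\operatorname{Tr}_{A'B'}\), and keep the objective via \(\pi_A(U)\otimes\pi_B(V)=\pi_D(U\otimes V)\) and cyclicity of the trace. The only cosmetic difference is that you cite the commutation \([M,\pi_D(W)]=0\) from the proof of Lemma~\ref{lem:general_d_global_optimum}, whereas the paper re-derives the invariance of \(M\) from covariance of \(\mathcal N^\gamma\) and a Haar change of variable.
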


\begin{proof}
For each \(K\in\{S,F\}\), replace \(K\) by the independent local Haar average
\[
\mathbb E_{U,V}\!\left[
\bigl(\pi_A(U)\otimes\pi_B(V)\otimes I_R\bigr)K
\bigl(\pi_A(U)\otimes\pi_B(V)\otimes I_R\bigr)^\dagger
\right].
\]
By left invariance of Haar measure, the displayed average is invariant under
the conjugations in the statement.  If \(K\succeq0\), every integrand is
positive, so the average is positive.

For the PPT constraint, linearity of partial transpose and its reversal of
products on Bob's tensor factors give the explicit identity
\[
\begin{aligned}
&\left\{
\mathbb E_{U,V}\!\left[
\bigl(\pi_A(U)\otimes\pi_B(V)\otimes I_R\bigr)K
\bigl(\pi_A(U)\otimes\pi_B(V)\otimes I_R\bigr)^\dagger
\right]
\right\}^{T_{B_1B_2B'B_{\mathrm r}}}\\
&\quad=
\mathbb E_{U,V}\!\left[
\bigl(\pi_A(U)\otimes\overline{\pi_B(V)}\otimes I_R\bigr)
K^{T_{B_1B_2B'B_{\mathrm r}}}
\bigl(\pi_A(U)\otimes\overline{\pi_B(V)}\otimes I_R\bigr)^\dagger
\right]
\succeq0.
\end{aligned}
\]
Here
\[
\overline{\pi_B(V)}
=V_{B_1}\otimes V_{B_2}\otimes\overline V_{B'}
\]
is unitary.  Hence each integrand on the right is a unitary conjugate of
\(K^{T_{B_1B_2B'B_{\mathrm r}}}\succeq0\), which proves the displayed
inequality.

The output trace removes the output conjugations and leaves only the input
ones.  Applying this to the averaged sum of the two branches gives
\[
\begin{aligned}
&\operatorname{Tr}_{A'B'}
\mathbb E_{U,V}\!\left[
\bigl(\pi_A(U)\otimes\pi_B(V)\otimes I_R\bigr)(S+F)
\bigl(\pi_A(U)\otimes\pi_B(V)\otimes I_R\bigr)^\dagger
\right]\\
&\quad=
\mathbb E_{U,V}\!\left[
\bigl(
\overline U_{A_1}\otimes\overline U_{A_2}
\otimes\overline V_{B_1}\otimes\overline V_{B_2}\otimes I_R
\bigr)
\operatorname{Tr}_{A'B'}(S+F)
\bigl(
\overline U_{A_1}\otimes\overline U_{A_2}
\otimes\overline V_{B_1}\otimes\overline V_{B_2}\otimes I_R
\bigr)^\dagger
\right]\\
&\quad=
I_{A_1B_1A_2B_2}\otimes I_R,
\end{aligned}
\]
where the last equality uses the trace identity in
\eqref{eq:fixed_resource_ppt_sdp}.  Thus that identity is preserved.

Under the fixed Alice--Bob reorder,
\(\pi_A(U)\otimes\pi_B(V)=\pi_D(U\otimes V)\).  Since
\(U\otimes V\in U(D)\), the definition
\eqref{eq:gd_global_performance_operator}, covariance of
\(\mathcal N^\gamma\), and the Haar change of variable
\(\psi\mapsto(U\otimes V)\psi(U\otimes V)^\dagger\) give
\[
\bigl(\pi_A(U)\otimes\pi_B(V)\bigr)^\dagger
M
\bigl(\pi_A(U)\otimes\pi_B(V)\bigr)
=M.
\]
Because the twirl acts as the identity on \(R\), cyclicity of trace now gives
\[
\begin{aligned}
&\operatorname{Tr}\!\left[
(M\otimes\omega_R^T)
\mathbb E_{U,V}\!\left[
\bigl(\pi_A(U)\otimes\pi_B(V)\otimes I_R\bigr)S
\bigl(\pi_A(U)\otimes\pi_B(V)\otimes I_R\bigr)^\dagger
\right]
\right]\\
&\quad=
\mathbb E_{U,V}\operatorname{Tr}\!\left[
\bigl(\pi_A(U)\otimes\pi_B(V)\otimes I_R\bigr)^\dagger
(M\otimes\omega_R^T)
\bigl(\pi_A(U)\otimes\pi_B(V)\otimes I_R\bigr)S
\right]\\
&\quad=
\operatorname{Tr}\!\left[(M\otimes\omega_R^T)S\right].
\end{aligned}
\]
Therefore the objective value is unchanged.
\end{proof}

Define the normalized local parity isometries
\(W_\pm^X:\mathbb C^d\to H_X\) directly by
\[
W_+^X|a\rangle
=
\frac{1}{\sqrt{2(d+1)}}\sum_i
\left(
|a,i,i\rangle_{H_X}
+
|i,a,i\rangle_{H_X}
\right),
\qquad
W_-^X|a\rangle
=
\frac{1}{\sqrt{2(d-1)}}\sum_i
\left(
|a,i,i\rangle_{H_X}
-
|i,a,i\rangle_{H_X}
\right).
\]
The swap on the two local input legs is \(\mathsf{S}_{X_1X_2}\).  Its extension to
\(H_X\), acting trivially on the output, is
\(\mathsf{S}_{X_1X_2}\otimes I_{X'}\).  The local maps satisfy
\begin{equation}\label{eq:local_W_structural_identities}
(W_\pm^X)^\dagger W_\pm^X=I_d,
\qquad (W_+^X)^\dagger W_-^X=0,
\qquad
\pi_X(U)W_\pm^X=W_\pm^X\overline U,
\qquad
(\mathsf{S}_{X_1X_2}\otimes I_{X'})W_\pm^X=\pm W_\pm^X.
\end{equation}

The mixed-Schur analysis in the proof of
Lemma~\ref{lem:general_d_global_optimum} applies to the three-leg
representation \(\overline U^{\otimes2}\otimes U\) in an arbitrary single-leg
dimension.  Since \(H_X\) carries precisely this representation through
\(\pi_X\), the same labels and path multiplicities apply with \(D\) replaced by
\(d\).  For \(d\ge3\), all three leaves
\(\lambda_0,\lambda_+,\lambda_-\) are admissible.  For \(d=2\), the condition
\(\ell(\lambda_l)+\ell(\lambda_r)\le d\) removes \(\lambda_-\), while
\(\lambda_0\) still has two paths and \(\lambda_+\) has one.  The identities
above exhibit two orthogonal copies of the \(\lambda_0\)-type, so
\[
\operatorname{Ran}W_+^X\oplus\operatorname{Ran}W_-^X
\]
is the full local contraction isotypic component.  When \(d=2\), these two
ranges have total dimension four, the remaining \(+1\)-parity component has
dimension four, and the remaining \(-1\)-parity component is zero; the total is
\(8=\dim H_X\).

Define the local input-parity projectors by
\[
P_{\rm sym}^X:=\frac{I_{X_1X_2}+\mathsf{S}_{X_1X_2}}{2},
\qquad
P_{\rm asym}^X:=\frac{I_{X_1X_2}-\mathsf{S}_{X_1X_2}}{2},
\]
where they are extended trivially on \(X'\) when acting on \(H_X\).
Equation~\eqref{eq:local_W_structural_identities} gives
\[
\left(P_{\rm sym}^X\otimes I_{X'}\right)W_+^X=W_+^X,
\qquad
\left(P_{\rm asym}^X\otimes I_{X'}\right)W_-^X=W_-^X.
\]
Define the residual local parity subspaces by
\[
\begin{aligned}
\mathcal L_+^X
&:=
\operatorname{Ran}\left(P_{\rm sym}^X\otimes I_{X'}\right)
\cap\left(\operatorname{Ran}W_+^X\right)^\perp,\\
\mathcal L_-^X
&:=
\operatorname{Ran}\left(P_{\rm asym}^X\otimes I_{X'}\right)
\cap\left(\operatorname{Ran}W_-^X\right)^\perp.
\end{aligned}
\]
Their orthogonal projectors are
\[
P_{\mathcal L_+^X}
:=
P_{\rm sym}^X\otimes I_{X'}-W_+^X(W_+^X)^\dagger,
\qquad
P_{\mathcal L_-^X}
:=
P_{\rm asym}^X\otimes I_{X'}-W_-^X(W_-^X)^\dagger.
\]
Expanding the explicit definition of \(P_{\rm sym}\) in the two local parity
projectors gives
\begin{equation}\label{eq:global_input_parity_local_decomposition}
\begin{aligned}
P_{\rm sym}
&=P_{\rm sym}^A\otimes P_{\rm sym}^B
+P_{\rm asym}^A\otimes P_{\rm asym}^B,\\
P_{\rm asym}
&=P_{\rm sym}^A\otimes P_{\rm asym}^B
+P_{\rm asym}^A\otimes P_{\rm sym}^B.
\end{aligned}
\end{equation}
The local mixed-Schur decomposition is therefore
\begin{equation}\label{eq:local_mixed_schur_direct_sum}
H_X
=
\operatorname{Ran}W_+^X
\oplus
\operatorname{Ran}W_-^X
\oplus
\mathcal L_+^X
\oplus
\mathcal L_-^X.
\end{equation}
Their dimensions are
\[
\dim\mathcal L_+^X=\frac{d(d-1)(d+2)}2,
\qquad
\dim\mathcal L_-^X=\frac{d(d+1)(d-2)}2.
\]
The subspaces \(\mathcal L_\pm^X\) are invariant under \(\pi_X(U)\) and are the
residual mixed-Schur components complementary to the local contraction
isotypic component.
Thus \(\mathcal L_-^X=\{0\}\) when \(d=2\), consistently with the
admissibility argument above.

For each \(X\in\{A,B\}\), let \(\mathcal M_X\) be an abstract
two-dimensional multiplicity coordinate space with fixed real orthonormal
basis
\[
\{|+\rangle_{\mathcal M_X},|-\rangle_{\mathcal M_X}\}.
\]
For later use, abbreviate the joint multiplicity space by
\[
\mathcal M_{AB}:=\mathcal M_A\otimes\mathcal M_B.
\]

Define the contraction-sector coordinate isometry
\(G_X:\mathcal M_X\otimes\mathbb C^d\to H_X\)
by
\[
G_X\bigl(|+\rangle_{\mathcal M_X}\otimes|a\rangle\bigr)
=
W_+^X|a\rangle,
\qquad
G_X\bigl(|-\rangle_{\mathcal M_X}\otimes|a\rangle\bigr)
=
W_-^X|a\rangle.
\]
The definition also fixes the adjoint explicitly:
\[
\begin{aligned}
G_X
&=
\sum_{s\in\{+,-\}}
W_s^X\circ\bigl(\langle s|_{\mathcal M_X}\otimes I_d\bigr),\\
G_X^\dagger
&=
\sum_{s\in\{+,-\}}
\bigl(|s\rangle_{\mathcal M_X}\otimes I_d\bigr)\circ(W_s^X)^\dagger,
\qquad
G_X^\dagger W_t^X=|t\rangle_{\mathcal M_X}\otimes I_d.
\end{aligned}
\]
The last identity uses the orthogonality relations in
\eqref{eq:local_W_structural_identities}.
By the preceding identification, \(G_X\) is unitary from
\(\mathcal M_X\otimes\mathbb C^d\) onto the full contraction isotypic
component.  Its group-intertwining property follows directly on a coordinate
basis:
\[
\begin{aligned}
&\pi_X(U)G_X\bigl(|s\rangle_{\mathcal M_X}\otimes|a\rangle\bigr)\\
&\quad=
\pi_X(U)W_s^X|a\rangle
=W_s^X\overline U|a\rangle\\
&\quad=
G_X\bigl(I_{\mathcal M_X}\otimes\overline U\bigr)
\bigl(|s\rangle_{\mathcal M_X}\otimes|a\rangle\bigr).
\end{aligned}
\]
Since these vectors span the coordinate space, and the swap relation follows
from \eqref{eq:local_W_structural_identities},
\begin{equation}\label{eq:local_contraction_representation_coordinates}
\begin{aligned}
\pi_X(U)G_X
&=G_X\bigl(I_{\mathcal M_X}\otimes\overline U\bigr),\\
G_X^\dagger(\mathsf{S}_{X_1X_2}\otimes I_{X'})G_X
&=
\left(
|+\rangle\langle+|_{\mathcal M_X}
-|-\rangle\langle-|_{\mathcal M_X}
\right)\otimes I_d.
\end{aligned}
\end{equation}
Thus \(G_X\) puts the local contraction component in Schur form: the
group action is trivial on \(\mathcal M_X\) and irreducible on the
\(\mathbb C^d\) factor, while the local input swap is diagonal on the fixed
multiplicity basis.

After reordering the tensor factors, the tensor-product isometry
\[
G_A\otimes G_B:
\mathcal M_A\otimes\mathcal M_B
\otimes\mathbb C_A^d\otimes\mathbb C_B^d
\longrightarrow H_A\otimes H_B
\]
has range equal to the tensor product of the two local contraction components,
and
\[
H_A\otimes H_B
=
\operatorname{Ran}(G_A\otimes G_B)
\oplus\operatorname{Ran}(G_A\otimes G_B)^\perp.
\]
The pullback to these local contraction coordinates is the map
\[
\begin{gathered}
(G_A\otimes G_B)^\dagger(\,\cdot\,)(G_A\otimes G_B):
\mathcal B(H_A\otimes H_B)
\longrightarrow
\mathcal B\left(
\mathcal M_A\otimes\mathcal M_B
\otimes\mathbb C_A^d\otimes\mathbb C_B^d
\right).
\\[-1mm]
L\longmapsto
(G_A\otimes G_B)^\dagger L(G_A\otimes G_B).
\end{gathered}
\]
It records the matrix elements of a physical operator within the local
contraction component in the chosen Schur coordinates.
The reverse operation embeds a coordinate-space operator back into the physical
space:
\[
\begin{gathered}
(G_A\otimes G_B)(\,\cdot\,)(G_A\otimes G_B)^\dagger:
\mathcal B\left(
\mathcal M_A\otimes\mathcal M_B
\otimes\mathbb C_A^d\otimes\mathbb C_B^d
\right)
\longrightarrow
\mathcal B(H_A\otimes H_B).
\\[-1mm]
Q\longmapsto
(G_A\otimes G_B)Q(G_A\otimes G_B)^\dagger.
\end{gathered}
\]
This embedded operator is supported on
\(\operatorname{Ran}(G_A\otimes G_B)\).

Set
\[
\alpha:=\frac{d+1}{2},
\qquad
\beta:=\frac{d-1}{2},
\qquad
|v_d\rangle:=\alpha|++\rangle+\beta|--\rangle.
\]

\begin{lemma}[Local contraction block of the global optimizer]
\label{lem:global_optimizer_local_block}
In the local Schur coordinates of the contraction component, the global
optimizer satisfies
\[
(G_A\otimes G_B)^\dagger J_\star (G_A\otimes G_B)
=
|v_d\rangle\langle v_d|\otimes I_D.
\]
\end{lemma}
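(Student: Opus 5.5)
The plan is to write the global isometry \(W_+^{(D)}\) of \eqref{eq:gd_Wplus_definition} in local coordinates, using the contraction maps \(\Gamma_1,\Gamma_2\) from the proof of Lemma~\ref{lem:general_d_global_optimum}. After the fixed Alice--Bob reorder, a composite index \(i=(i_A,i_B)\) splits into local indices, and the sum \(\sum_i\) factorizes into independent sums over \(i_A\) and \(i_B\). This gives the exact tensor factorizations
\[
\Gamma_1^{(D)}=\Gamma_1^A\otimes\Gamma_1^B,
\qquad
\Gamma_2^{(D)}=\Gamma_2^A\otimes\Gamma_2^B .
\]
Here \(\Gamma_1^X|a\rangle=\sum_i|i,a,i\rangle_{H_X}\) and \(\Gamma_2^X|a\rangle=\sum_i|a,i,i\rangle_{H_X}\) are the single-leg contractions in dimension \(d\). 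I will check this identity directly on basis vectors \(|a\rangle=|a_A\rangle\otimes|a_B\rangle\); that check is the one step requiring care with the reordering.

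Next, invert the local definitions of \(W_\pm^X\) exactly as was done globally. This yields
\[
\Gamma_1^X=\sqrt{\alpha}\,W_+^X-\sqrt{\beta}\,W_-^X,
\qquad
\Gamma_2^X=\sqrt{\alpha}\,W_+^X+\sqrt{\beta}\,W_-^X,
\]
with \(\alpha=(d+1)/2\) and \(\beta=(d-1)/2\). Expanding the sum of the two tensor products, the mixed terms \(W_+^A\otimes W_-^B\) and \(W_-^A\otimes W_+^B\) cancel. What remains is
\[
\Gamma_1^{(D)}+\Gamma_2^{(D)}
=2\bigl(\alpha\,W_+^A\otimes W_+^B+\beta\,W_-^A\otimes W_-^B\bigr)
=2\,(G_A\otimes G_B)\bigl(|v_d\rangle\otimes I_d\otimes I_d\bigr).
\]
The last equality uses the definition of \(G_X\) on the basis \(|\pm\rangle_{\mathcal M_X}\). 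Combining this with \(W_+^{(D)}=(\Gamma_1+\Gamma_2)/\sqrt{2(D+1)}\) gives
\[
W_+^{(D)}=\frac{2}{\sqrt{2(D+1)}}\,(G_A\otimes G_B)\bigl(|v_d\rangle\otimes I_D\bigr),
\]
where \(I_D=I_d\otimes I_d\) acts on \(\mathbb C^d_A\otimes\mathbb C^d_B\).

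As a consistency check, this is compatible with \((W_+^{(D)})^\dagger W_+^{(D)}=I_D\). The reason is \(\langle v_d|v_d\rangle=\alpha^2+\beta^2=(d^2+1)/2=(D+1)/2\), together with the fact that \(G_A\otimes G_B\) is an isometry.

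Finally, insert this expression into \(J_\star=\frac{D+1}{2}W_+^{(D)}(W_+^{(D)})^\dagger\) from Lemma~\ref{lem:general_d_global_optimum}. The prefactor becomes \(\frac{D+1}{2}\cdot\frac{4}{2(D+1)}=1\), so
\[
J_\star=(G_A\otimes G_B)\bigl(|v_d\rangle\langle v_d|\otimes I_D\bigr)(G_A\otimes G_B)^\dagger .
\]
Conjugating by \((G_A\otimes G_B)^\dagger\) and using \((G_A\otimes G_B)^\dagger(G_A\otimes G_B)=I\) gives the claim. The identity \(G_X^\dagger G_X=I\) follows from \eqref{eq:local_W_structural_identities}. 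It holds also for \(d=2\), since only the \(W_\pm^X\) ranges enter and \(\beta=1/2\neq 0\).

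The main obstacle is bookkeeping rather than substance. One must confirm that the Alice--Bob reorder sends \(\Gamma^{(D)}_k\) exactly to \(\Gamma^A_k\otimes\Gamma^B_k\), with no transposes or phases, because these vectors involve only real coefficients. One must also confirm that the identification of the composite output index matches the identification \(\mathbb C^D\simeq\mathbb C^d_A\otimes\mathbb C^d_B\) used in \(I_D\).
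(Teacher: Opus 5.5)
Your proposal is correct and follows essentially the same route as the paper. The paper also expands $W_+^{(D)}|a,b\rangle$ in the reordered local coordinates, notes that the mixed-sign terms cancel to leave $\frac{d+1}{\sqrt{2(D+1)}}W_+^A|a\rangle\otimes W_+^B|b\rangle+\frac{d-1}{\sqrt{2(D+1)}}W_-^A|a\rangle\otimes W_-^B|b\rangle$, and substitutes into $J_\star=\frac{D+1}{2}P_{W_+^{(D)}}$; the only cosmetic difference is that the paper pulls back directly to $(G_A\otimes G_B)^\dagger W_+^{(D)}=\sqrt{2/(D+1)}\,|v_d\rangle\otimes I_D$, whereas you write $W_+^{(D)}$ in the range of $G_A\otimes G_B$ (via $\Gamma_k^{(D)}=\Gamma_k^A\otimes\Gamma_k^B$) and then use $(G_A\otimes G_B)^\dagger(G_A\otimes G_B)=I$.
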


\begin{proof}
Write every \(D\)-dimensional basis label as an Alice--Bob pair.  Under the
fixed reordering, the definition of \(W_+^{(D)}\) gives
\[
\begin{aligned}
W_+^{(D)}|a,b\rangle
&=
\frac{1}{\sqrt{2(D+1)}}\sum_{i,j}
\Bigl(
|a,i,i\rangle_{H_A}\otimes|b,j,j\rangle_{H_B}
+|i,a,i\rangle_{H_A}\otimes|j,b,j\rangle_{H_B}
\Bigr)\\
&=
\frac{d+1}{\sqrt{2(D+1)}}
W_+^A|a\rangle\otimes W_+^B|b\rangle
+
\frac{d-1}{\sqrt{2(D+1)}}
W_-^A|a\rangle\otimes W_-^B|b\rangle .
\end{aligned}
\]
The second equality follows by expanding the definitions of
\(W_\pm^A,W_\pm^B\); the two mixed-sign terms cancel.
Applying \((G_A\otimes G_B)^\dagger\) and using the definition of
\(|v_d\rangle\) gives the isometry-level branching identity
\[
(G_A\otimes G_B)^\dagger W_+^{(D)}
=
\sqrt{\frac{2}{D+1}}\,|v_d\rangle\otimes I_D.
\]
Since \(J_\star=(D+1)P_{W_+^{(D)}}/2\) and
\(P_{W_+^{(D)}}=W_+^{(D)}W_+^{(D)\dagger}\), it follows that
\[
\begin{aligned}
(G_A\otimes G_B)^\dagger J_\star (G_A\otimes G_B)
&=
\frac{D+1}{2}
\bigl((G_A\otimes G_B)^\dagger W_+^{(D)}\bigr)
\bigl((G_A\otimes G_B)^\dagger W_+^{(D)}\bigr)^\dagger\\
&=
|v_d\rangle\langle v_d|\otimes I_D,
\end{aligned}
\]
as claimed.
\end{proof}
\subsection{Compression of the assisted branches}

With the resource register included, order the contraction coordinates as
\[
\mathcal M_{AB}\otimes R
\otimes\mathbb C_A^d\otimes\mathbb C_B^d
\]
and define the resource-extended contraction isometry
\[
\mathcal G:
\mathcal M_{AB}\otimes R
\otimes\mathbb C_A^d\otimes\mathbb C_B^d
\longrightarrow H_A\otimes H_B\otimes R
\]
on the coordinate basis by
\[
\begin{aligned}
&\mathcal G\bigl(
|s_A\rangle_{\mathcal M_A}\otimes|s_B\rangle_{\mathcal M_B}
\otimes|\zeta\rangle_R\otimes|a\rangle_A\otimes|b\rangle_B
\bigr)\\
&\quad :=
G_A\bigl(|s_A\rangle_{\mathcal M_A}\otimes|a\rangle_A\bigr)
\otimes
G_B\bigl(|s_B\rangle_{\mathcal M_B}\otimes|b\rangle_B\bigr)
\otimes|\zeta\rangle_R,
\end{aligned}
\]
where \(s_A,s_B\in\{+,-\}\), and \(|\zeta\rangle,|a\rangle,|b\rangle\) range over
fixed bases of \(R,\mathbb C_A^d,\mathbb C_B^d\), respectively; extend linearly.
Thus \(\mathcal G\) is the resource extension of \(G_A\otimes G_B\), with tensor
factors in the displayed order.  It is a real isometry with
\[
\operatorname{Ran}\mathcal G
=
\Bigl[
(\operatorname{Ran}W_+^A\oplus\operatorname{Ran}W_-^A)
\otimes
(\operatorname{Ran}W_+^B\oplus\operatorname{Ran}W_-^B)
\otimes R
\Bigr].
\]
Its adjoint is determined on this range by
\[
\begin{aligned}
&\mathcal G^\dagger\left(
W_{s_A}^A|a\rangle_A\otimes W_{s_B}^B|b\rangle_B
\otimes|\zeta\rangle_R
\right)\\
&\quad=
|s_A\rangle_{\mathcal M_A}\otimes|s_B\rangle_{\mathcal M_B}
\otimes|\zeta\rangle_R\otimes|a\rangle_A\otimes|b\rangle_B,
\end{aligned}
\]
and it vanishes on \((\operatorname{Ran}\mathcal G)^\perp\).
The corresponding resource-extended pullback is
\[
\begin{gathered}
\mathcal G^\dagger(\,\cdot\,)\mathcal G:
\mathcal B(H_A\otimes H_B\otimes R)
\longrightarrow
\mathcal B\left(
\mathcal M_{AB}\otimes R
\otimes\mathbb C_A^d\otimes\mathbb C_B^d
\right).
\\[-1mm]
K\longmapsto
\mathcal G^\dagger K\mathcal G.
\end{gathered}
\]
The reverse conjugation is the physical embedding of a coordinate-space
operator:
\[
\begin{gathered}
\mathcal G(\,\cdot\,)\mathcal G^\dagger:
\mathcal B\left(
\mathcal M_{AB}\otimes R
\otimes\mathbb C_A^d\otimes\mathbb C_B^d
\right)
\longrightarrow
\mathcal B(H_A\otimes H_B\otimes R).
\\[-1mm]
Q\longmapsto
\mathcal G Q\mathcal G^\dagger.
\end{gathered}
\]
Since \(\mathcal G\) is an isometry, \(\mathcal G\mathcal G^\dagger\) is the
orthogonal projector onto \(\operatorname{Ran}\mathcal G\).
Composing the pullback and embedding gives the physical contraction-block map
\[
\begin{gathered}
(\mathcal G\mathcal G^\dagger)(\,\cdot\,)
(\mathcal G\mathcal G^\dagger):
\mathcal B(H_A\otimes H_B\otimes R)
\longrightarrow
\mathcal B(\operatorname{Ran}\mathcal G),
\\[-1mm]
K\longmapsto
(\mathcal G\mathcal G^\dagger)K
(\mathcal G\mathcal G^\dagger)
=
\mathcal G(\mathcal G^\dagger K\mathcal G)\mathcal G^\dagger.
\end{gathered}
\]
On the right, an operator on \(\operatorname{Ran}\mathcal G\) is identified
with its extension by zero on \((\operatorname{Ran}\mathcal G)^\perp\).
For \(s_A,s_B,t_A,t_B\in\{+,-\}\) and \(Y_R\in\mathcal B(R)\), the preceding
adjoint formula and \eqref{eq:local_W_structural_identities} give
\begin{equation}\label{eq:resource_contraction_element_correspondence}
\begin{aligned}
&\mathcal G^\dagger\left[
W_{s_A}^A(W_{t_A}^A)^\dagger
\otimes W_{s_B}^B(W_{t_B}^B)^\dagger
\otimes Y_R
\right]\mathcal G \\
&\qquad=
\left(
|s_A\rangle\langle t_A|_{\mathcal M_A}
\otimes|s_B\rangle\langle t_B|_{\mathcal M_B}
\otimes Y_R
\right)\otimes I_D,\\[1mm]
&\mathcal G\left[
\left(
|s_A\rangle\langle t_A|_{\mathcal M_A}
\otimes|s_B\rangle\langle t_B|_{\mathcal M_B}
\otimes Y_R
\right)\otimes I_D
\right]\mathcal G^\dagger \\
&\qquad=
W_{s_A}^A(W_{t_A}^A)^\dagger
\otimes W_{s_B}^B(W_{t_B}^B)^\dagger
\otimes Y_R.
\end{aligned}
\end{equation}
The second identity follows by conjugating the first back with \(\mathcal G\):
the physical operator on its right is already supported on
\(\operatorname{Ran}\mathcal G\), so the inserted projectors
\(\mathcal G\mathcal G^\dagger\) act trivially.
Applying the first identity in
\eqref{eq:local_contraction_representation_coordinates} to \(X=A\) and \(X=B\),
and including the trivial action on \(R\), gives the joint intertwining identity
\begin{equation}\label{eq:joint_contraction_intertwining}
\bigl(\pi_A(U)\otimes\pi_B(V)\otimes I_R\bigr)\mathcal G
=
\mathcal G
\bigl(I_{\mathcal M_{AB}R}\otimes\overline U\otimes\overline V\bigr).
\end{equation}
Together with invariance of \(K\) under the independent local twirls, this
implies
\[
\left[
\mathcal G^\dagger K\mathcal G,\,
I_{\mathcal M_{AB}R}\otimes
\overline U\otimes\overline V
\right]=0
\qquad(U,V\in U(d)).
\]
The identity on \(\mathcal M_{AB}\otimes R\) means that the group acts
trivially there, so the commutation relation leaves arbitrary operator freedom
on that factor.  Since
\(\overline U\otimes\overline V\) is irreducible for \(U(d)\times U(d)\), the
commutant on the displayed block is
\[
\mathcal B(\mathcal M_{AB}\otimes R)\otimes I_D.
\]
Thus there is a unique operator
\[
K_{\mathcal M_{AB}R}
\in\mathcal B(\mathcal M_{AB}\otimes R)
\]
such that
\[
\mathcal G^\dagger K\mathcal G
=
K_{\mathcal M_{AB}R}\otimes I_D.
\]
Here \(I_D=I_{\mathbb C_A^d}\otimes I_{\mathbb C_B^d}\).
Introduce the sector-label set
\[
\mathcal Z:=\{(+,+),(+,-),(-,+),(-,-)\}.
\]
For \(z=(s_A,s_B)\in\mathcal Z\), write
\[
|z\rangle:=|s_A\rangle_{\mathcal M_A}\otimes
|s_B\rangle_{\mathcal M_B}.
\]
The first and second signs always refer to Alice and Bob, respectively.

\subsection{Output-trace decomposition}

For either side \(X\in\{A,B\}\), expand the direct definitions of
\(W_\pm^X\).  Taking the trace over the third leg uses
\(\operatorname{Tr}(\ket i\!\bra j)=\delta_{ij}\): the two diagonal products
give \(I_{X_1X_2}\), while the two cross products give
\(\mathsf S_{X_1X_2}\).  Hence
\[
\begin{aligned}
\operatorname{Tr}_{X^{\prime}}[W_+^X(W_+^X)^\dagger]
&=
\frac{I_{X_1X_2}+I_{X_1X_2}+\mathsf{S}_{X_1X_2}+\mathsf{S}_{X_1X_2}}{2(d+1)}
=
\frac{2}{d+1}P_{\rm sym}^X,\\
\operatorname{Tr}_{X^{\prime}}[W_-^X(W_-^X)^\dagger]
&=
\frac{I_{X_1X_2}+I_{X_1X_2}-\mathsf{S}_{X_1X_2}-\mathsf{S}_{X_1X_2}}{2(d-1)}
=
\frac{2}{d-1}P_{\rm asym}^X,\\
\operatorname{Tr}_{X^{\prime}}[W_+^X(W_-^X)^\dagger]
&=
\frac{\mathsf{S}_{X_1X_2}-I_{X_1X_2}+I_{X_1X_2}-\mathsf{S}_{X_1X_2}}
{2\sqrt{(d+1)(d-1)}}
=0.
\end{aligned}
\]
The remaining off-diagonal identity is the adjoint of the last line.  The signs
\(+\) and \(-\) correspond to the symmetric and antisymmetric input sectors,
respectively.  Write the elements of \(\mathcal Z\) as \(++,+-,-+,--\) in
subscripts.

For \(K\in\{S,F\}\) and \(z,w\in\mathcal Z\), define the resource-operator
blocks of the parent contraction operator \(K_{\mathcal M_{AB}R}\) by
\[
[K_{\mathcal M_{AB}R}]_{z,w}
:=
(\langle z|\otimes I_R)
K_{\mathcal M_{AB}R}
(|w\rangle\otimes I_R).
\]
For \(z=(s_A,s_B)\) and \(w=(t_A,t_B)\), the corresponding multiplicity
matrix unit is explicitly
\[
|z\rangle\langle w|
=
|s_A\rangle\langle t_A|_{\mathcal M_A}
\otimes|s_B\rangle\langle t_B|_{\mathcal M_B}.
\]
Because \(|z\rangle\) is an orthonormal basis of \(\mathcal M_{AB}\), these
matrix elements reconstruct the
full resource-valued contraction block:
\[
K_{\mathcal M_{AB}R}
=
\sum_{\substack{z=(s_A,s_B)\in\mathcal Z\\w=(t_A,t_B)\in\mathcal Z}}
\left(
|s_A\rangle\langle t_A|_{\mathcal M_A}
\otimes|s_B\rangle\langle t_B|_{\mathcal M_B}
\right)\otimes[K_{\mathcal M_{AB}R}]_{z,w},
\qquad [K_{\mathcal M_{AB}R}]_{z,w}\in\mathcal B(R).
\]
Applying \eqref{eq:resource_contraction_element_correspondence} to each
summand in the expansion of \(K_{\mathcal M_{AB}R}\), with
\(Y_R=[K_{\mathcal M_{AB}R}]_{z,w}\), gives
\[
\begin{aligned}
&\mathcal G\left[
\left(
|s_A\rangle\langle t_A|_{\mathcal M_A}
\otimes|s_B\rangle\langle t_B|_{\mathcal M_B}
\otimes[K_{\mathcal M_{AB}R}]_{z,w}
\right)\otimes I_D
\right]\mathcal G^\dagger \\
&\quad =
W_{s_A}^A(W_{t_A}^A)^\dagger
\otimes
W_{s_B}^B(W_{t_B}^B)^\dagger
\otimes[K_{\mathcal M_{AB}R}]_{z,w},
\end{aligned}
\]
This equality holds on the contraction summand of
\(H_A\otimes H_B\otimes R\).
For example, taking \(z=w=(+,-)\) in this formula gives
\[
\begin{aligned}
&\mathcal G\left[
\left(
|+\rangle\langle+|_{\mathcal M_A}
\otimes|-\rangle\langle-|_{\mathcal M_B}
\otimes[K_{\mathcal M_{AB}R}]_{+-,+-}
\right)\otimes I_D
\right]\mathcal G^\dagger\\
&\quad=
W_+^A(W_+^A)^\dagger
\otimes W_-^B(W_-^B)^\dagger
\otimes[K_{\mathcal M_{AB}R}]_{+-,+-}.
\end{aligned}
\]
Taking the output trace of the generic term gives
\[
\begin{aligned}
&\operatorname{Tr}_{A'B'}\!\left[
W_{s_A}^A(W_{t_A}^A)^\dagger
\otimes
W_{s_B}^B(W_{t_B}^B)^\dagger
\otimes[K_{\mathcal M_{AB}R}]_{z,w}
\right] \\
&\quad =
\operatorname{Tr}_{A'}[W_{s_A}^A(W_{t_A}^A)^\dagger]
\otimes
\operatorname{Tr}_{B'}[W_{s_B}^B(W_{t_B}^B)^\dagger]
\otimes[K_{\mathcal M_{AB}R}]_{z,w}.
\end{aligned}
\]
Using \(\alpha=(d+1)/2\) and \(\beta=(d-1)/2\), the preceding identities are
\(\operatorname{Tr}_{X'}[W_+^X(W_+^X)^\dagger]
=\alpha^{-1}P_{\rm sym}^X\),
\(\operatorname{Tr}_{X'}[W_-^X(W_-^X)^\dagger]
=\beta^{-1}P_{\rm asym}^X\), and the two cross terms are zero.
Therefore the full contraction summand contributes, in the input--resource
order,
\begin{equation}\label{eq:local_contraction_output_trace_sector_expansion}
\begin{aligned}
&\operatorname{Tr}_{A'B'}\!\left[\mathcal G\left(K_{\mathcal M_{AB}R}\otimes I_D\right)\mathcal G^\dagger\right]\\
&\quad=
\alpha^{-2}P_{\rm sym}^A\otimes P_{\rm sym}^B
\otimes[K_{\mathcal M_{AB}R}]_{++,++}\\
&\qquad
+(\alpha\beta)^{-1}P_{\rm sym}^A\otimes P_{\rm asym}^B
\otimes[K_{\mathcal M_{AB}R}]_{+-,+-}\\
&\qquad
+(\alpha\beta)^{-1}P_{\rm asym}^A\otimes P_{\rm sym}^B
\otimes[K_{\mathcal M_{AB}R}]_{-+,-+}\\
&\qquad
+\beta^{-2}P_{\rm asym}^A\otimes P_{\rm asym}^B
\otimes[K_{\mathcal M_{AB}R}]_{--,--}.
\end{aligned}
\end{equation}
Every off-diagonal block \([K_{\mathcal M_{AB}R}]_{z,w}\) with \(z\ne w\)
disappears because at least one side contains
\(\operatorname{Tr}_{X^{\prime}}[W_+^X(W_-^X)^\dagger]=0\) or its adjoint.
Partial trace is equivariant under unitary conjugation on the discarded
outputs.  Hence local twirling gives the shorter commutant statement
\begin{equation}\label{eq:output_trace_input_commutant}
\left[
\operatorname{Tr}_{A'B'}K,\,
\bigl(\overline U^{\otimes2}\bigr)_A
\otimes\bigl(\overline V^{\otimes2}\bigr)_B\otimes I_R
\right]
=0.
\end{equation}
Here the subscripts \(A\) and \(B\) refer to the input pairs \(A_1A_2\) and
\(B_1B_2\), respectively.
After tracing out \(A'B'\), the remaining non-resource input space is
\[
(\mathbb C^d_{A_1}\otimes\mathbb C^d_{A_2})
\otimes
(\mathbb C^d_{B_1}\otimes\mathbb C^d_{B_2}).
\]
On each side, the ordinary two-copy Schur decomposition is the orthogonal
direct sum
\[
\mathbb C^d_{X_1}\otimes\mathbb C^d_{X_2}
=
\operatorname{Ran}P_{\rm sym}^X
\oplus
\operatorname{Ran}P_{\rm asym}^X,
\qquad X\in\{A,B\}.
\]
Taking the tensor product of the two decompositions gives
\[
\begin{aligned}
&
(\mathbb C^d_{A_1}\otimes\mathbb C^d_{A_2})
\otimes
(\mathbb C^d_{B_1}\otimes\mathbb C^d_{B_2})\\
&\quad=
(\operatorname{Ran}P_{\rm sym}^A\otimes
 \operatorname{Ran}P_{\rm sym}^B)
\oplus
(\operatorname{Ran}P_{\rm sym}^A\otimes
 \operatorname{Ran}P_{\rm asym}^B)\\
&\qquad\quad\oplus
(\operatorname{Ran}P_{\rm asym}^A\otimes
 \operatorname{Ran}P_{\rm sym}^B)
\oplus
(\operatorname{Ran}P_{\rm asym}^A\otimes
 \operatorname{Ran}P_{\rm asym}^B).
\end{aligned}
\]
In the displayed order, these are the \(++,+-,-+,--\) input sectors.  They are
pairwise inequivalent irreducible representations of \(U(d)\times U(d)\).
Let \(P,Q\) range over the four product projectors displayed above, and let
\(\operatorname{Hom}_G\) denote the space of intertwiners for a group \(G\).
The identities
\[
\mathsf S_{A_1A_2}\overline U^{\otimes2}
=\overline U^{\otimes2}\mathsf S_{A_1A_2},
\qquad
\mathsf S_{B_1B_2}\overline V^{\otimes2}
=\overline V^{\otimes2}\mathsf S_{B_1B_2}
\]
show that both \((I+\mathsf S_{X_1X_2})/2\) and
\((I-\mathsf S_{X_1X_2})/2\) commute with the corresponding two-copy action.
Hence \(P,Q\) commute with the input action.  Together with
\eqref{eq:output_trace_input_commutant}, this gives
\[
\left[
(\overline U^{\otimes2})_A\otimes(\overline V^{\otimes2})_B\otimes I_R,
(P\otimes I_R)\operatorname{Tr}_{A'B'}K(Q\otimes I_R)
\right]=0.
\]
Restricted to \(\operatorname{Ran}Q\otimes R\), the second operator maps into
\(\operatorname{Ran}P\otimes R\), and the displayed equality is precisely the
intertwining relation between the two sector representations.  Since the group
acts trivially on \(R\), Schur's lemma gives
\begin{equation}\label{eq:output_trace_sector_schur_relations}
\operatorname{Hom}_{U(d)\times U(d)}
(\operatorname{Ran}Q\otimes R,\operatorname{Ran}P\otimes R)
=
\begin{cases}
\{0\},&P\ne Q,\\
I_{\operatorname{Ran}P}\otimes\mathcal B(R),&P=Q.
\end{cases}
\end{equation}
For \(P\ne Q\), the restricted block is zero, and hence
\((P\otimes I_R)\operatorname{Tr}_{A'B'}K(Q\otimes I_R)=0\).  For \(P=Q\),
the second line makes the diagonal block \(P\) tensored with a unique operator
on \(R\).  Therefore there are unique operators
\(\left[\operatorname{Tr}_{A'B'}K\right]_\tau\in\mathcal B(R)\),
\(\tau\in\mathcal Z\), such that
\begin{equation}\label{eq:output_trace_sector_coefficient_definition}
\begin{aligned}
\operatorname{Tr}_{A'B'}K
={}&
P_{\rm sym}^A\otimes P_{\rm sym}^B
\otimes\left[\operatorname{Tr}_{A'B'}K\right]_{++}\\
&+P_{\rm sym}^A\otimes P_{\rm asym}^B
\otimes\left[\operatorname{Tr}_{A'B'}K\right]_{+-}\\
&+P_{\rm asym}^A\otimes P_{\rm sym}^B
\otimes\left[\operatorname{Tr}_{A'B'}K\right]_{-+}\\
&+P_{\rm asym}^A\otimes P_{\rm asym}^B
\otimes\left[\operatorname{Tr}_{A'B'}K\right]_{--}.
\end{aligned}
\end{equation}
Thus the two bracket conventions keep their parent operators visible:
\([K_{\mathcal M_{AB}R}]_{z,w}\) extracts a matrix block from the contraction
operator, whereas \(\left[\operatorname{Tr}_{A'B'}K\right]_\tau\) extracts a
sector coefficient after the output trace.

For every \(\tau\in\mathcal Z\), the standing positivity and trace identity give
\[
\begin{aligned}
0&\preceq
\left[\operatorname{Tr}_{A'B'}S\right]_\tau
\preceq I_R,\\
0&\preceq
\left[\operatorname{Tr}_{A'B'}F\right]_\tau
\preceq I_R,\\
\left[\operatorname{Tr}_{A'B'}S\right]_\tau
&+
\left[\operatorname{Tr}_{A'B'}F\right]_\tau
 =I_R.
\end{aligned}
\]

\subsection{Residual-sector trace bounds}

\begin{lemma}[Contraction-sector separation and output-trace bounds]
\label{lem:contraction_sector_trace_bounds}
Let \(K\in\{S,F\}\) be locally twirled.  In the four-summand tensor-product
decomposition obtained from \eqref{eq:local_mixed_schur_direct_sum}, the
contraction summand is reducing for \(K\): both off-diagonal blocks between it
and its orthogonal complement vanish.  Consequently, the residual block is
positive, and the output-trace sector coefficients satisfy
\begin{equation}\label{eq:branch_sector_trace_lower_bounds}
\begin{aligned}
\left[\operatorname{Tr}_{A'B'}K\right]_{++}
&\succeq\alpha^{-2}[K_{\mathcal M_{AB}R}]_{++,++},&
\left[\operatorname{Tr}_{A'B'}K\right]_{+-}
&\succeq(\alpha\beta)^{-1}[K_{\mathcal M_{AB}R}]_{+-,+-},\\
\left[\operatorname{Tr}_{A'B'}K\right]_{-+}
&\succeq(\alpha\beta)^{-1}[K_{\mathcal M_{AB}R}]_{-+,-+},&
\left[\operatorname{Tr}_{A'B'}K\right]_{--}
&\succeq\beta^{-2}[K_{\mathcal M_{AB}R}]_{--,--}.
\end{aligned}
\end{equation}
\end{lemma}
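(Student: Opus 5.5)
We first show that the contraction summand $\operatorname{Ran}\mathcal G$ is reducing for a locally twirled $K$, using Schur's lemma for $U(d)\times U(d)$. By Lemma~\ref{lem:local_twirl_reduction}, $K$ commutes with $\pi_A(U)\otimes\pi_B(V)\otimes I_R$. The projector $\mathcal G\mathcal G^\dagger$ onto the contraction summand also commutes with this action, by \eqref{eq:joint_contraction_intertwining}. Hence the off-diagonal block
\[
(I-\mathcal G\mathcal G^\dagger)K\mathcal G\mathcal G^\dagger
\]
is an intertwiner from $\operatorname{Ran}\mathcal G$ to its complement. On $\operatorname{Ran}\mathcal G$ the group acts, up to multiplicity on $\mathcal M_{AB}\otimes R$, as the irreducible $\overline U\otimes\overline V$, i.e.\ the tensor product of the $\lambda_0$-type irreps on each side.

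The orthogonal complement decomposes by \eqref{eq:local_mixed_schur_direct_sum} into products $Y_A\otimes Y_B$ in which at least one factor lies in $\mathcal L_\pm^X$. Each $\mathcal L_\pm^X$ carries the $\lambda_\pm$-type irrep, which is inequivalent to the $\lambda_0$-type: their highest weights $(1,0,\dots,-2)$ and $(1,0,\dots,-1,-1)$ differ from $(0,\dots,0,-1)$. Irreps of $U(d)\times U(d)$ are outer tensor products, and two such products are equivalent only if both factors are. So no irreducible constituent of the complement is equivalent to $\overline U\otimes\overline V$, and the intertwiner vanishes. The other off-diagonal block is its adjoint. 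This inequivalence is the main point to state carefully; it also covers $d=2$, where $\mathcal L_-^X=\{0\}$ and nothing new arises.

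Write $P_{\mathcal G}:=\mathcal G\mathcal G^\dagger$. Since the block is reducing,
\[
K=P_{\mathcal G}KP_{\mathcal G}+(I-P_{\mathcal G})K(I-P_{\mathcal G}),
\]
and the residual block is a compression of $K\succeq0$, hence positive. Taking $\operatorname{Tr}_{A'B'}$, the contraction part contributes exactly \eqref{eq:local_contraction_output_trace_sector_expansion}, because $\mathcal G^\dagger K\mathcal G=K_{\mathcal M_{AB}R}\otimes I_D$.

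The residual part $K_{\rm res}:=(I-P_{\mathcal G})K(I-P_{\mathcal G})$ is positive, so its output trace is positive. $K_{\rm res}$ is also locally twirl-invariant, being a product of invariant operators. Its output trace therefore has the sector form \eqref{eq:output_trace_sector_coefficient_definition} with coefficients $[\operatorname{Tr}_{A'B'}K_{\rm res}]_\tau$. Compressing by $P\otimes I_R$ for each product projector $P$ shows that each such coefficient is positive semidefinite.

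By uniqueness of the sector coefficients, which is linear in the operator, we have for each $\tau\in\mathcal Z$
\[
[\operatorname{Tr}_{A'B'}K]_\tau=c_\tau[K_{\mathcal M_{AB}R}]_{\tau,\tau}+[\operatorname{Tr}_{A'B'}K_{\rm res}]_\tau .
\]
Here $c_\tau\in\{\alpha^{-2},(\alpha\beta)^{-1},\beta^{-2}\}$ is read off from \eqref{eq:local_contraction_output_trace_sector_expansion}. Dropping the positive residual coefficient gives the four inequalities \eqref{eq:branch_sector_trace_lower_bounds}. For $d=2$ the $--$ residual coefficient may simply be zero, and the bound still holds.
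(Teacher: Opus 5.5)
Your proposal is correct and follows essentially the paper's route. Local-twirl invariance makes the off-diagonal blocks between $\operatorname{Ran}\mathcal G$ and its complement $U(d)\times U(d)$-intertwiners, and these vanish because every residual summand has a $\lambda_\pm$-type factor on at least one side; you phrase this through the outer-tensor-product classification of irreducibles, while the paper applies the local Hom-vanishing summand by summand, which rests on the same fact. The residual compression is then positive and twirl-invariant, so its sector coefficients are positive and can be dropped from the additive sector decomposition of $\operatorname{Tr}_{A'B'}K$, exactly as in the paper.
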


\begin{proof}
We now start from the local decomposition already established above:
\[
H_X
=
\operatorname{Ran}W_+^X
\oplus
\operatorname{Ran}W_-^X
\oplus
\mathcal L_+^X
\oplus
\mathcal L_-^X,
\qquad X\in\{A,B\}.
\]
We apply this decomposition once for Alice and once for Bob.  Taking their
tensor product and distributing the direct sums gives the following
decomposition of the full physical space:
\[
\begin{aligned}
H_A\otimes H_B\otimes R
={}&
\Bigl[
(\operatorname{Ran}W_+^A\oplus\operatorname{Ran}W_-^A)
\otimes
(\operatorname{Ran}W_+^B\oplus\operatorname{Ran}W_-^B)
\otimes R
\Bigr]\\
&\oplus
\Bigl[
(\operatorname{Ran}W_+^A\oplus\operatorname{Ran}W_-^A)
\otimes
(\mathcal L_+^B\oplus\mathcal L_-^B)
\otimes R
\Bigr]\\
&\oplus
\Bigl[
(\mathcal L_+^A\oplus\mathcal L_-^A)
\otimes
(\operatorname{Ran}W_+^B\oplus\operatorname{Ran}W_-^B)
\otimes R
\Bigr]\\
&\oplus
\Bigl[
(\mathcal L_+^A\oplus\mathcal L_-^A)
\otimes
(\mathcal L_+^B\oplus\mathcal L_-^B)
\otimes R
\Bigr].
\end{aligned}
\]
The first summand is the sector in which both local factors belong to the
contraction isotypic components.  By the definition of \(\mathcal G\), it is
exactly the range of \(\mathcal G\).  The remaining three summands are exactly
the vectors with at least one residual local factor, and therefore form the
orthogonal complement of that range:
\begin{equation}\label{eq:contraction_range_orthogonal_complement}
\begin{aligned}
(\operatorname{Ran}\mathcal G)^\perp
={}&
\Bigl[
(\operatorname{Ran}W_+^A\oplus\operatorname{Ran}W_-^A)
\otimes
(\mathcal L_+^B\oplus\mathcal L_-^B)
\otimes R
\Bigr]\\
&\oplus
\Bigl[
(\mathcal L_+^A\oplus\mathcal L_-^A)
\otimes
(\operatorname{Ran}W_+^B\oplus\operatorname{Ran}W_-^B)
\otimes R
\Bigr]\\
&\oplus
\Bigl[
(\mathcal L_+^A\oplus\mathcal L_-^A)
\otimes
(\mathcal L_+^B\oplus\mathcal L_-^B)
\otimes R
\Bigr].
\end{aligned}
\end{equation}
In this four-summand decomposition, the first summand is the contraction
sector and the remaining three summands contain at least one residual local
factor.  The off-diagonal blocks of \(K\), connecting the first summand with
the sum of the remaining three, are
\[
\begin{aligned}
&(\mathcal G\mathcal G^\dagger)K
(I-\mathcal G\mathcal G^\dagger),\\
&(I-\mathcal G\mathcal G^\dagger)K
(\mathcal G\mathcal G^\dagger).
\end{aligned}
\]
With the same intertwiner notation, the local mixed-Schur decomposition
\eqref{eq:local_mixed_schur_direct_sum} gives
\begin{equation}\label{eq:local_residual_contraction_hom_zero}
\operatorname{Hom}_{U(d)}
\left(
\mathcal L_+^X\oplus\mathcal L_-^X,
\operatorname{Ran}W_+^X\oplus\operatorname{Ran}W_-^X
\right)
=\{0\},
\qquad X\in\{A,B\},
\end{equation}
because the residual summands have representation types inequivalent to the
contraction type carried by
\(\operatorname{Ran}W_+^X\oplus\operatorname{Ran}W_-^X\).  When \(d=2\), the
summand \(\mathcal L_-^X\) in this formula is simply the zero space.

Equation~\eqref{eq:joint_contraction_intertwining} gives
\[
\begin{aligned}
&\bigl(\pi_A(U)\otimes\pi_B(V)\otimes I_R\bigr)
\mathcal G\mathcal G^\dagger
\bigl(\pi_A(U)\otimes\pi_B(V)\otimes I_R\bigr)^\dagger\\
&\quad=
\mathcal G
\bigl(I_{\mathcal M_{AB}R}\otimes\overline U\otimes\overline V\bigr)
\bigl(I_{\mathcal M_{AB}R}\otimes\overline U\otimes\overline V\bigr)^\dagger
\mathcal G^\dagger\\
&\quad=\mathcal G\mathcal G^\dagger.
\end{aligned}
\]
Thus \(\mathcal G\mathcal G^\dagger\) commutes with the physical
representation.  Local twirling gives the same commutation relation for
\(K\).  Therefore the first off-diagonal block satisfies
\[
\begin{aligned}
&\left(\pi_A(U)\otimes \pi_B(V)\otimes I_R\right)
\left[(\mathcal G\mathcal G^\dagger)K
(I-\mathcal G\mathcal G^\dagger)\right]\\
&\quad=
\left[(\mathcal G\mathcal G^\dagger)K
(I-\mathcal G\mathcal G^\dagger)\right]
\left(\pi_A(U)\otimes \pi_B(V)\otimes I_R\right).
\end{aligned}
\]
Therefore
\[
(\mathcal G\mathcal G^\dagger)K
(I-\mathcal G\mathcal G^\dagger)
\in
\operatorname{Hom}_{U(d)\times U(d)}
\left((\operatorname{Ran}\mathcal G)^\perp,\operatorname{Ran}\mathcal G\right).
\]
The three summands in
\eqref{eq:contraction_range_orthogonal_complement} show why this global Hom
space is zero.  In the first, Bob's factor is
\(\mathcal L_+^B\oplus\mathcal L_-^B\), so
\eqref{eq:local_residual_contraction_hom_zero} with \(X=B\) rules out an
intertwiner into Bob's contraction factor.  The second is ruled out by the same
equation with \(X=A\), and the third by either choice.  Hence
\[
\operatorname{Hom}_{U(d)\times U(d)}
\left((\operatorname{Ran}\mathcal G)^\perp,\operatorname{Ran}\mathcal G\right)
=\{0\}.
\]
Consequently Schur's lemma gives
\[
\begin{aligned}
(\mathcal G\mathcal G^\dagger)K
(I-\mathcal G\mathcal G^\dagger)&=0,\\
(I-\mathcal G\mathcal G^\dagger)K
(\mathcal G\mathcal G^\dagger)&=0.
\end{aligned}
\]
The physical contraction-block map consequently extracts exactly the
corresponding diagonal block of \(K\):
\[
\begin{aligned}
(\mathcal G\mathcal G^\dagger)K
(\mathcal G\mathcal G^\dagger)
&=
\mathcal G(\mathcal G^\dagger K\mathcal G)\mathcal G^\dagger\\
&=
\mathcal G\left(
K_{\mathcal M_{AB}R}\otimes I_D
\right)\mathcal G^\dagger.
\end{aligned}
\]
Consequently
\begin{equation}\label{eq:branch_contraction_residual_decomposition}
\begin{aligned}
K
&=
\mathcal G\left(
K_{\mathcal M_{AB}R}\otimes I_D
\right)\mathcal G^\dagger+K_\perp,\\
K_\perp
&:=
(I-\mathcal G\mathcal G^\dagger)K
(I-\mathcal G\mathcal G^\dagger)\succeq0.
\end{aligned}
\end{equation}
The last inequality follows because \(K_\perp\) is a projection compression of
the positive operator \(K\).  It also inherits local-twirl invariance.  We
therefore define
\(\left[\operatorname{Tr}_{A'B'}K_\perp\right]_\tau\) by the sector expansion
\eqref{eq:output_trace_sector_coefficient_definition} with \(K\) replaced by
\(K_\perp\).  Applying
\eqref{eq:local_contraction_output_trace_sector_expansion} to the first term
in \eqref{eq:branch_contraction_residual_decomposition}, and taking the
positive output trace of \(K_\perp\) sector by sector, gives
\begin{equation}\label{eq:branch_sector_trace_decomposition}
\begin{aligned}
\left[\operatorname{Tr}_{A'B'}K\right]_{++}
&=\alpha^{-2}[K_{\mathcal M_{AB}R}]_{++,++}
+\left[\operatorname{Tr}_{A'B'}K_\perp\right]_{++},\\
\left[\operatorname{Tr}_{A'B'}K\right]_{+-}
&=(\alpha\beta)^{-1}[K_{\mathcal M_{AB}R}]_{+-,+-}
+\left[\operatorname{Tr}_{A'B'}K_\perp\right]_{+-},\\
\left[\operatorname{Tr}_{A'B'}K\right]_{-+}
&=(\alpha\beta)^{-1}[K_{\mathcal M_{AB}R}]_{-+,-+}
+\left[\operatorname{Tr}_{A'B'}K_\perp\right]_{-+},\\
\left[\operatorname{Tr}_{A'B'}K\right]_{--}
&=\beta^{-2}[K_{\mathcal M_{AB}R}]_{--,--}
+\left[\operatorname{Tr}_{A'B'}K_\perp\right]_{--}.
\end{aligned}
\end{equation}
Moreover,
\begin{equation}\label{eq:residual_output_trace_sector_positive}
\left[\operatorname{Tr}_{A'B'}K_\perp\right]_\tau\succeq0,
\qquad \tau\in\mathcal Z.
\end{equation}
Thus \eqref{eq:branch_sector_trace_decomposition} proves the four inequalities
displayed in the statement.

\end{proof}

\subsection{PPT constraints after compression}

Let \(T_{\mathcal M_B}\) denote transpose in the fixed real orthonormal basis
\(\{|+\rangle_{\mathcal M_B},|-\rangle_{\mathcal M_B}\}\).  On multiplicity
matrix units it acts as
\[
\left[
|s_A\rangle\langle t_A|_{\mathcal M_A}
\otimes|s_B\rangle\langle t_B|_{\mathcal M_B}
\right]^{T_{\mathcal M_B}}
=
|s_A\rangle\langle t_A|_{\mathcal M_A}
\otimes|t_B\rangle\langle s_B|_{\mathcal M_B},
\qquad s_A,s_B,t_A,t_B\in\{+,-\}.
\]
\begin{lemma}[PPT constraints in contraction coordinates]
\label{lem:mixed_schur_ppt_interface}
Let \(K\in\{S,F\}\) be locally twirled.  Then
\begin{equation}\label{eq:mixed_schur_ppt_contraction_block}
\mathcal G^\dagger
K^{T_{B_1B_2B'B_{\mathrm r}}}
\mathcal G
=
\bigl(K_{\mathcal M_{AB}R}\bigr)^{T_{\mathcal M_B}T_{B_{\mathrm r}}}
\otimes I_D,
\end{equation}
and every output-trace sector coefficient satisfies
\begin{equation}\label{eq:mixed_schur_trace_ppt_order}
\left(
\left[\operatorname{Tr}_{A'B'}K\right]_\tau
\right)^{T_{B_{\mathrm r}}}
\succeq0,
\qquad \tau\in\mathcal Z.
\end{equation}
\end{lemma}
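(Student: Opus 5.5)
The plan is to prove both displays by tracking how the Bob-side partial transpose acts on the decomposition $K=\mathcal G(K_{\mathcal M_{AB}R}\otimes I_D)\mathcal G^\dagger+K_\perp$ from \eqref{eq:branch_contraction_residual_decomposition}. The tool throughout is that every structural object on Bob's side is \emph{real} in the fixed computational basis. These objects are the isometries $W_\pm^B$ (their coefficients are $\pm1/\sqrt{2(d\pm1)}$), the projectors $W_\pm^B(W_\pm^B)^\dagger$, $P_{\mathcal L_\pm^B}$, and $P_{\rm sym}^B$, $P_{\rm asym}^B$. Hence each of these projectors equals its own transpose, and for real $W_s^B,W_t^B$,
\[
\bigl(W_s^B|b\rangle\langle b'|(W_t^B)^\dagger\bigr)^{T}
=W_t^B|b'\rangle\langle b|(W_s^B)^\dagger .
\]

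For \eqref{eq:mixed_schur_ppt_contraction_block}, I will first treat the contraction term. I expand $K_{\mathcal M_{AB}R}$ in multiplicity matrix units and a basis of $\mathcal B(R)=\mathcal B(A_{\mathrm r})\otimes\mathcal B(B_{\mathrm r})$. I also write $I_D=\sum_{a,b}|a\rangle\langle a|\otimes|b\rangle\langle b|$. By the second line of \eqref{eq:resource_contraction_element_correspondence}, a generic summand embeds as $W_{s_A}^A(W_{t_A}^A)^\dagger\otimes\sum_b W_{s_B}^B|b\rangle\langle b|(W_{t_B}^B)^\dagger\otimes Y_{A_{\mathrm r}}\otimes Z_{B_{\mathrm r}}$. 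Applying $T_{B_1B_2B'B_{\mathrm r}}$ with the identity above sends the Bob factor to $\sum_b W_{t_B}^B|b\rangle\langle b|(W_{s_B}^B)^\dagger$. So $I_d$ on Bob's coordinate is preserved, $s_B$ and $t_B$ are exchanged, and $Z$ becomes $Z^T$. Pulling back with the first line of \eqref{eq:resource_contraction_element_correspondence} gives exactly $(K_{\mathcal M_{AB}R})^{T_{\mathcal M_B}T_{B_{\mathrm r}}}\otimes I_D$.

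Next I show that $K_\perp$ contributes nothing after compression. Because the off-diagonal blocks vanish, $K_\perp$ is a sum of blocks $(P_x^A\otimes P_y^B\otimes I_R)K(P_{x'}^A\otimes P_{y'}^B\otimes I_R)$, with $x,x',y,y'$ labelling the local summands of \eqref{eq:local_mixed_schur_direct_sum}. In each block, some Alice label lies in $\mathcal L_\pm^A$ or some Bob label lies in $\mathcal L_\pm^B$. The partial transpose leaves Alice's projectors untouched. On Bob's side it turns $P_y^B(\cdot)P_{y'}^B$ into $P_{y'}^B(\cdot)^{T}P_y^B$, because these projectors are real and symmetric. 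So a residual label survives on one side of the transposed block, and sandwiching with $\mathcal G\mathcal G^\dagger$ kills it. The step needing most care is exactly this bookkeeping: the transpose swaps Bob's row and column supports rather than preserving them, but that swap still keeps a residual factor present. I would write the argument out with explicit projector sandwiches.

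For \eqref{eq:mixed_schur_trace_ppt_order}, partial trace over $B'$ commutes with transposition on $B'$, and transposition on the other Bob legs commutes with $\operatorname{Tr}_{A'B'}$. Hence
\[
\bigl(\operatorname{Tr}_{A'B'}K\bigr)^{T_{B_1B_2B_{\mathrm r}}}
=\operatorname{Tr}_{A'B'}\bigl(K^{T_{B_1B_2B'B_{\mathrm r}}}\bigr)\succeq0,
\]
which uses the PPT constraint of \eqref{eq:fixed_resource_ppt_sdp}; local twirling preserves that constraint by Lemma~\ref{lem:local_twirl_reduction}. I then insert the sector expansion \eqref{eq:output_trace_sector_coefficient_definition}. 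Since $P_{\rm sym}^B$ and $P_{\rm asym}^B$ are real symmetric, each term becomes $P^A\otimes P^B\otimes([\operatorname{Tr}_{A'B'}K]_\tau)^{T_{B_{\mathrm r}}}$. Finally, I compress onto a fixed sector by sandwiching with $P^A\otimes P^B\otimes I_R$, which preserves positivity, and pick any nonzero vector in that sector's range. This yields $([\operatorname{Tr}_{A'B'}K]_\tau)^{T_{B_{\mathrm r}}}\succeq0$ for every $\tau\in\mathcal Z$, even when $d=2$, since all four input sectors are nonzero.
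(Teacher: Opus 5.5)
Your proposal is correct. The second display is argued exactly as in the paper: commute $T_{B_1B_2B_{\mathrm r}}$ past $\operatorname{Tr}_{A'B'}$, use that the sector projectors are real symmetric, and compress onto a nonzero vector of one sector. For the first display, however, you take a different route.

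The paper never splits $K$. It evaluates $\mathcal G^\dagger K^{T_{B_1B_2B'B_{\mathrm r}}}\mathcal G$ on coordinate basis vectors and notes that $\mathcal G$ sends these to real Alice--Bob product vectors. It then applies the fixed-basis partial-transpose rule for product vectors, which gives $\mathcal G^\dagger K^{T_{B_1B_2B'B_{\mathrm r}}}\mathcal G=(\mathcal G^\dagger K\mathcal G)^{T_{\mathcal M_B}T_{\mathbb C_B^d}T_{B_{\mathrm r}}}$ for an arbitrary operator $K$. Twirl invariance and $I^T=I$ then finish the argument.

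You instead use the reducing decomposition from Lemma~\ref{lem:contraction_sector_trace_bounds}. You transpose the embedded contraction term matrix unit by matrix unit, using the same realness of $W_\pm^B$ that the paper exploits. You then add a separate argument that $K_\perp$ is invisible after transposition. That bookkeeping is sound: $T_{B_1B_2B'B_{\mathrm r}}$ turns a block with row labels $(x,y)$ and column labels $(x',y')$ into one with row $(x,y')$ and column $(x',y)$. A residual label on the original row side therefore survives on one side, and $\mathcal G\mathcal G^\dagger$ kills it.

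What your route buys is an explicit picture of how the Bob-side transpose permutes sector labels. What it costs is a dependence on block-diagonality, which the paper's route shows is unnecessary. The identity is simply the fact that conjugating by $V_A\otimes V_B$ with $V_B$ real intertwines the partial transposes. In fact, your own label-swap argument, applied to the off-diagonal blocks as well, already kills them, so you could drop the appeal to Lemma~\ref{lem:contraction_sector_trace_bounds} altogether.
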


\begin{proof}
For coordinate basis vectors write
\[
|s_A,s_B;j,k;a,b\rangle
=
|s_A\rangle_{\mathcal M_A}\otimes|s_B\rangle_{\mathcal M_B}
\otimes|j,k\rangle_R\otimes|a\rangle_A\otimes|b\rangle_B.
\]
For product vectors, fixed-basis partial transpose obeys
\[
\langle x_A,y_B|L^{T_B}|x_A',y_B'\rangle
=
\langle x_A,\overline{y_B'}|L|x_A',\overline{y_B}\rangle.
\]
Here the vectors \(W_s^X|a\rangle\) and all coordinate basis vectors are real,
so the conjugates in this rule disappear.  Expanding \(\mathcal G\) and applying
the rule on the physical and coordinate spaces gives
\[
\begin{aligned}
&\langle s_A,s_B;j,k;a,b|
\mathcal G^\dagger K^{T_{B_1B_2B'B_{\mathrm r}}}\mathcal G
|t_A,t_B;\ell,m;c,e\rangle\\
&\quad=
\left(
\langle a|(W_{s_A}^A)^\dagger
\otimes\langle b|(W_{s_B}^B)^\dagger
\otimes\langle j,k|
\right)
K^{T_{B_1B_2B'B_{\mathrm r}}}
\left(
W_{t_A}^A|c\rangle\otimes W_{t_B}^B|e\rangle
\otimes|\ell,m\rangle
\right)\\
&\quad=
\left(
\langle a|(W_{s_A}^A)^\dagger
\otimes\langle e|(W_{t_B}^B)^\dagger
\otimes\langle j,m|
\right)
K
\left(
W_{t_A}^A|c\rangle\otimes W_{s_B}^B|b\rangle
\otimes|\ell,k\rangle
\right)\\
&\quad=
\langle s_A,t_B;j,m;a,e|
\mathcal G^\dagger K\mathcal G
|t_A,s_B;\ell,k;c,b\rangle\\
&\quad=
\langle s_A,s_B;j,k;a,b|
\bigl(\mathcal G^\dagger K\mathcal G\bigr)^{
T_{\mathcal M_B}T_{\mathbb C_B^d}T_{B_{\mathrm r}}}
|t_A,t_B;\ell,m;c,e\rangle.
\end{aligned}
\]
Since this holds for all coordinate basis vectors,
\[
\mathcal G^\dagger K^{T_{B_1B_2B'B_{\mathrm r}}}\mathcal G
=
\bigl(\mathcal G^\dagger K\mathcal G\bigr)^{
T_{\mathcal M_B}T_{\mathbb C_B^d}T_{B_{\mathrm r}}}.
\]
Local twirling gives
\[
\mathcal G^\dagger K\mathcal G
=
K_{\mathcal M_{AB}R}\otimes I_D.
\]
Since \(I_{\mathbb C_B^d}^{T}=I_{\mathbb C_B^d}\), this gives
\eqref{eq:mixed_schur_ppt_contraction_block}.

The transpose on \(B_1B_2B_{\mathrm r}\) commutes with the output trace, while
the transpose on the traced register \(B'\) leaves that trace unchanged.  Thus
\[
\begin{aligned}
\operatorname{Tr}_{A'B'}\!\left[K^{T_{B_1B_2B'B_{\mathrm r}}}\right]
&=
\left(\operatorname{Tr}_{A'B'}\!\left[K^{T_{B'}}\right]\right)^{
T_{B_1B_2B_{\mathrm r}}}\\
&=
\left(\operatorname{Tr}_{A'B'}K\right)^{T_{B_1B_2B_{\mathrm r}}}
\succeq0.
\end{aligned}
\]
The four input-sector projectors are real, mutually orthogonal, and nonzero.
For each \(\tau\in\mathcal Z\), choose any unit vector \(u_\tau\) in the
corresponding sector.  Compressing
\(\left(\operatorname{Tr}_{A'B'}K\right)^{T_{B_1B_2B_{\mathrm r}}}\succeq0\)
to
\(\operatorname{span}\{|u_\tau\rangle\}\otimes R\) gives, for every
\(|\xi\rangle\in R\),
\[
\begin{aligned}
0
&\le
\left\langle u_\tau,\xi\left|
\left(\operatorname{Tr}_{A'B'}K\right)^{T_{B_1B_2B_{\mathrm r}}}
\right|u_\tau,\xi\right\rangle\\
&=
\left\langle\xi\left|
\left(
\left[\operatorname{Tr}_{A'B'}K\right]_\tau
\right)^{T_{B_{\mathrm r}}}
\right|\xi\right\rangle.
\end{aligned}
\]
Since \(|\xi\rangle\) is arbitrary, this proves
\eqref{eq:mixed_schur_trace_ppt_order}.
\end{proof}
\subsection{Resource insertion at the global optimum}

The optimal global Choi operator \(J_\star\) acts on
\(H_A\otimes H_B\), while the uninserted success and failure branches \(S,F\)
act on \(H_A\otimes H_B\otimes R\).
By Lemma~\ref{lem:global_optimizer_local_block}, the pullback of \(J_\star\)
to the local contraction coordinates has multiplicity block
\(|v_d\rangle\langle v_d|\).
The convention \eqref{eq:resource_insertion_convention} applies equally to
operator blocks carrying the resource input.  For a pure resource, choose
Schmidt coordinates in which \(\ket\eta\) is real.
Then \((\ket\eta\!\bra\eta)^T=\ket\eta\!\bra\eta\), and we write
\[
\begin{aligned}
X^{\eta,\eta}
&:=
\operatorname{Tr}_R\!\left[
\left(I\otimes(\ket\eta\!\bra\eta)^T\right)X
\right]\\
&=
(I\otimes\langle\eta|_R)X(I\otimes|\eta\rangle_R).
\end{aligned}
\]
Here \(I\) denotes the identity on all non-resource tensor factors of \(X\).
For \(K\in\{S,F\}\), retain \(\mathcal M_{AB}R\) in the subscript to record the
parent contraction space, and use the superscript to indicate that the resource
register has been inserted.  Thus
\[
\begin{aligned}
K_{\mathcal M_{AB}R}^{\eta,\eta}
&:=
\operatorname{Tr}_R\!\left[
\left(I_{\mathcal M_{AB}}\otimes(\ket\eta\!\bra\eta)^T\right)
K_{\mathcal M_{AB}R}
\right]\\
&=
(I_{\mathcal M_{AB}}\otimes\langle\eta|_R)
K_{\mathcal M_{AB}R}
(I_{\mathcal M_{AB}}\otimes|\eta\rangle_R)
\in\mathcal B(\mathcal M_{AB}).
\end{aligned}
\]
Taking a multiplicity matrix element commutes with resource insertion:
\begin{equation}\label{eq:block_insertion_compatibility}
\left[K_{\mathcal M_{AB}R}^{\eta,\eta}\right]_{z,w}
=
\left\langle\eta\left|
[K_{\mathcal M_{AB}R}]_{z,w}
\right|\eta\right\rangle,
\qquad z,w\in\mathcal Z.
\end{equation}

\begin{lemma}[Optimality fixes the inserted success branch]
\label{lem:gd_exposed_standard_representative}
Assume \(0<\gamma<1\).  If a locally twirled PPT success/failure pair assisted
by \(\ket\eta\) attains the optimal global CPTN value
\(G_\star(D,\gamma)\), then its inserted success branch is
\begin{equation}\label{eq:gd_inserted_global_optimizer}
S^{\eta,\eta}
=
J_\star
=
\frac{D+1}{2}P_{W_+^{(D)}}.
\end{equation}
In particular, inserting \(\ket\eta\) into its contraction block gives an
operator on \(\mathcal M_{AB}\) satisfying
\begin{equation}\label{eq:gd_standard_block}
S_{\mathcal M_{AB}R}^{\eta,\eta}
=
\ket{v_d}\bra{v_d}.
\end{equation}
\end{lemma}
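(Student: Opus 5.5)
The plan is to reduce the statement to the uniqueness part of Lemma~\ref{lem:general_d_global_optimum}, and then read off the contraction block via Lemma~\ref{lem:global_optimizer_local_block}.

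First, I will verify that the inserted success branch \(S^{\eta,\eta}\) is a feasible point of the global CPTN SDP \eqref{eq:gd_global_cptn_sdp}.

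\begin{itemize}
\item Positivity: \(S^{\eta,\eta}=(I\otimes\langle\eta|_R)S(I\otimes|\eta\rangle_R)\) is a compression of \(S\succeq0\), hence positive.
\item Trace condition: insertion commutes with \(\operatorname{Tr}_{A'B'}\). Applying it to \(\operatorname{Tr}_{A'B'}(S+F)=I_{A_1B_1A_2B_2}\otimes I_R\) and using \(\langle\eta|\eta\rangle=1\) gives \(\operatorname{Tr}_{A'B'}S^{\eta,\eta}+\operatorname{Tr}_{A'B'}F^{\eta,\eta}=I_{A_1B_1A_2B_2}\).
\item Since \(F^{\eta,\eta}\succeq0\), this yields \(\operatorname{Tr}_{A'B'}S^{\eta,\eta}\preceq I\).
\end{itemize}

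Because the Schmidt coordinates are chosen so that \(\ket\eta\) is real, \((\ket\eta\!\bra\eta)^T=\ket\eta\!\bra\eta\). The objective in \eqref{eq:fixed_resource_ppt_sdp} is therefore
\[
\operatorname{Tr}\!\left[(M\otimes\ket\eta\!\bra\eta)S\right]=\operatorname{Tr}\!\left[MS^{\eta,\eta}\right].
\]
The hypothesis that the pair attains \(G_\star(D,\gamma)\) then says that \(S^{\eta,\eta}\) is an optimizer of \eqref{eq:gd_global_cptn_sdp}. The uniqueness clause of Lemma~\ref{lem:general_d_global_optimum} forces \(S^{\eta,\eta}=J_\star=\tfrac{D+1}{2}P_{W_+^{(D)}}\), which is \eqref{eq:gd_inserted_global_optimizer}. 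The PPT constraints are not needed at this stage; they only matter later.

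For \eqref{eq:gd_standard_block}, I will pull \(S^{\eta,\eta}\) back along \(G_A\otimes G_B\). Since \(\mathcal G\) is \(G_A\otimes G_B\) tensored with \(I_R\), and the insertion acts only on \(R\), the two operations commute:
\[
(G_A\otimes G_B)^\dagger S^{\eta,\eta}(G_A\otimes G_B)=(I\otimes\langle\eta|)\,\mathcal G^\dagger S\mathcal G\,(I\otimes|\eta\rangle).
\]
The twirled structure \(\mathcal G^\dagger S\mathcal G=S_{\mathcal M_{AB}R}\otimes I_D\) turns the right-hand side into \(S_{\mathcal M_{AB}R}^{\eta,\eta}\otimes I_D\). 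Lemma~\ref{lem:global_optimizer_local_block} evaluates the left-hand side as \(|v_d\rangle\langle v_d|\otimes I_D\). Comparing the two gives \(S_{\mathcal M_{AB}R}^{\eta,\eta}=\ket{v_d}\bra{v_d}\).

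The main obstacle is careful bookkeeping rather than any new idea. Three points need checking:
\begin{itemize}
\item the transpose convention in resource insertion \eqref{eq:resource_insertion_convention}, which is resolved by the realness of \(\ket\eta\);
\item the claim that \(\operatorname{Tr}_{A'B'}\) and the pullback by \(\mathcal G\) commute with the resource compression;
\item the consistency of the tensor-factor ordering \(\mathcal M_{AB}\otimes R\otimes\mathbb C^d_A\otimes\mathbb C^d_B\) with the ordering used in Lemma~\ref{lem:global_optimizer_local_block}.
\end{itemize}
Each reduces to \(\mathcal G\) and \(\operatorname{Tr}_{A'B'}\) acting trivially on \(R\).
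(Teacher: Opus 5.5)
Your proposal is correct and follows the paper's proof essentially step for step. You show \(S^{\eta,\eta}\) is feasible for the global CPTN SDP (positivity by compression, then the inserted trace identity together with \(F^{\eta,\eta}\succeq0\)), identify the objective as \(\operatorname{Tr}[MS^{\eta,\eta}]\), and invoke the uniqueness clause of Lemma~\ref{lem:general_d_global_optimum} and then Lemma~\ref{lem:global_optimizer_local_block}. Your explicit check that resource insertion commutes with the pullback by \(\mathcal G\), combined with \(\mathcal G^\dagger S\mathcal G=S_{\mathcal M_{AB}R}\otimes I_D\), correctly fills in the step the paper leaves implicit when it simply cites that lemma for the contraction-block identity.
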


\begin{proof}
Use the inserted branches \(S^{\eta,\eta}\) and \(F^{\eta,\eta}\) defined
above.  For every vector \(|y\rangle\in H_A\otimes H_B\), positivity is
preserved because
\[
\langle y|K^{\eta,\eta}|y\rangle
=(\langle y|\otimes\langle\eta|)K(|y\rangle\otimes|\eta\rangle)
\ge0,
\qquad K\in\{S,F\}.
\]
Resource insertion also commutes with the output trace.  Using
\(\langle\eta|\eta\rangle=1\) and the standing trace identity gives
\[
\begin{aligned}
\operatorname{Tr}_{A'B'}
\left(S^{\eta,\eta}+F^{\eta,\eta}\right)
&=
(I_{A_1B_1A_2B_2}\otimes\langle\eta|_R)
\operatorname{Tr}_{A'B'}(S+F)
(I_{A_1B_1A_2B_2}\otimes|\eta\rangle_R)\\
&=
(I_{A_1B_1A_2B_2}\otimes\langle\eta|_R)
(I_{A_1B_1A_2B_2}\otimes I_R)
(I_{A_1B_1A_2B_2}\otimes|\eta\rangle_R)\\
&=
I_{A_1B_1A_2B_2}.
\end{aligned}
\]
Since \(\operatorname{Tr}_{A'B'}F^{\eta,\eta}\succeq0\), this also yields
\[
0\preceq\operatorname{Tr}_{A'B'}S^{\eta,\eta}
\preceq I_{A_1B_1A_2B_2}.
\]
Thus \(S^{\eta,\eta}\) is feasible for the global CPTN SDP.
By the attainment assumption,
\[
\operatorname{Tr}[M S^{\eta,\eta}]
=G_\star(D,\gamma).
\]
Uniqueness in Lemma~\ref{lem:general_d_global_optimum} forces
\[
S^{\eta,\eta}
=J_\star
=\frac{D+1}{2}P_{W_+^{(D)}}.
\]
Lemma~\ref{lem:global_optimizer_local_block} gives
\eqref{eq:gd_standard_block}.
\end{proof}
\subsection{Resource-effect certificate in Schmidt coordinates}

Let \(\ket\eta\) be a finite-Schmidt-rank pure resource of Schmidt rank \(r\).
Choose Schmidt bases on its local supports.  The standard Schmidt form is
\[
|\eta\rangle
=
\sum_{j=0}^{r-1}\sqrt{p_j}|jj\rangle_R,
\qquad
p_0\ge\cdots\ge p_{r-1}>0,
\qquad
\sum_{j=0}^{r-1}p_j=1.
\]
Set \(p_j=0\) for \(j\ge r\).  We may now flip the phases of Bob's Schmidt
vectors with labels \(j\ge1\) and conjugate \(S,F\) by the same local unitary.
This preserves positivity, PPT, the trace identities, and the inserted-block
matrix elements.  We therefore use the phase convention
\[
|\eta\rangle
=
\sqrt{p_0}|00\rangle_R-\sum_{j\ge1}\sqrt{p_j}|jj\rangle_R,
\qquad
\sum_{j\ge0}p_j=1,
\]
Set
\[
\Pi_\eta^\perp:=I_R-|\eta\rangle\langle\eta|.
\]

For the purification problem, exact optimality supplies
\eqref{eq:gd_standard_block} through
Lemma~\ref{lem:gd_exposed_standard_representative}; the two additional
zero-sector conditions are then verified directly.
\begin{lemma}[PPT-induced resource-effect certificate]
\label{lem:ppt_effect_certificate}
Fix the pure resource \(|\eta\rangle\) in the Schmidt form and phase convention
above.  Let a locally twirled pair feasible for
\eqref{eq:fixed_resource_ppt_sdp} with
\(\omega_R=|\eta\rangle\langle\eta|\) also obey
\eqref{eq:gd_standard_block} and
\begin{equation}\label{eq:gd_zero_input_sector_success}
\left\langle\eta\left|
\left[\operatorname{Tr}_{A'B'}S\right]_{+-}
\right|\eta\right\rangle
=
\left\langle\eta\left|
\left[\operatorname{Tr}_{A'B'}S\right]_{-+}
\right|\eta\right\rangle
=0 .
\end{equation}
Then \(r\ge2\), and there exists an effect \(N\in\mathcal B(R)\) such that
\begin{equation}\label{eq:ppt_effect_certificate}
0\preceq N\preceq I_R,
\qquad
\frac13
\le
\langle00|\Pi_\eta^\perp N\Pi_\eta^\perp|00\rangle
\le1-p_0,
\qquad
\operatorname{Re}
\langle00|\Pi_\eta^\perp N\Pi_\eta^\perp|11\rangle
\le\frac13.
\end{equation}
\end{lemma}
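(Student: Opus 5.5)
The plan is to extract everything from the resource-valued contraction block \(K:=S_{\mathcal M_{AB}R}\), using three inputs. First, the fixed block \eqref{eq:gd_standard_block}. Second, the sector trace bounds of Lemma~\ref{lem:contraction_sector_trace_bounds}. Third, the compressed PPT condition of Lemma~\ref{lem:mixed_schur_ppt_interface}, applied to both \(S\) and \(F\).

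By \eqref{eq:block_insertion_compatibility}, the block \eqref{eq:gd_standard_block} reads
\[
\langle\eta|[K]_{++,++}|\eta\rangle=\alpha^2,\qquad
\langle\eta|[K]_{--,--}|\eta\rangle=\beta^2,\qquad
\langle\eta|[K]_{++,--}|\eta\rangle=\alpha\beta,
\]
with all \((+-)\) and \((-+)\) entries vanishing. Combining \eqref{eq:branch_sector_trace_lower_bounds} with \([\operatorname{Tr}_{A'B'}S]_\tau\preceq I_R\) gives \([K]_{++,++}\preceq\alpha^2 I_R\) and \([K]_{--,--}\preceq\beta^2I_R\). Since these bounds are saturated on \(\ket\eta\), the vector \(\ket\eta\) is a top eigenvector of both operators. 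The same reasoning shows that \(\ket\eta\) is also a top eigenvector of \([\operatorname{Tr}_{A'B'}S]_{++}\) and \([\operatorname{Tr}_{A'B'}S]_{--}\), with eigenvalue \(1\) (this step should be double-checked). By \eqref{eq:gd_zero_input_sector_success} and positivity, \(\ket\eta\) lies in the kernel of \([\operatorname{Tr}_{A'B'}S]_{\pm\mp}\).

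Next I would choose the effect
\[
N:=[\operatorname{Tr}_{A'B'}S]_{+-}\quad\text{or}\quad N:=[\operatorname{Tr}_{A'B'}F]_{++},
\]
whichever the PPT step below actually controls. Either choice satisfies \(0\preceq N\preceq I_R\). Since \(N\succeq0\) annihilates \(\ket\eta\), we have \(N=\Pi_\eta^\perp N\Pi_\eta^\perp\). The upper bound then follows at once:
\[
\langle00|\Pi_\eta^\perp N\Pi_\eta^\perp|00\rangle\le\langle00|\Pi_\eta^\perp|00\rangle=1-p_0 .
\]

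The two remaining inequalities in \eqref{eq:ppt_effect_certificate} must come from PPT, and this is the main obstacle. By \eqref{eq:mixed_schur_ppt_contraction_block}, the operator \(K^{T_{\mathcal M_B}T_{B_{\mathrm r}}}\) is positive. The transpose \(T_{\mathcal M_B}\) moves the coherence \([K]_{++,--}\) into the off-diagonal position between the \((+-)\) and \((-+)\) sectors. The resulting \(2\times2\) block
\[
\begin{pmatrix}[K]_{+-,+-}^{T_{B_{\mathrm r}}}&[K]_{++,--}^{T_{B_{\mathrm r}}}\\ \cdot&[K]_{-+,-+}^{T_{B_{\mathrm r}}}\end{pmatrix}\succeq0
\]
is therefore a Cauchy--Schwarz constraint. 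It bounds matrix elements of the transposed coherence by the \((\pm\mp)\) diagonal sectors, and these are in turn dominated through \eqref{eq:branch_sector_trace_lower_bounds} by \((\alpha\beta)[\operatorname{Tr}_{A'B'}S]_{\pm\mp}\), i.e.\ by \(N\).

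I would evaluate this block on the test vectors \(|01\rangle,|10\rangle\in R\). The partial transpose sends \(\langle00|X|11\rangle\) to \(\langle01|X^{T_{B_{\mathrm r}}}|10\rangle\). On \(|00\rangle,|11\rangle\), the top-eigenvector property makes the coherence \([K]_{++,--}\) look like \(\alpha\beta\) times something close to \(\ket\eta\!\bra\eta\) plus a controlled remainder. Its \(|00\rangle\langle11|\) element is then of order \(-\alpha\beta\sqrt{p_0p_1}\) by the chosen phase convention, and the remainder is expressed through \(\Pi_\eta^\perp\). Dividing by \(\alpha\beta\) and using the geometric ratio \(\beta/\alpha=(d-1)/(d+1)\ge1/3\) (with equality at \(d=2\)) should yield \(\langle00|\Pi_\eta^\perp N\Pi_\eta^\perp|00\rangle\ge1/3\). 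The real-part bound \(\operatorname{Re}\langle00|\Pi_\eta^\perp N\Pi_\eta^\perp|11\rangle\le1/3\) should come from the same block tested on a combination of \(|00\rangle\) and \(|11\rangle\). If \(S\)'s PPT alone proves insufficient, I would repeat the argument with \(F^{T}\succeq0\), using \([\operatorname{Tr}F]_\tau=I_R-[\operatorname{Tr}S]_\tau\), which is why \(N\) may need to be taken from the failure branch.

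Finally, \(r\ge2\) follows from the two bounds just proved: \(1/3\le1-p_0\) forces \(p_0<1\).
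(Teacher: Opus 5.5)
\textbf{There is a genuine gap at the center of your proposal: the choice of the effect $N$.} Both of your candidates, $N=[\operatorname{Tr}_{A'B'}S]_{+-}$ and $N=[\operatorname{Tr}_{A'B'}F]_{++}$, already violate the third inequality for the one-EPR protocol of Algorithm~\ref{alg:one_epr_purification}. That protocol satisfies every hypothesis of the lemma: its success branch is already invariant under the local twirl, and twirling $F$ leaves $S$ unchanged.

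Write $\ket{\Phi^\pm}:=(\ket{00}\pm\ket{11})/\sqrt2$. In the phase convention one has $\ket\eta=\ket{\Phi^-}$. A direct computation for this protocol gives $S_\perp=0$ and
$[\operatorname{Tr}_{A'B'}S]_{++}=\ket{\Phi^-}\!\bra{\Phi^-}+\ket{\Psi^-}\!\bra{\Psi^-}$. It also gives
$[\operatorname{Tr}_{A'B'}S]_{+-}=\ket{\Phi^+}\!\bra{\Phi^+}+\ket{\Psi^+}\!\bra{\Psi^+}$, which in this instance equals $[\operatorname{Tr}_{A'B'}F]_{++}$.

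For either of your choices this yields $\operatorname{Re}\langle00|\Pi_\eta^\perp N\Pi_\eta^\perp|11\rangle=1/2>1/3$. The lemma is tight at this instance: the correct certificate gives exactly $1/3$ in both bounds there. So no sharper PPT argument can rescue an $N$ taken from a single sector.

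Relatedly, the constant $1/3$ does not depend on $d$ and does not come from $\beta/\alpha\ge1/3$. Every quantity is normalized by $\alpha\beta$, and the $1/3$ comes from the affine normalization of $N$.

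\textbf{The missing idea is that the coherence remainder must be absorbed into $N$, not bounded away.} The remainder $\Pi_\eta^\perp[S_{\mathcal M_{AB}R}]_{++,--}\Pi_\eta^\perp$ can have norm up to $\alpha\beta$ and can cancel the $-\alpha\beta\sqrt{p_0p_1}$ contribution, so it cannot be controlled separately. The paper proceeds in two steps.
\begin{enumerate}
\item It proves $[S_{\mathcal M_{AB}R}]_{++,--}\ket\eta=\alpha\beta\ket\eta$, using the kernel vector $(\beta\ket{++}-\alpha\ket{--})\otimes\ket\eta$ of $S_{\mathcal M_{AB}R}$. Your top-eigenvector argument covers only the diagonal blocks.
\item With $Q:=[\operatorname{Tr}_{A'B'}S]_{+-}+[\operatorname{Tr}_{A'B'}S]_{-+}$, it sets
$N=\frac16\bigl[Q-\frac1{\alpha\beta}\Pi_\eta^\perp\bigl([S_{\mathcal M_{AB}R}]_{++,--}+[S_{\mathcal M_{AB}R}]_{--,++}\bigr)\Pi_\eta^\perp+2I_R\bigr]$.
\end{enumerate}

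Your $2\times2$ block of $(S_{\mathcal M_{AB}R})^{T_{\mathcal M_B}T_{B_{\mathrm r}}}$ on $\{\ket{+-},\ket{-+}\}$ is the right object. However, a Cauchy--Schwarz product bound on $\ket{01},\ket{10}$ cannot produce the affine inequalities in the statement.
\begin{itemize}
\item The lower bound comes from the quadratic form on $\frac1{\sqrt2}(\ket{+-}-\ket{-+})\otimes\ket{00}$. This is combined with $Q\succeq(\alpha\beta)^{-1}\bigl([S_{\mathcal M_{AB}R}]_{+-,+-}+[S_{\mathcal M_{AB}R}]_{-+,-+}\bigr)$ and with $Q=\Pi_\eta^\perp Q\Pi_\eta^\perp$.
\item The real-part bound requires adding three nonnegative quantities:
  \begin{enumerate}
  \item success-branch PPT evaluated on $\ket{\phi_\pm}$;
  \item positivity of the residual coefficients $[\operatorname{Tr}_{A'B'}S_\perp]_{\pm\mp}$ on $\ket{01}$ and $\ket{10}$;
  \item failure-branch PPT evaluated on $\ket{\Psi^+}$ in the two mixed sectors.
  \end{enumerate}
  In this sum the diagonal $Q$ terms cancel, leaving $0\le4-12\operatorname{Re}\langle00|\Pi_\eta^\perp N\Pi_\eta^\perp|11\rangle$.
\end{itemize}

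Your final deduction of $r\ge2$ from $1/3\le1-p_0$ is correct.
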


\begin{proof}
\paragraph{Block identities and trace bounds.}
The pullback preserves positivity because
\(\langle x|\mathcal G^\dagger S\mathcal G|x\rangle
=\langle\mathcal Gx|S|\mathcal Gx\rangle\ge0\) for every \(|x\rangle\).
Together with the pullback identity, this gives
\begin{equation}\label{eq:success_contraction_block_positive}
\mathcal G^\dagger S\mathcal G
=
S_{\mathcal M_{AB}R}\otimes I_D
\succeq0,
\end{equation}
and hence \(S_{\mathcal M_{AB}R}\succeq0\).  Likewise,
\eqref{eq:mixed_schur_ppt_contraction_block} and the physical PPT constraint
give
\[
\mathcal G^\dagger S^{T_{B_1B_2B'B_{\mathrm r}}}\mathcal G
=
\bigl(S_{\mathcal M_{AB}R}\bigr)^{T_{\mathcal M_B}T_{B_{\mathrm r}}}
\otimes I_D
\succeq0,
\]
so
\begin{equation}\label{eq:success_contraction_block_ppt_positive}
\bigl(S_{\mathcal M_{AB}R}\bigr)^{T_{\mathcal M_B}T_{B_{\mathrm r}}}
\succeq0.
\end{equation}
Lemma~\ref{lem:contraction_sector_trace_bounds}, applied to \(K=S\), gives
the four contraction-sector lower bounds.
On the other hand, the trace identity and \(F\succeq0\) give
\begin{equation}\label{eq:success_failure_sector_complement}
I_R-\left[\operatorname{Tr}_{A'B'}S\right]_\tau
=
\left[\operatorname{Tr}_{A'B'}F\right]_\tau
\succeq0,
\qquad \tau\in\mathcal Z.
\end{equation}
Combining the two sides yields
\begin{equation}\label{eq:success_output_trace_sector_bounds}
\begin{aligned}
0&\preceq \alpha^{-2}[S_{\mathcal M_{AB}R}]_{++,++}
\preceq\left[\operatorname{Tr}_{A'B'}S\right]_{++}\preceq I_R,\\
0&\preceq (\alpha\beta)^{-1}[S_{\mathcal M_{AB}R}]_{+-,+-}
\preceq\left[\operatorname{Tr}_{A'B'}S\right]_{+-}\preceq I_R,\\
0&\preceq (\alpha\beta)^{-1}[S_{\mathcal M_{AB}R}]_{-+,-+}
\preceq\left[\operatorname{Tr}_{A'B'}S\right]_{-+}\preceq I_R,\\
0&\preceq \beta^{-2}[S_{\mathcal M_{AB}R}]_{--,--}
\preceq\left[\operatorname{Tr}_{A'B'}S\right]_{--}\preceq I_R.
\end{aligned}
\end{equation}
The inserted block identity
\[
S_{\mathcal M_{AB}R}^{\eta,\eta}
=
|v_d\rangle\langle v_d|,
\qquad
|v_d\rangle=\alpha|++\rangle+\beta|--\rangle,
\]
gives, for every \(z\in\mathcal Z\),
\[
\left\langle\eta\left|
[S_{\mathcal M_{AB}R}]_{z,z}
\right|\eta\right\rangle
=
\langle z|S_{\mathcal M_{AB}R}^{\eta,\eta}|z\rangle
=
|\langle z|v_d\rangle|^2.
\]
Thus
\begin{equation}\label{eq:inserted_diagonal_block_expectations}
\begin{gathered}
\left\langle\eta\left|[S_{\mathcal M_{AB}R}]_{+-,+-}\right|\eta\right\rangle
=
\left\langle\eta\left|[S_{\mathcal M_{AB}R}]_{-+,-+}\right|\eta\right\rangle=0,\\
\left\langle\eta\left|[S_{\mathcal M_{AB}R}]_{++,++}\right|\eta\right\rangle=\alpha^2,
\qquad
\left\langle\eta\left|[S_{\mathcal M_{AB}R}]_{--,--}\right|\eta\right\rangle=\beta^2.
\end{gathered}
\end{equation}
For either cross-sector diagonal block, positivity gives, for example,
\[
0=
\left\langle\eta\left|[S_{\mathcal M_{AB}R}]_{+-,+-}\right|\eta\right\rangle
=
\left\|
\bigl([S_{\mathcal M_{AB}R}]_{+-,+-}\bigr)^{1/2}|\eta\rangle
\right\|^2,
\]
so \([S_{\mathcal M_{AB}R}]_{+-,+-}|\eta\rangle=0\), and the same
calculation applies to \([S_{\mathcal M_{AB}R}]_{-+,-+}\).  For the \(++\)
block, \eqref{eq:inserted_diagonal_block_expectations} and
\(\langle\eta|I_R|\eta\rangle=1\) give
\[
\begin{aligned}
0
&=
\left\langle\eta\left|
I_R-\alpha^{-2}[S_{\mathcal M_{AB}R}]_{++,++}
\right|\eta\right\rangle\\
&=
\left\|
\left(
I_R-\alpha^{-2}[S_{\mathcal M_{AB}R}]_{++,++}
\right)^{1/2}|\eta\rangle
\right\|^2,
\end{aligned}
\]
and hence
\([S_{\mathcal M_{AB}R}]_{++,++}|\eta\rangle=\alpha^2|\eta\rangle\).
The same calculation with
\(I_R-\beta^{-2}[S_{\mathcal M_{AB}R}]_{--,--}\) gives the \(--\) identity.
We have therefore shown
\begin{equation}\label{eq:forced_diagonal_block_actions}
\begin{aligned}
[S_{\mathcal M_{AB}R}]_{+-,+-}|\eta\rangle
&=[S_{\mathcal M_{AB}R}]_{-+,-+}|\eta\rangle=0,\\
[S_{\mathcal M_{AB}R}]_{++,++}|\eta\rangle
&=\alpha^2|\eta\rangle,
\qquad
[S_{\mathcal M_{AB}R}]_{--,--}|\eta\rangle=\beta^2|\eta\rangle.
\end{aligned}
\end{equation}
The multiplicity vector
\[
\beta|++\rangle-\alpha|--\rangle
\]
is orthogonal to \(|v_d\rangle\).  Hence
\[
\begin{aligned}
&\langle\eta|
(\beta\langle++|-\alpha\langle--|)
S_{\mathcal M_{AB}R}
(\beta|++\rangle-\alpha|--\rangle)
|\eta\rangle\\
&=
(\beta\langle++|-\alpha\langle--|)
|v_d\rangle\langle v_d|
(\beta|++\rangle-\alpha|--\rangle)\\
&=
|\alpha\beta-\beta\alpha|^2
=0.
\end{aligned}
\]
Since \(S_{\mathcal M_{AB}R}\succeq0\), the vector
\((\beta|++\rangle-\alpha|--\rangle)\otimes|\eta\rangle\)
lies in its kernel, so
\[
S_{\mathcal M_{AB}R}
\bigl((\beta|++\rangle-\alpha|--\rangle)\otimes|\eta\rangle\bigr)
=0.
\]
Projecting this equation onto the \(++\) and \(--\) coordinates gives
\begin{equation}\label{eq:kernel_projected_block_relations}
\begin{aligned}
\beta[S_{\mathcal M_{AB}R}]_{++,++}|\eta\rangle
&-\alpha[S_{\mathcal M_{AB}R}]_{++,--}|\eta\rangle=0,\\
\beta[S_{\mathcal M_{AB}R}]_{--,++}|\eta\rangle
&-\alpha[S_{\mathcal M_{AB}R}]_{--,--}|\eta\rangle=0.
\end{aligned}
\end{equation}
Because \(S_{\mathcal M_{AB}R}\) is Hermitian,
\begin{equation}\label{eq:offdiagonal_block_adjoint}
\begin{aligned}
\bigl([S_{\mathcal M_{AB}R}]_{++,--}\bigr)^\dagger
&=
\left[(\langle++|\otimes I_R)S_{\mathcal M_{AB}R}
(|--\rangle\otimes I_R)\right]^\dagger\\
&=(\langle--|\otimes I_R)S_{\mathcal M_{AB}R}
(|++\rangle\otimes I_R)\\
&=[S_{\mathcal M_{AB}R}]_{--,++}.
\end{aligned}
\end{equation}
Substituting \eqref{eq:forced_diagonal_block_actions} into
\eqref{eq:kernel_projected_block_relations} gives
\begin{equation}\label{eq:forced_offdiagonal_block_actions}
[S_{\mathcal M_{AB}R}]_{++,--}|\eta\rangle
=[S_{\mathcal M_{AB}R}]_{--,++}|\eta\rangle
=\alpha\beta|\eta\rangle.
\end{equation}
Expanding \(I_R=|\eta\rangle\langle\eta|+\Pi_\eta^\perp\) on both sides of
the off-diagonal block and its adjoint gives
\begin{equation}\label{eq:forced_offdiagonal_block_decompositions}
\begin{aligned}
[S_{\mathcal M_{AB}R}]_{++,--}
&=\alpha\beta|\eta\rangle\langle\eta|
+\Pi_\eta^\perp[S_{\mathcal M_{AB}R}]_{++,--}\Pi_\eta^\perp,\\
[S_{\mathcal M_{AB}R}]_{--,++}
&=\alpha\beta|\eta\rangle\langle\eta|
+\Pi_\eta^\perp[S_{\mathcal M_{AB}R}]_{--,++}\Pi_\eta^\perp.
\end{aligned}
\end{equation}
Specialize the residual operator \(K_\perp\) in
\eqref{eq:branch_contraction_residual_decomposition} to \(K=S\), and write it
as \(S_\perp\).  In particular,
\[
S_\perp=(I-\mathcal G\mathcal G^\dagger)S
(I-\mathcal G\mathcal G^\dagger)\succeq0.
\]
Define the cross-sector success budget by
\begin{equation}\label{eq:cross_sector_budget_definition}
Q:=
\left[\operatorname{Tr}_{A'B'}S\right]_{+-}
+
\left[\operatorname{Tr}_{A'B'}S\right]_{-+}.
\end{equation}
The two corresponding identities in
\eqref{eq:branch_sector_trace_decomposition} give
\begin{equation}\label{eq:cross_sector_budget_decomposition}
\begin{aligned}
Q
={}&
\frac{
[S_{\mathcal M_{AB}R}]_{+-,+-}
+
[S_{\mathcal M_{AB}R}]_{-+,-+}
}{\alpha\beta}\\
&+
\left[\operatorname{Tr}_{A'B'}S_\perp\right]_{+-}
+
\left[\operatorname{Tr}_{A'B'}S_\perp\right]_{-+}\\
\succeq{}&
\frac{
[S_{\mathcal M_{AB}R}]_{+-,+-}
+
[S_{\mathcal M_{AB}R}]_{-+,-+}
}{\alpha\beta}.
\end{aligned}
\end{equation}
The last inequality follows from
\eqref{eq:residual_output_trace_sector_positive}.  Positivity of
\(S_{\mathcal M_{AB}R}\), which follows from
\eqref{eq:success_contraction_block_positive}, gives the first inequality
below through its \(++/--\) principal block; the \(++\) and \(--\) bounds in
\eqref{eq:success_output_trace_sector_bounds} give the second:
\begin{equation}\label{eq:offdiagonal_block_matrix_element_bound}
\left|
\left\langle x\left|[S_{\mathcal M_{AB}R}]_{++,--}\right|y\right\rangle
\right|^2
\le
\left\langle x\left|[S_{\mathcal M_{AB}R}]_{++,++}\right|x\right\rangle
\left\langle y\left|[S_{\mathcal M_{AB}R}]_{--,--}\right|y\right\rangle
\le
\alpha^2\beta^2\|x\|^2\|y\|^2,
\end{equation}
so \(\|[S_{\mathcal M_{AB}R}]_{++,--}\|\le\alpha\beta\).  Define
\begin{equation}\label{eq:auxiliary_effect_definition}
N:=
\frac{
Q-\frac1{\alpha\beta}\Pi_\eta^\perp
\left(
[S_{\mathcal M_{AB}R}]_{++,--}
+
[S_{\mathcal M_{AB}R}]_{--,++}
\right)\Pi_\eta^\perp
+2I_R
}{6}.
\end{equation}
Indeed,
\begin{equation}\label{eq:auxiliary_operator_bounds}
-2I_R\preceq
\frac1{\alpha\beta}\Pi_\eta^\perp
\left(
[S_{\mathcal M_{AB}R}]_{++,--}
+
[S_{\mathcal M_{AB}R}]_{--,++}
\right)\Pi_\eta^\perp
\preceq2I_R,
\qquad
0\preceq Q\preceq2I_R,
\end{equation}
where the bound on \(Q\) follows by adding the \(+-\) and \(-+\) inequalities
in \eqref{eq:success_output_trace_sector_bounds}.  Hence
\begin{equation}\label{eq:auxiliary_effect_bounds}
0\preceq N\preceq I_R.
\end{equation}

The two vanishing assumptions in
\eqref{eq:gd_zero_input_sector_success} and
\(0\preceq[\operatorname{Tr}_{A'B'}S]_\tau\preceq I_R\) give
\begin{equation}\label{eq:mixed_output_sector_annihilation}
\left[\operatorname{Tr}_{A'B'}S\right]_{+-}|\eta\rangle
=
\left[\operatorname{Tr}_{A'B'}S\right]_{-+}|\eta\rangle
=
Q|\eta\rangle=0.
\end{equation}
Since \(Q\) is Hermitian, \(Q|\eta\rangle=0\) also gives
\(\langle\eta|Q=0\).  Expanding
\(I_R=|\eta\rangle\langle\eta|+\Pi_\eta^\perp\) on both sides of \(Q\)
therefore yields
\begin{equation}\label{eq:cross_sector_budget_support}
Q=\Pi_\eta^\perp Q\Pi_\eta^\perp.
\end{equation}

Table~\ref{tab:ppt_test_relations} summarizes the previously
established relations used in the two PPT tests.
\begin{table}[htbp]
\centering
\caption{Previously established relations used in the two PPT tests.}
\label{tab:ppt_test_relations}
\footnotesize
\setlength{\jot}{4pt}
\renewcommand{\arraystretch}{1.3}
\begin{tabularx}{\linewidth}{
@{}>{\raggedright\arraybackslash}X
>{\centering\arraybackslash}p{0.11\linewidth}@{}}
\toprule
Formulae & Source\\
\midrule
\(
S_\perp=(I-\mathcal G\mathcal G^\dagger)S
(I-\mathcal G\mathcal G^\dagger)\succeq0
\)
& \((\ref*{eq:branch_contraction_residual_decomposition})\)
\\
\addlinespace[4pt]
\(\begin{aligned}
\left[\operatorname{Tr}_{A'B'}S\right]_{++}
&=\alpha^{-2}[S_{\mathcal M_{AB}R}]_{++,++}\\
&\quad+\left[\operatorname{Tr}_{A'B'}S_\perp\right]_{++},\\
\left[\operatorname{Tr}_{A'B'}S\right]_{+-}
&=(\alpha\beta)^{-1}[S_{\mathcal M_{AB}R}]_{+-,+-}\\
&\quad+\left[\operatorname{Tr}_{A'B'}S_\perp\right]_{+-},\\
\left[\operatorname{Tr}_{A'B'}S\right]_{-+}
&=(\alpha\beta)^{-1}[S_{\mathcal M_{AB}R}]_{-+,-+}\\
&\quad+\left[\operatorname{Tr}_{A'B'}S_\perp\right]_{-+},\\
\left[\operatorname{Tr}_{A'B'}S\right]_{--}
&=\beta^{-2}[S_{\mathcal M_{AB}R}]_{--,--}\\
&\quad+\left[\operatorname{Tr}_{A'B'}S_\perp\right]_{--}
\end{aligned}\)
& \((\ref*{eq:branch_sector_trace_decomposition})\)
\\
\addlinespace[4pt]
\(
\left[\operatorname{Tr}_{A'B'}S_\perp\right]_\tau
\succeq0,\qquad \tau\in\mathcal Z
\)
& \((\ref*{eq:residual_output_trace_sector_positive})\)
\\
\midrule
\(\begin{aligned}
0&\preceq \alpha^{-2}[S_{\mathcal M_{AB}R}]_{++,++}
\preceq\left[\operatorname{Tr}_{A'B'}S\right]_{++}\preceq I_R,\\
0&\preceq (\alpha\beta)^{-1}[S_{\mathcal M_{AB}R}]_{+-,+-}
\preceq\left[\operatorname{Tr}_{A'B'}S\right]_{+-}\preceq I_R,\\
0&\preceq (\alpha\beta)^{-1}[S_{\mathcal M_{AB}R}]_{-+,-+}
\preceq\left[\operatorname{Tr}_{A'B'}S\right]_{-+}\preceq I_R,\\
0&\preceq \beta^{-2}[S_{\mathcal M_{AB}R}]_{--,--}
\preceq\left[\operatorname{Tr}_{A'B'}S\right]_{--}\preceq I_R
\end{aligned}\)
& \((\ref*{eq:success_output_trace_sector_bounds})\)
\\
\addlinespace[4pt]
\(\begin{aligned}
[S_{\mathcal M_{AB}R}]_{+-,+-}|\eta\rangle
&=[S_{\mathcal M_{AB}R}]_{-+,-+}|\eta\rangle=0,\\
[S_{\mathcal M_{AB}R}]_{++,++}|\eta\rangle
&=\alpha^2|\eta\rangle,\qquad
[S_{\mathcal M_{AB}R}]_{--,--}|\eta\rangle=\beta^2|\eta\rangle
\end{aligned}\)
&
\((\ref*{eq:forced_diagonal_block_actions})\)
\\
\addlinespace[4pt]
\midrule
\(
\bigl([S_{\mathcal M_{AB}R}]_{++,--}\bigr)^\dagger
=[S_{\mathcal M_{AB}R}]_{--,++}
\)
& \((\ref*{eq:offdiagonal_block_adjoint})\)
\\
\addlinespace[4pt]
\(\begin{aligned}
[S_{\mathcal M_{AB}R}]_{++,--}|\eta\rangle
&=\alpha\beta|\eta\rangle,\\
[S_{\mathcal M_{AB}R}]_{--,++}|\eta\rangle
&=\alpha\beta|\eta\rangle
\end{aligned}\)
& \((\ref*{eq:forced_offdiagonal_block_actions})\)
\\
\addlinespace[4pt]
\(\begin{aligned}
\left|\left\langle x\left|[S_{\mathcal M_{AB}R}]_{++,--}
\right|y\right\rangle\right|^2
&\le
\left\langle x\left|[S_{\mathcal M_{AB}R}]_{++,++}\right|x\right\rangle
\left\langle y\left|[S_{\mathcal M_{AB}R}]_{--,--}\right|y\right\rangle\\
&\le\alpha^2\beta^2\|x\|^2\|y\|^2,
\qquad
\|[S_{\mathcal M_{AB}R}]_{++,--}\|\le\alpha\beta
\end{aligned}\)
& \((\ref*{eq:offdiagonal_block_matrix_element_bound})\)
\\
\addlinespace[4pt]
\(\begin{aligned}
[S_{\mathcal M_{AB}R}]_{++,--}
&=\alpha\beta|\eta\rangle\langle\eta|\\
&\quad+\Pi_\eta^\perp
[S_{\mathcal M_{AB}R}]_{++,--}
\Pi_\eta^\perp,\\
[S_{\mathcal M_{AB}R}]_{--,++}
&=\alpha\beta|\eta\rangle\langle\eta|\\
&\quad+\Pi_\eta^\perp
[S_{\mathcal M_{AB}R}]_{--,++}
\Pi_\eta^\perp
\end{aligned}\)
& \((\ref*{eq:forced_offdiagonal_block_decompositions})\)
\\
\addlinespace[4pt]
\midrule
\(
\bigl(S_{\mathcal M_{AB}R}\bigr)^{T_{\mathcal M_B}T_{B_{\mathrm r}}}
\succeq0
\)
& \((\ref*{eq:success_contraction_block_ppt_positive})\)
\\
\addlinespace[4pt]
\(
\left(\left[\operatorname{Tr}_{A'B'}S\right]_\tau\right)^{T_{B_{\mathrm r}}}
\succeq0,\qquad \tau\in\mathcal Z
\)
& \((\ref*{eq:mixed_schur_trace_ppt_order})\)
\\
\addlinespace[4pt]
\(
\left[\operatorname{Tr}_{A'B'}S\right]_{+-}|\eta\rangle
=\left[\operatorname{Tr}_{A'B'}S\right]_{-+}|\eta\rangle
=Q|\eta\rangle=0
\)
& \((\ref*{eq:mixed_output_sector_annihilation})\)
\\
\addlinespace[4pt]
\(
\left[\operatorname{Tr}_{A'B'}F\right]_\tau
=I_R-\left[\operatorname{Tr}_{A'B'}S\right]_\tau\succeq0,
\qquad \tau\in\mathcal Z
\)
& \((\ref*{eq:success_failure_sector_complement})\)
\\
\midrule
\(
Q=\left[\operatorname{Tr}_{A'B'}S\right]_{+-}
+\left[\operatorname{Tr}_{A'B'}S\right]_{-+}
\)
& \((\ref*{eq:cross_sector_budget_definition})\)
\\
\addlinespace[4pt]
\(\begin{aligned}
Q={}&\frac{[S_{\mathcal M_{AB}R}]_{+-,+-}
+[S_{\mathcal M_{AB}R}]_{-+,-+}}{\alpha\beta}\\
&+\left[\operatorname{Tr}_{A'B'}S_\perp\right]_{+-}
+\left[\operatorname{Tr}_{A'B'}S_\perp\right]_{-+}\\
\succeq{}&\frac{[S_{\mathcal M_{AB}R}]_{+-,+-}
+[S_{\mathcal M_{AB}R}]_{-+,-+}}{\alpha\beta}
\end{aligned}\)
& \((\ref*{eq:cross_sector_budget_decomposition})\)
\\
\addlinespace[4pt]
\(\begin{aligned}
-2I_R&\preceq
\frac1{\alpha\beta}\Pi_\eta^\perp
\left(
[S_{\mathcal M_{AB}R}]_{++,--}
+[S_{\mathcal M_{AB}R}]_{--,++}
\right)
\Pi_\eta^\perp\preceq2I_R,\\
0&\preceq Q\preceq2I_R
\end{aligned}\)
& \((\ref*{eq:auxiliary_operator_bounds})\)
\\
\addlinespace[4pt]
\(
Q=\Pi_\eta^\perp Q\Pi_\eta^\perp
\)
& \((\ref*{eq:cross_sector_budget_support})\)
\\
\midrule
\(\displaystyle
N:=\frac16\left[Q-
\frac1{\alpha\beta}\Pi_\eta^\perp
\left(
[S_{\mathcal M_{AB}R}]_{++,--}
+[S_{\mathcal M_{AB}R}]_{--,++}
\right)
\Pi_\eta^\perp+2I_R\right]
\)
& \((\ref*{eq:auxiliary_effect_definition})\)
\\
\addlinespace[4pt]
\(
0\preceq N\preceq I_R
\)
& \((\ref*{eq:auxiliary_effect_bounds})\)
\\
\bottomrule
\end{tabularx}
\end{table}

\paragraph{Single-vector PPT test.}
We evaluate the required PPT quadratic form directly.  Applying
\(T_{\mathcal M_B}\) to the multiplicity matrix units and
\(T_{B_{\mathrm r}}\) to their resource-valued coefficients gives
\begin{equation}\label{eq:resource_block_partial_transpose_expansion}
\bigl(S_{\mathcal M_{AB}R}\bigr)^{T_{\mathcal M_B}T_{B_{\mathrm r}}}
=
\sum_{\substack{s_A,s_B\in\{+,-\}\\
t_A,t_B\in\{+,-\}}}
|s_A,t_B\rangle\langle t_A,s_B|
\otimes
\bigl([S_{\mathcal M_{AB}R}]_{(s_A,s_B),(t_A,t_B)}\bigr)^{T_{B_{\mathrm r}}}.
\end{equation}
Choose
\[
|a\rangle
:=
\frac{|+-\rangle-|-+\rangle}{\sqrt2}.
\]
For the multiplicity matrix unit in
\eqref{eq:resource_block_partial_transpose_expansion}, direct evaluation gives
\[
2\langle a|s_A,t_B\rangle\langle t_A,s_B|a\rangle
=
\begin{cases}
1,&(s_A,s_B,t_A,t_B)=(+,-,+,-)\text{ or }(-,+,-,+),\\
-1,&(s_A,s_B,t_A,t_B)=(+,+,-,-)\text{ or }(-,-,+,+),\\
0,&\text{otherwise}.
\end{cases}
\]
Substituting these four nonzero terms into
\eqref{eq:resource_block_partial_transpose_expansion} and using
\(\langle00|X^{T_{B_{\mathrm r}}}|00\rangle=\langle00|X|00\rangle\) gives
\[
\begin{aligned}
&\frac{2}{\alpha\beta}
\left\langle a,00\left|
\bigl(S_{\mathcal M_{AB}R}\bigr)^{T_{\mathcal M_B}T_{B_{\mathrm r}}}
\right|a,00\right\rangle\\
&\quad=
\frac1{\alpha\beta}
\left\langle00\left|
 [S_{\mathcal M_{AB}R}]_{+-,+-}
\right|00\right\rangle\\
&\qquad+
\frac1{\alpha\beta}
\left\langle00\left|[S_{\mathcal M_{AB}R}]_{-+,-+}\right|00\right\rangle\\
&\qquad-
\frac1{\alpha\beta}
\left\langle00\left|[S_{\mathcal M_{AB}R}]_{++,--}\right|00\right\rangle\\
&\qquad-
\frac1{\alpha\beta}
\left\langle00\left|[S_{\mathcal M_{AB}R}]_{--,++}\right|00\right\rangle.
\end{aligned}
\]
Moreover,
\[
|00\rangle
=
\sqrt{p_0}|\eta\rangle+\Pi_\eta^\perp|00\rangle.
\]
Using \eqref{eq:forced_diagonal_block_actions} and
\eqref{eq:forced_offdiagonal_block_decompositions}, we obtain
\[
\begin{aligned}
&\frac{2}{\alpha\beta}
\left\langle a,00\left|
\bigl(S_{\mathcal M_{AB}R}\bigr)^{T_{\mathcal M_B}T_{B_{\mathrm r}}}
\right|a,00\right\rangle\\
&\quad=
\frac1{\alpha\beta}
\left\langle00\left|
\Pi_\eta^\perp[S_{\mathcal M_{AB}R}]_{+-,+-}\Pi_\eta^\perp
\right|00\right\rangle\\
&\qquad+
\frac1{\alpha\beta}
\left\langle00\left|
\Pi_\eta^\perp[S_{\mathcal M_{AB}R}]_{-+,-+}\Pi_\eta^\perp
\right|00\right\rangle\\
&\qquad-
\frac1{\alpha\beta}
\left\langle00\left|
\Pi_\eta^\perp[S_{\mathcal M_{AB}R}]_{++,--}\Pi_\eta^\perp
\right|00\right\rangle\\
&\qquad-
\frac1{\alpha\beta}
\left\langle00\left|
\Pi_\eta^\perp[S_{\mathcal M_{AB}R}]_{--,++}\Pi_\eta^\perp
\right|00\right\rangle\\
&\qquad-2p_0.
\end{aligned}
\]
On the other hand, sandwiching the definition of \(N\) between
\(\langle00|\Pi_\eta^\perp\) and \(\Pi_\eta^\perp|00\rangle\) gives
\begin{equation}\label{eq:N_00_expansion}
\begin{aligned}
&6\langle00|\Pi_\eta^\perp N\Pi_\eta^\perp|00\rangle-2\\
&\quad=
\langle00|\Pi_\eta^\perp Q\Pi_\eta^\perp|00\rangle\\
&\qquad-
\frac1{\alpha\beta}
\left\langle00\left|
\Pi_\eta^\perp
\left(
[S_{\mathcal M_{AB}R}]_{++,--}
+
[S_{\mathcal M_{AB}R}]_{--,++}
\right)
\Pi_\eta^\perp
\right|00\right\rangle\\
&\qquad+
2\langle00|\Pi_\eta^\perp|00\rangle-2\\
&\quad=
\langle00|\Pi_\eta^\perp Q\Pi_\eta^\perp|00\rangle\\
&\qquad-
\frac1{\alpha\beta}
\left\langle00\left|
\Pi_\eta^\perp
\left(
[S_{\mathcal M_{AB}R}]_{++,--}
+
[S_{\mathcal M_{AB}R}]_{--,++}
\right)
\Pi_\eta^\perp
\right|00\right\rangle
-2p_0.
\end{aligned}
\end{equation}
The last equality uses
\(\langle00|\Pi_\eta^\perp|00\rangle=1-p_0\), and hence
\(2(1-p_0)-2=-2p_0\).  Comparing \eqref{eq:N_00_expansion} with the
preceding single-vector PPT quadratic form and using
\eqref{eq:cross_sector_budget_decomposition} gives
\begin{equation}\label{eq:N_00_ppt_lower_bound}
\begin{aligned}
6\langle00|\Pi_\eta^\perp N\Pi_\eta^\perp|00\rangle-2
&\ge
\frac{2}{\alpha\beta}
\left\langle a,00\left|
\bigl(S_{\mathcal M_{AB}R}\bigr)^{T_{\mathcal M_B}T_{B_{\mathrm r}}}
\right|a,00\right\rangle\\
&\ge0.
\end{aligned}
\end{equation}
The first inequality uses the contraction-sector lower bound on \(Q\), after
compression by \(\Pi_\eta^\perp\); the second uses
\(\bigl(S_{\mathcal M_{AB}R}\bigr)^{
T_{\mathcal M_B}T_{B_{\mathrm r}}}\succeq0\).
Since \(N\preceq I_R\), \eqref{eq:N_00_ppt_lower_bound} gives
\begin{equation}\label{eq:N_00_interval}
\frac13
\le
\langle00|\Pi_\eta^\perp N\Pi_\eta^\perp|00\rangle
\le
\langle00|\Pi_\eta^\perp|00\rangle
=1-p_0.
\end{equation}

If \(r=1\), then \(p_0=1\), contradicting
\eqref{eq:N_00_interval}.  Hence the hypotheses force \(r\ge2\), so the basis
vector \(\ket1\) and the Bell states used below are well-defined.

\paragraph{Pairwise PPT test.}
Introduce the standard Bell states
\[
\ket{\Psi^\pm}_R
:=
\frac{\ket{01}_R\pm\ket{10}_R}{\sqrt2},
\qquad
\ket{\phi_\pm}_{\mathcal M_{AB}R}
:=
\frac{\ket{+-}_{\mathcal M_{AB}}\pm
\ket{-+}_{\mathcal M_{AB}}}{\sqrt2}
\otimes\ket{\Psi^\pm}_R.
\]
The trace-complement relation needed below is
\eqref{eq:success_failure_sector_complement}.  The required resource
partial-transpose matrix elements are, for every \(X\in\mathcal B(R)\),
\begin{equation}\label{eq:resource_partial_transpose_matrix_elements}
\begin{aligned}
\langle01|X^{T_{B_{\mathrm r}}}|01\rangle&=\langle01|X|01\rangle,
&\langle10|X^{T_{B_{\mathrm r}}}|10\rangle&=\langle10|X|10\rangle,\\
\langle01|X^{T_{B_{\mathrm r}}}|10\rangle&=\langle00|X|11\rangle,
&\langle10|X^{T_{B_{\mathrm r}}}|01\rangle&=\langle11|X|00\rangle.
\end{aligned}
\end{equation}
Moreover, writing \(\sigma=\pm1\), with
\(\ket{\Psi^\sigma}_R=\ket{\Psi^+}_R\) for \(\sigma=+1\) and
\(\ket{\Psi^\sigma}_R=\ket{\Psi^-}_R\) for \(\sigma=-1\), direct
expansion gives
\begin{equation}\label{eq:bell_quadratic_form_expansion}
\begin{aligned}
\langle\Psi^\sigma|X|\Psi^\sigma\rangle
={}&\frac12\langle01|X|01\rangle
+\frac12\langle10|X|10\rangle\\
&+\frac{\sigma}{2}\langle01|X|10\rangle
+\frac{\sigma}{2}\langle10|X|01\rangle.
\end{aligned}
\end{equation}
Combining this expansion with
\eqref{eq:resource_partial_transpose_matrix_elements} gives
\begin{equation}\label{eq:bell_resource_partial_transpose_sign_sums}
\begin{aligned}
\sum_{\sigma=\pm1}
\langle\Psi^\sigma|X^{T_{B_{\mathrm r}}}|\Psi^\sigma\rangle
&=\langle01|X|01\rangle+\langle10|X|10\rangle,\\
\sum_{\sigma=\pm1}\sigma
\langle\Psi^\sigma|X^{T_{B_{\mathrm r}}}|\Psi^\sigma\rangle
&=\langle00|X|11\rangle+\langle11|X|00\rangle.
\end{aligned}
\end{equation}
By \eqref{eq:forced_offdiagonal_block_decompositions}, the cross-block sum
appearing below satisfies
\begin{equation}\label{eq:pairwise_cross_block_coherence_decomposition}
\begin{aligned}
&\left\langle00\left|
[S_{\mathcal M_{AB}R}]_{++,--}
+[S_{\mathcal M_{AB}R}]_{--,++}
\right|11\right\rangle\\
&\quad=-2\alpha\beta\sqrt{p_0p_1}
+\left\langle00\left|\Pi_\eta^\perp
\left(
[S_{\mathcal M_{AB}R}]_{++,--}
+[S_{\mathcal M_{AB}R}]_{--,++}
\right)
\Pi_\eta^\perp\right|11\right\rangle.
\end{aligned}
\end{equation}
Substituting \(\ket{\phi_\sigma}\) into
\eqref{eq:resource_block_partial_transpose_expansion} gives, for each
\(\sigma\in\{+1,-1\}\),
\begin{equation}\label{eq:pairwise_ppt_single_sign_quadratic_form}
\begin{aligned}
2\langle\phi_\sigma|
\bigl(S_{\mathcal M_{AB}R}\bigr)^{T_{\mathcal M_B}T_{B_{\mathrm r}}}
|\phi_\sigma\rangle
={}&
\langle\Psi^\sigma|
\left(
[S_{\mathcal M_{AB}R}]_{+-,+-}
+
[S_{\mathcal M_{AB}R}]_{-+,-+}
\right)^{T_{B_{\mathrm r}}}
|\Psi^\sigma\rangle\\
&+\sigma\langle\Psi^\sigma|
\left(
[S_{\mathcal M_{AB}R}]_{++,--}
+
[S_{\mathcal M_{AB}R}]_{--,++}
\right)^{T_{B_{\mathrm r}}}
|\Psi^\sigma\rangle.
\end{aligned}
\end{equation}
Sum \eqref{eq:pairwise_ppt_single_sign_quadratic_form} over \(\sigma\), and
apply the first and second identities in
\eqref{eq:bell_resource_partial_transpose_sign_sums} to the
\((+-,+-)/(-+,-+)\) and \((++,--)/(--,++)\) block sums, respectively.
The latter block sum is Hermitian by
\eqref{eq:offdiagonal_block_adjoint}; hence its two cross matrix elements
combine into twice the real part.  Substituting
\eqref{eq:pairwise_cross_block_coherence_decomposition} then gives
\begin{equation}\label{eq:pairwise_success_sum_expansion}
\begin{aligned}
&\frac{2}{\alpha\beta}\sum_{\sigma=\pm1}
\langle\phi_\sigma|
\bigl(S_{\mathcal M_{AB}R}\bigr)^{T_{\mathcal M_B}T_{B_{\mathrm r}}}
|\phi_\sigma\rangle\\
&\quad=
\frac1{\alpha\beta}
\left\langle01\left|
[S_{\mathcal M_{AB}R}]_{+-,+-}
+
[S_{\mathcal M_{AB}R}]_{-+,-+}
\right|01\right\rangle\\
&\qquad+
\frac1{\alpha\beta}
\left\langle10\left|
[S_{\mathcal M_{AB}R}]_{+-,+-}
+
[S_{\mathcal M_{AB}R}]_{-+,-+}
\right|10\right\rangle\\
&\qquad-4\sqrt{p_0p_1}
+\frac{2}{\alpha\beta}\operatorname{Re}
\left\langle00\left|\Pi_\eta^\perp
\left(
[S_{\mathcal M_{AB}R}]_{++,--}
+
[S_{\mathcal M_{AB}R}]_{--,++}
\right)
\Pi_\eta^\perp\right|11\right\rangle.
\end{aligned}
\end{equation}
For each \(\tau\in\{+-, -+\}\), take \(T_{B_{\mathrm r}}\) of
\eqref{eq:success_failure_sector_complement} and sandwich the result between
\(\ket{\Psi^+}_R\).  The identity term contributes \(2\), because the
quadratic form is multiplied by \(2\) and \(\|\ket{\Psi^+}_R\|=1\).  Applying
\eqref{eq:bell_quadratic_form_expansion} with \(\sigma=+1\), together with
\eqref{eq:resource_partial_transpose_matrix_elements} and the Hermiticity of
\(\left[\operatorname{Tr}_{A'B'}S\right]_\tau\), gives
\begin{equation}\label{eq:failure_sector_bell_quadratic_form}
\begin{aligned}
2\left\langle\Psi^+\left|
\left(
\left[\operatorname{Tr}_{A'B'}F\right]_\tau
\right)^{T_{B_{\mathrm r}}}
\right|\Psi^+\right\rangle
={}&2-
\left\langle01\left|
\left[\operatorname{Tr}_{A'B'}S\right]_\tau
\right|01\right\rangle\\
&-
\left\langle10\left|
\left[\operatorname{Tr}_{A'B'}S\right]_\tau
\right|10\right\rangle\\
&-2\operatorname{Re}
\left\langle00\left|
\left[\operatorname{Tr}_{A'B'}S\right]_\tau
\right|11\right\rangle.
\end{aligned}
\end{equation}
Summing \eqref{eq:failure_sector_bell_quadratic_form} over \(\tau\) and using
\eqref{eq:cross_sector_budget_support} gives
\begin{equation}\label{eq:pairwise_failure_sum_expansion}
\begin{aligned}
&2\sum_{\tau\in\{+-, -+\}}
\left\langle\Psi^+\left|
\left(
\left[\operatorname{Tr}_{A'B'}F\right]_\tau
\right)^{T_{B_{\mathrm r}}}
\right|\Psi^+\right\rangle\\
&\quad=4-\langle01|Q|01\rangle-\langle10|Q|10\rangle
-2\operatorname{Re}
\langle00|\Pi_\eta^\perp Q\Pi_\eta^\perp|11\rangle.
\end{aligned}
\end{equation}
The three quantities forming the nonnegative combination below are the success
quadratic-form sum, the residual-sector quadratic-form sum, and the failure
quadratic-form sum.  They are nonnegative by
\eqref{eq:success_contraction_block_ppt_positive},
\eqref{eq:residual_output_trace_sector_positive}, and
\eqref{eq:mixed_schur_trace_ppt_order} applied to \(K=F\), respectively.  Add
the expansions \eqref{eq:pairwise_success_sum_expansion} and
\eqref{eq:pairwise_failure_sum_expansion}, together with the residual-sector
sum displayed below.  For each \(x\in\{01,10\}\),
\eqref{eq:cross_sector_budget_decomposition} combines the contraction-sector
diagonal term from \eqref{eq:pairwise_success_sum_expansion} with this residual
term into \(\langle x|Q|x\rangle\).  It therefore cancels the corresponding
negative term in \eqref{eq:pairwise_failure_sum_expansion}.  Thus
\begin{equation}\label{eq:pairwise_nonnegative_combination}
\begin{aligned}
0
&\le
\frac{2}{\alpha\beta}\sum_{\sigma=\pm1}
\langle\phi_\sigma|
\bigl(S_{\mathcal M_{AB}R}\bigr)^{T_{\mathcal M_B}T_{B_{\mathrm r}}}
|\phi_\sigma\rangle\\
&\qquad+
\sum_{x\in\{01,10\}}
\left\langle x\left|
\left[\operatorname{Tr}_{A'B'}S_\perp\right]_{+-}
+
\left[\operatorname{Tr}_{A'B'}S_\perp\right]_{-+}
\right|x\right\rangle\\
&\qquad
+2\sum_{\tau\in\{+-, -+\}}
\left\langle\Psi^+\left|
\left(
\left[\operatorname{Tr}_{A'B'}F\right]_\tau
\right)^{T_{B_{\mathrm r}}}
\right|\Psi^+\right\rangle\\
&=4-4\sqrt{p_0p_1}
+2\operatorname{Re}
\langle00|\Pi_\eta^\perp
\Bigl(
\frac1{\alpha\beta}\Pi_\eta^\perp
\left(
[S_{\mathcal M_{AB}R}]_{++,--}
+
[S_{\mathcal M_{AB}R}]_{--,++}
\right)
\Pi_\eta^\perp-Q
\Bigr)
\Pi_\eta^\perp|11\rangle.
\end{aligned}
\end{equation}
Rearranging \eqref{eq:auxiliary_effect_definition} gives
\begin{equation}\label{eq:pairwise_auxiliary_effect_rearrangement}
\frac1{\alpha\beta}\Pi_\eta^\perp
\left(
[S_{\mathcal M_{AB}R}]_{++,--}
+[S_{\mathcal M_{AB}R}]_{--,++}
\right)
\Pi_\eta^\perp-Q
=-(6N-2I_R).
\end{equation}
Moreover, the chosen phase convention gives
\begin{equation}\label{eq:pairwise_projected_basis_overlap}
\langle00|\Pi_\eta^\perp|11\rangle=\sqrt{p_0p_1}.
\end{equation}
Substituting \eqref{eq:pairwise_auxiliary_effect_rearrangement} into
\eqref{eq:pairwise_nonnegative_combination} and then using
\eqref{eq:pairwise_projected_basis_overlap} yields
\[
\begin{aligned}
0
&\le4-4\sqrt{p_0p_1}
-2\operatorname{Re}
\langle00|\Pi_\eta^\perp(6N-2I_R)
\Pi_\eta^\perp|11\rangle\\
&=4-12\operatorname{Re}
\langle00|\Pi_\eta^\perp N\Pi_\eta^\perp|11\rangle.
\end{aligned}
\]
Hence
\begin{equation}\label{eq:N_pairwise_budget}
\operatorname{Re}
\langle00|\Pi_\eta^\perp N\Pi_\eta^\perp|11\rangle
\le\frac13.
\end{equation}
\end{proof}

\section{Effect Constraints and Schmidt-Spectrum Bounds}
\label{app:abstract_effect_schmidt}

This section converts the effect constraints derived above into restrictions on
the Schmidt spectrum of a pure resource state.

Let \(R:=A_{\mathrm r}B_{\mathrm r}\) be a bipartite resource space
with fixed orthonormal bases \(\{\ket j_{A_{\mathrm r}}\}\) and
\(\{\ket j_{B_{\mathrm r}}\}\), and abbreviate
\[
\ket{jj}:=\ket j_{A_{\mathrm r}}\otimes\ket j_{B_{\mathrm r}}.
\]
Let \(|\eta\rangle\in R\) be a finite-Schmidt-rank normalized pure state.
After local phase choices, write
\[
|\eta\rangle
=
\sqrt{p_0}|00\rangle-\sum_{j\ge1}\sqrt{p_j}|jj\rangle,
\qquad
p_0\ge p_1\ge p_2\ge\cdots,
\qquad
\sum_{j\ge0}p_j=1.
\]
Set
\[
\Pi_\eta^\perp:=I_R-|\eta\rangle\langle\eta|.
\]
For \(1/2\le x\le2/3\), define
\begin{equation}\label{eq:boundary_spectrum_definition}
z_\star(x):=
\frac{1-x-\sqrt{x(2-3x)}}2.
\end{equation}
On this interval, \(x(2-3x)\ge0\), and
\[
(1-x)^2-x(2-3x)=(2x-1)^2\ge0.
\]
Since \(1-x\ge0\), it follows that \(z_\star(x)\ge0\).  Moreover,
\((1-x-z_\star(x))-z_\star(x)=\sqrt{x(2-3x)}\ge0\), while
\(x-(1-x-z_\star(x))=2x+z_\star(x)-1\ge0\).  Hence
\(\bigl(x,1-x-z_\star(x),z_\star(x)\bigr)\) is a nonincreasing probability
vector.
For \(0\le x\le2/3\), define the comparison spectrum
\begin{equation}\label{eq:comparison_spectrum_definition}
\sigma_\star(x)
:=
\begin{cases}
(1/2,1/2,0,\ldots),&0\le x\le1/2,\\[1mm]
\bigl(x,1-x-z_\star(x),z_\star(x),0,\ldots\bigr),
&1/2<x\le2/3.
\end{cases}
\end{equation}
For two nonincreasing probability vectors \(a=(a_j)\) and \(b=(b_j)\),
padded by zeros to the same length, define
\[
a\prec b
\quad\Longleftrightarrow\quad
\sum_{j=0}^{k}a_j\le\sum_{j=0}^{k}b_j
\ \text{for every }k,
\qquad
\sum_j a_j=\sum_j b_j.
\]
For any probability vector \(a=(a_j)\), define its Shannon and collision
entropies by
\[
H(a):=-\sum_j a_j\log_2 a_j,
\qquad
H_2(a):=-\log_2\!\sum_j a_j^2.
\]
For a pure state \(\ket\eta\) with Schmidt vector \((p_j)_{j\ge0}\), this gives
\(E(\eta)=H((p_j)_{j\ge0})\); we abbreviate
\(H_2(\eta):=H_2((p_j)_{j\ge0})\).
\begin{lemma}[Effect constraints imply Schmidt majorization]
\label{lem:universal_schmidt_spectrum}
Suppose there exists an effect \(N\in\mathcal B(R)\) such that
\begin{equation}\label{eq:abstract_effect_hypotheses}
\begin{aligned}
0&\preceq N\preceq I_R,\\
\langle00|\Pi_\eta^\perp N\Pi_\eta^\perp|00\rangle
&\ge\frac13,\\
\operatorname{Re}
\langle00|\Pi_\eta^\perp N\Pi_\eta^\perp|11\rangle
&\le\frac13.
\end{aligned}
\end{equation}
Then
\begin{equation}\label{eq:p0_two_thirds}
p_0\le\frac23,
\end{equation}
and the ordered Schmidt vector obeys
\begin{equation}\label{eq:boundary_majorization}
(p_j)_{j\ge0}\prec\sigma_\star(p_0).
\end{equation}
If \(p_0>1/2\), then
\begin{equation}\label{eq:quadratic_tail_lower_bound}
\sum_{j\ge2}p_j
\ge z_\star(p_0)
>2\left(p_0-\frac12\right)^2.
\end{equation}
\end{lemma}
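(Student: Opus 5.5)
The plan is to reduce the operator hypotheses to inequalities on a $2\times2$ compression. Set $|u\rangle:=\Pi_\eta^\perp|00\rangle$ and $|w\rangle:=\Pi_\eta^\perp|11\rangle$. The phase convention gives their Gram data:
\[
\langle u|u\rangle=1-p_0,\qquad \langle w|w\rangle=1-p_1,\qquad \langle u|w\rangle=-\langle00|\eta\rangle\langle\eta|11\rangle=\sqrt{p_0p_1}.
\]
This is the same overlap as \eqref{eq:pairwise_projected_basis_overlap}. Write $a:=\langle u|N|u\rangle$, $c:=\langle w|N|w\rangle$ and $b:=\operatorname{Re}\langle u|N|w\rangle$.

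From $0\preceq N\preceq I_R$, both $N$ and $I_R-N$ are positive. Their compressions to $\operatorname{span}\{u,w\}$ are therefore positive and bounded by the Gram matrix. The first consequence is immediate:
\[
\tfrac13\le a\le\langle u|u\rangle=1-p_0,
\]
which is exactly \eqref{eq:p0_two_thirds}.

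For the coherence term, I will apply Cauchy--Schwarz to the positive operator $I_R-N$. Since $\langle u|(I-N)|w\rangle=\sqrt{p_0p_1}-\langle u|N|w\rangle$, this gives
\[
\sqrt{p_0p_1}-b\le\bigl|\langle u|(I-N)|w\rangle\bigr|\le\sqrt{(1-p_0-a)(1-p_1-c)}\le\sqrt{\bigl(\tfrac23-p_0\bigr)(1-p_1)}.
\]
Combining this with the hypothesis $b\le1/3$ yields a scalar constraint on the first two Schmidt weights:
\[
\sqrt{p_0p_1}\le\frac13+\sqrt{\Bigl(\frac23-p_0\Bigr)(1-p_1)}.\tag{$\ast$}
\]
If this turns out too weak, I will first try to sharpen it by keeping $c$ and using $ac\ge|\langle u|N|w\rangle|^2$ from $N\succeq0$. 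I expect, however, that the crude form already yields the claimed boundary.

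Next I convert $(\ast)$ into majorization. For $p_0\le1/2$, the comparison spectrum is $(1/2,1/2,0,\ldots)$. The required partial-sum inequalities are then $p_0\le1/2$ and $p_0+p_1\le1$, both trivially true, so nothing needs to be shown. For $1/2<p_0=x\le2/3$, every partial sum with $k\ge2$ of $\sigma_\star(x)$ already equals $1$, and the $k=0$ sums coincide. The only real condition is therefore
\[
p_1\le 1-x-z_\star(x).
\]
In $(\ast)$ the left side is increasing in $p_1$ and the right side is decreasing, so the admissible $p_1$ form an interval $[0,p_1^{\max}(x)]$. It thus suffices to check that $p_1=1-x-z_\star(x)$ makes $(\ast)$ an equality.

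This check is the main computational obstacle. Squaring twice, eliminating $p_1$ through $1-p_1=x+z$, and simplifying, I expect a quadratic $z^2-(1-x)z+\bigl((1-x)^2-x(2-3x)\bigr)/4=0$. Its smaller root is $z_\star(x)$, and the sign analysis uses $(1-x)^2-x(2-3x)=(2x-1)^2$. I also have to track the branch of the square roots carefully, verifying that the relevant quantities are nonnegative on $[1/2,2/3]$, so that the spurious root is excluded.

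Given the bound on $p_1$, the tail estimate follows at once. We have $\sum_{j\ge2}p_j=1-p_0-p_1\ge z_\star(p_0)$. For the strict comparison $z_\star(x)>2(x-\tfrac12)^2$ on $(1/2,2/3]$, set $t=x-1/2\in(0,1/6]$. The inequality then reads
\[
\tfrac12-t-4t^2>\sqrt{(\tfrac12+t)(\tfrac12-3t)}.
\]
Both sides are positive on this range, so squaring reduces it to a polynomial inequality in $t$, which I will verify directly.
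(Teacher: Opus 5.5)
\textbf{There is a gap in the central step.} Inequality $(\ast)$ is strictly weaker than the hypotheses and does not imply $p_1\le 1-x-z_\star(x)$. Your planned check that the boundary spectrum saturates $(\ast)$ therefore fails. At $x=3/5$, the boundary value $p_1=1-x-z_\star(x)\approx0.373$ gives about $0.473$ on the left of $(\ast)$ and about $0.538$ on the right.

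A sharper counterexample: the spectrum $(3/5,2/5,0,\ldots)$ satisfies $(\ast)$, since $\sqrt6/5\approx0.490\le\frac13+\frac15$. Yet $\sum_{j\ge2}p_j=0<z_\star(3/5)=(2-\sqrt3)/10$. In fact, whenever $1/2<p_0\le0.63$, $(\ast)$ holds for every admissible $p_1$.

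The information is lost when you bound $1-p_1-c\le 1-p_1$. That step uses only $I_R-N\succeq0$ and discards how $N\succeq0$ ties $c$ and $\langle u|N|w\rangle$ to $a$. This coupling is exactly what rules out the example. When $\sum_{j\ge2}p_j=0$, one has $|w\rangle=\sqrt{p_0/p_1}\,|u\rangle$, hence $b=\sqrt{p_0/p_1}\,a\ge\frac13\sqrt{3/2}>\frac13$.

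\textbf{Your fallback does work, once it is carried out with both $a$ and $c$ retained.} Put $g:=\sqrt{p_0p_1}$ and $T:=\sum_{j\ge2}p_j$. Positivity of $N$ and of $I_R-N$ gives $b^2\le ac$ and $(g-b)^2\le(1-p_0-a)(1-p_1-c)$. Eliminate $c$ (the case $a=1-p_0$ directly forces $b=g$), set $\mu:=a/(1-p_0)$, and use $(1-p_0)(1-p_1)-p_0p_1=T$. This yields $(b-\mu g)^2\le\mu(1-\mu)T$, i.e.\ $b\ge\mu g-\sqrt{\mu(1-\mu)T}$.

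This is exactly the bound the paper obtains from $N^2\preceq N$ together with a Gram--Schmidt decomposition of $\Pi_\eta^\perp|11\rangle$ against $\Pi_\eta^\perp|00\rangle$; the residual there has squared norm $T/(1-p_0)$. You then still need three steps:
\begin{itemize}
\item Show that the right-hand side increases in $\mu$ on $[1/(3(1-p_0)),1]$; there $\mu>2/3$ because $p_0>1/2$.
\item Evaluate at the left endpoint to get $\sqrt{p_0p_1}\le 1-p_0+\sqrt{(2-3p_0)T}$.
\item Square. This reduces to $p_0-\frac12\le T+\sqrt{(2-3p_0)T}$, i.e.\ $\sqrt T\ge(\sqrt{p_0}-\sqrt{2-3p_0})/2$, i.e.\ $T\ge z_\star(p_0)$.
\end{itemize}
This is the quadratic you anticipated, but it comes from this sharper inequality, not from $(\ast)$.

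The remaining parts of your plan are sound: $p_0\le2/3$, and the reduction of the majorization to the single bound on $p_1$. Your polynomial proof of $z_\star(x)>2(x-\frac12)^2$ also works (the squared difference is $8t^3+16t^4>0$). It is a valid alternative to the paper's use of $z_\star(1-x-z_\star)=(x-\frac12)^2$.
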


\begin{proof}
Directly from the phase convention and the definition of
\(\Pi_\eta^\perp\),
\begin{equation}\label{eq:abstract_projected_gram_data}
\begin{aligned}
\langle00|\Pi_\eta^\perp|00\rangle&=1-p_0,\\
\langle11|\Pi_\eta^\perp|11\rangle&=1-p_1,\\
\langle00|\Pi_\eta^\perp|11\rangle&=\sqrt{p_0p_1}.
\end{aligned}
\end{equation}
The effect assumptions therefore give
\[
\frac13
\le
\langle00|\Pi_\eta^\perp N\Pi_\eta^\perp|00\rangle
\le
\langle00|\Pi_\eta^\perp|00\rangle
=1-p_0.
\]
If \(p_0\le1/2\), sortedness gives
\[
(p_j)_{j\ge0}\prec(1/2,1/2,0,\ldots)=\sigma_\star(p_0).
\]
Assume \(p_0>1/2\).

The component of \(\Pi_\eta^\perp|11\rangle\) orthogonal to
\(\Pi_\eta^\perp|00\rangle\) has squared norm
\begin{equation}\label{eq:explicit_projected_residual_norm}
\begin{aligned}
&\left\|
\Pi_\eta^\perp|11\rangle
-
\frac{\sqrt{p_0p_1}}{1-p_0}\Pi_\eta^\perp|00\rangle
\right\|^2\\
&\qquad=
1-p_1-\frac{p_0p_1}{1-p_0}
=
\frac{\sum_{j\ge2}p_j}{1-p_0}.
\end{aligned}
\end{equation}
Set
\[
\mu
:=
\frac{
\langle00|\Pi_\eta^\perp N\Pi_\eta^\perp|00\rangle
}{1-p_0}.
\]
The effect assumptions and \(p_0>1/2\) give
\[
\frac{1}{3(1-p_0)}\le\mu\le1,
\qquad
\frac{1}{3(1-p_0)}\ge\frac23.
\]
Since \(N^2\preceq N\),
\begin{align}
&\left\|
\left(
N-\mu I_R
\right)
\Pi_\eta^\perp|00\rangle
\right\|^2\notag\\
&\quad=
\langle00|\Pi_\eta^\perp
(N-\mu I_R)^2
\Pi_\eta^\perp|00\rangle\notag\\
&\quad\le
\langle00|\Pi_\eta^\perp
(N-2\mu N+\mu^2 I_R)
\Pi_\eta^\perp|00\rangle\notag\\
&\quad=
\mu(1-\mu)(1-p_0).
\label{eq:explicit_effect_variance_bound}
\end{align}
The last equality uses
\begin{equation}\label{eq:explicit_effect_mean_identity}
\langle00|\Pi_\eta^\perp N\Pi_\eta^\perp|00\rangle
=\mu(1-p_0).
\end{equation}
Equation \eqref{eq:explicit_effect_mean_identity}, together with
\(\langle00|\Pi_\eta^\perp|00\rangle=1-p_0\), gives
\begin{equation}\label{eq:explicit_centered_effect_orthogonality}
\langle00|\Pi_\eta^\perp
(N-\mu I_R)\Pi_\eta^\perp|00\rangle=0.
\end{equation}
The corresponding Gram--Schmidt decomposition is
\begin{equation}\label{eq:explicit_projected_11_decomposition}
\begin{aligned}
\Pi_\eta^\perp|11\rangle
={}&\frac{\sqrt{p_0p_1}}{1-p_0}\Pi_\eta^\perp|00\rangle\\
&+\left(
\Pi_\eta^\perp|11\rangle
-\frac{\sqrt{p_0p_1}}{1-p_0}\Pi_\eta^\perp|00\rangle
\right).
\end{aligned}
\end{equation}
The parenthesized residual is orthogonal to
\(\Pi_\eta^\perp|00\rangle\), and its squared norm is given by
\eqref{eq:explicit_projected_residual_norm}.  Since \(N\) is Hermitian,
\eqref{eq:explicit_centered_effect_orthogonality} and
\eqref{eq:explicit_projected_11_decomposition} give
\[
\begin{aligned}
&\operatorname{Re}
\langle00|\Pi_\eta^\perp N\Pi_\eta^\perp|11\rangle\\
&\quad=
\mu\langle00|\Pi_\eta^\perp|11\rangle\\
&\qquad+
\operatorname{Re}\langle00|\Pi_\eta^\perp
(N-\mu I_R)\Pi_\eta^\perp|11\rangle\\
&\quad=
\mu\sqrt{p_0p_1}\\
&\qquad+
\operatorname{Re}\left\langle
(N-\mu I_R)\Pi_\eta^\perp|00\rangle,\,
\Pi_\eta^\perp|11\rangle
-\frac{\sqrt{p_0p_1}}{1-p_0}
\Pi_\eta^\perp|00\rangle
\right\rangle.
\end{aligned}
\]
Cauchy--Schwarz, followed by
\eqref{eq:explicit_effect_variance_bound} and
\eqref{eq:explicit_projected_residual_norm}, therefore gives
\begin{align}
&\operatorname{Re}
\langle00|\Pi_\eta^\perp N\Pi_\eta^\perp|11\rangle\notag\\
&\quad\ge \mu\sqrt{p_0p_1}\notag\\
&\qquad-
\sqrt{
\mu(1-\mu)
\sum_{j\ge2}p_j
}.
\label{eq:explicit_two_vector_effect_bound}
\end{align}
The right-hand side is increasing as \(\mu\) ranges from
\(1/[3(1-p_0)]\) to \(1\): the linear term increases, while
\(\mu(1-\mu)\) decreases throughout this interval because
\(\mu\ge1/[3(1-p_0)]\ge2/3\).  Combining
\eqref{eq:explicit_two_vector_effect_bound} with the assumed upper bound and
evaluating at the lower endpoint yields
\[
\frac13
\ge
\frac{\sqrt{p_0p_1}}{3(1-p_0)}
-
\frac{
\sqrt{(2-3p_0)\sum_{j\ge2}p_j}
}{3(1-p_0)}.
\]
Since \(p_1=1-p_0-\sum_{j\ge2}p_j\), multiplication by
\(3(1-p_0)>0\) gives
\begin{equation}\label{eq:sharp_tail_scalar_constraint}
\sqrt{
 p_0\left(1-p_0-\sum_{j\ge2}p_j\right)
}
\le
1-p_0+
\sqrt{(2-3p_0)\sum_{j\ge2}p_j}.
\end{equation}
Both sides are nonnegative.  Squaring and cancelling \(2(1-p_0)>0\) gives
\begin{align}
0
&\le
\sum_{j\ge2}p_j
+
\sqrt{(2-3p_0)\sum_{j\ge2}p_j}
-
\left(p_0-\frac12\right)\notag\\
&=
\left(
\sqrt{\sum_{j\ge2}p_j}
-
\frac{\sqrt{p_0}-\sqrt{2-3p_0}}2
\right)
\left(
\sqrt{\sum_{j\ge2}p_j}
+
\frac{\sqrt{p_0}+\sqrt{2-3p_0}}2
\right).
\label{eq:explicit_tail_factorization}
\end{align}
The second factor is positive, so
\[
\sum_{j\ge2}p_j
\ge
\left(
\frac{\sqrt{p_0}-\sqrt{2-3p_0}}2
\right)^2
=
z_\star(p_0).
\]
The boundary value satisfies
\begin{equation}\label{eq:zstar_boundary}
\begin{aligned}
z_\star(p_0)\bigl(1-p_0-z_\star(p_0)\bigr)
&=
\left(\frac{1-p_0-\sqrt{p_0(2-3p_0)}}2\right)
\left(\frac{1-p_0+\sqrt{p_0(2-3p_0)}}2\right)\\
&=
\frac{(1-p_0)^2-p_0(2-3p_0)}4\\
&=
\left(p_0-\frac12\right)^2.
\end{aligned}
\end{equation}
For \(p_0>1/2\), one has
\(1-p_0-z_\star(p_0)<1/2\); hence
\[
z_\star(p_0)
=
\frac{(p_0-1/2)^2}{1-p_0-z_\star(p_0)}
>
2\left(p_0-\frac12\right)^2.
\]
It remains to verify the majorization relation.  The first partial sums agree
because the first entry of \(\sigma_\star(p_0)\) is \(p_0\).  For the first two
entries,
\[
p_0+p_1
=
1-\sum_{j\ge2}p_j
\le
1-z_\star(p_0),
\]
which is the sum of the first two entries of \(\sigma_\star(p_0)\).  For
every subsequent partial sum, \((p_j)\) contributes at most its total mass
\(1\), while \(\sigma_\star(p_0)\) has already reached total mass \(1\) in its
first three entries.  This proves \eqref{eq:boundary_majorization}.
\end{proof}

\begin{corollary}[Power-sum and entropy consequences]
\label{cor:ppt_spectral_consequences}
Under the hypotheses of Lemma~\ref{lem:universal_schmidt_spectrum},
\begin{equation}\label{eq:schmidt_purity_half}
\sum_{j\ge0}p_j^2\le\frac12,
\end{equation}
and
\begin{equation}\label{eq:all_atom_amplitude_budget}
\sum_{j\ge1}\sqrt{p_j}\ge\sqrt{p_0}.
\end{equation}
Moreover,
\begin{equation}\label{eq:schmidt_entropy_comparison}
H((p_j)_{j\ge0})
\ge H\!\left(\sigma_\star(p_0)\right),
\end{equation}
with equality only when the two ordered probability vectors coincide.
\end{corollary}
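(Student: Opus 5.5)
The whole corollary should follow from the majorization $(p_j)_{j\ge0}\prec\sigma_\star(p_0)$ of Lemma~\ref{lem:universal_schmidt_spectrum}, combined with Schur-concavity arguments. Since $p_0\le 2/3$, I would treat the two regimes of $\sigma_\star$ separately throughout.

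For the power-sum bound \eqref{eq:schmidt_purity_half}, I will use that $a\mapsto\sum_j a_j^2$ is Schur-convex. Majorization then gives
\[
\sum_j p_j^2\le\sum_j\sigma_\star(p_0)_j^2 .
\]
For $p_0\le1/2$ the right side is exactly $1/2$. For $1/2<p_0\le2/3$, write $x=p_0$ and $z=z_\star(x)$; the right side is $x^2+(1-x-z)^2+z^2$. Using the identity $z(1-x-z)=(x-1/2)^2$ from \eqref{eq:zstar_boundary}, this equals
\[
x^2+(1-x)^2-2z(1-x-z)=x^2+(1-x)^2-2(x-\tfrac12)^2=\tfrac12 .
\]
So $\sigma_\star(x)$ always has purity exactly $1/2$, which gives the bound.

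For the amplitude budget \eqref{eq:all_atom_amplitude_budget}, the cleanest route is probably to bypass majorization. In the proof of Lemma~\ref{lem:universal_schmidt_spectrum} we already have \eqref{eq:sharp_tail_scalar_constraint}, and for $p_0\le1/2$ the claim is trivial since $\sqrt{p_1}\ge\sqrt{p_0}$ would need $p_1=p_0$. So I would instead argue via concavity. The function $a\mapsto\sum_{j\ge1}\sqrt{a_j}$ with $a_0$ fixed is Schur-concave on the tail. Because the first entries of $(p_j)$ and $\sigma_\star(p_0)$ agree when $p_0>1/2$, the tails satisfy the same majorization, so
\[
\sum_{j\ge1}\sqrt{p_j}\ge\sqrt{1-x-z}+\sqrt z .
\]
Squaring the right side gives $1-x+2\sqrt{z(1-x-z)}=1-x+2(x-\tfrac12)=x$, so it equals $\sqrt{p_0}$. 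For $p_0\le1/2$ I would use $\sum_{j\ge1}\sqrt{p_j}\ge\sqrt{\sum_{j\ge1}p_j}=\sqrt{1-p_0}\ge\sqrt{p_0}$.

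For the entropy comparison \eqref{eq:schmidt_entropy_comparison}, Shannon entropy is Schur-concave, so majorization gives the inequality. The equality case needs strict Schur-concavity: $H$ is strictly Schur-concave because $-t\log t$ is strictly concave, so $a\prec b$ with $H(a)=H(b)$ forces $a=b$ after sorting. I expect this equality clause to be the main obstacle. The standard proof represents $a=Db$ with $D$ doubly stochastic and applies Jensen termwise, and strict concavity then forces each row of $D$ to average equal values. I will invoke the Karamata/strict Schur-concavity characterization for finite-support vectors padded to a common length; finite Schmidt rank guarantees that length is finite.
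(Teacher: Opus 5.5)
Your proposal is correct and is essentially the paper's argument: majorization by $\sigma_\star(p_0)$ plus Karamata (Schur-convexity or Schur-concavity) for $t^2$, $\sqrt t$ and $-t\log_2 t$, with $\sum_j\sigma_{\star,j}(p_0)^2=\tfrac12$ and $\sqrt{1-p_0-z_\star(p_0)}+\sqrt{z_\star(p_0)}=\sqrt{p_0}$ both obtained from $z_\star(p_0)\bigl(1-p_0-z_\star(p_0)\bigr)=(p_0-\tfrac12)^2$, and strict concavity of $-t\log_2 t$ for the equality case. The only deviation is for $p_0\le 1/2$, where you use the subadditivity bound $\sum_{j\ge1}\sqrt{p_j}\ge\sqrt{1-p_0}\ge\sqrt{p_0}$ in place of the paper's Karamata bound $\sqrt2-\sqrt{p_0}$ (both suffice), and the garbled aside about $\sqrt{p_1}\ge\sqrt{p_0}$ just before it is not an argument and should be deleted.
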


\begin{proof}
For \(p_0\le1/2\), \eqref{eq:comparison_spectrum_definition} gives
\(\sum_j\sigma_{\star,j}(p_0)^2=2(1/2)^2=1/2\).  For \(p_0>1/2\), the last
two nonzero entries of \(\sigma_\star(p_0)\) sum to \(1-p_0\), while their
product is \((p_0-1/2)^2\) by \eqref{eq:zstar_boundary}.  Hence
\[
\begin{aligned}
p_0^2
+\bigl(1-p_0-z_\star(p_0)\bigr)^2
+z_\star(p_0)^2
&=p_0^2+(1-p_0)^2
-2z_\star(p_0)\bigl(1-p_0-z_\star(p_0)\bigr)\\
&=p_0^2+(1-p_0)^2
-2\left(p_0-\frac12\right)^2\\
&=\frac12.
\end{aligned}
\]
Thus
\begin{equation}\label{eq:boundary_purity_half}
\sum_j\sigma_{\star,j}(p_0)^2=\frac12.
\end{equation}
Since \(t\mapsto t^2\) is convex, the majorization relation
\eqref{eq:boundary_majorization} and Karamata's inequality give
\(\sum_jp_j^2\le\sum_j\sigma_{\star,j}(p_0)^2=1/2\), which is
\eqref{eq:schmidt_purity_half}.  Applying Karamata instead to the concave
function \(t\mapsto\sqrt t\) gives
\[
\sum_{j\ge1}\sqrt{p_j}
\ge
\begin{cases}
\sqrt2-\sqrt{p_0}\ge\sqrt{p_0},&p_0\le1/2,\\[2mm]
\sqrt{1-p_0-z_\star(p_0)}+\sqrt{z_\star(p_0)}=\sqrt{p_0},
&p_0>1/2,
\end{cases}
\]
To verify the last equality, square its left-hand side and use
\eqref{eq:zstar_boundary}.  Finally, Karamata's inequality applied to the
strictly concave function \(h(t)=-t\log_2t\) proves
\eqref{eq:schmidt_entropy_comparison}; its strict equality condition gives the
stated equality case.
\end{proof}

\section{Application to the Purification Benchmark}
\label{app:purification_application}

We now apply the preceding Schmidt-spectrum bounds to resource states that
attain the purification benchmark.

\begin{theorem}[Majorization constraint for exact attainment]
\label{thm:gd_lower_bound}
Let \(d\ge2\), \(0<\gamma<1\), and let \(|\eta\rangle\) be any
finite-Schmidt-rank pure resource state.  If a PPT success/failure
instrument assisted by \(|\eta\rangle\) attains the optimal global CPTN value
\(G_\star(D,\gamma)\), then its ordered Schmidt vector satisfies
\[
p_0\le\frac23,
\qquad
(p_j)_{j\ge0}\prec\sigma_\star(p_0).
\]
\end{theorem}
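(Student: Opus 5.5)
Our plan is to chain the earlier results: reduce to a locally twirled pair, use exact optimality to fix the inserted success block, verify the two zero-sector conditions, and then apply Lemma~\ref{lem:ppt_effect_certificate} followed by Lemma~\ref{lem:universal_schmidt_spectrum}.

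\textbf{Reduction and fixed success block.} Given the attaining PPT instrument with Choi operators \(S,F\), we first apply Lemma~\ref{lem:local_twirl_reduction}. This replaces \(S,F\) by a locally twirled feasible pair for \eqref{eq:fixed_resource_ppt_sdp} with \(\omega_R=\ket\eta\!\bra\eta\), and the objective value \(G_\star(D,\gamma)\) is unchanged. We then pass to the Schmidt coordinates and phase convention of the resource-effect subsection. This is done by a local unitary on \(R\) that conjugates \(S,F\); it preserves positivity, PPT, the trace identity, the twirl invariance and the attained value. Lemma~\ref{lem:gd_exposed_standard_representative} now gives \(S^{\eta,\eta}=J_\star\) and the block identity \eqref{eq:gd_standard_block}.

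\textbf{Zero-sector conditions.} This is the only genuinely new step. Since \(\operatorname{Tr}_{A'B'}S^{\eta,\eta}=\operatorname{Tr}_{A'B'}J_\star=P_{\rm sym}\), resource insertion commutes with the output trace and with the sector expansion \eqref{eq:output_trace_sector_coefficient_definition}, because the sector projectors act only on the non-resource inputs. Hence
\[
P_{\rm sym}=\sum_{\tau\in\mathcal Z}P_\tau\,\langle\eta|[\operatorname{Tr}_{A'B'}S]_\tau|\eta\rangle .
\]
By \eqref{eq:global_input_parity_local_decomposition}, \(P_{\rm sym}\) has zero component on the \(+-\) and \(-+\) sectors. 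Compressing to \(P_{\rm sym}^A\otimes P_{\rm asym}^B\) and \(P_{\rm asym}^A\otimes P_{\rm sym}^B\), both nonzero for \(d\ge2\), therefore forces the two scalars to vanish. This is exactly \eqref{eq:gd_zero_input_sector_success}.

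\textbf{Conclusion.} All hypotheses of Lemma~\ref{lem:ppt_effect_certificate} now hold. It yields \(r\ge2\) and an effect \(N\) satisfying \eqref{eq:ppt_effect_certificate}, which are precisely the hypotheses \eqref{eq:abstract_effect_hypotheses}. Lemma~\ref{lem:universal_schmidt_spectrum} then gives \(p_0\le2/3\) and \((p_j)\prec\sigma_\star(p_0)\). The Schmidt weights are invariant under the local phase changes made above, so the conclusion holds for the original \(\ket\eta\).

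\textbf{Main obstacle.} The delicate point is checking that twirling and the phase change do not destroy attainment or the real-coordinate convention. Both are handled by the invariance arguments already recorded in Lemma~\ref{lem:local_twirl_reduction} and in the remarks preceding Lemma~\ref{lem:ppt_effect_certificate}.
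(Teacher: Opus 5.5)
Your proposal is correct and follows the paper's proof essentially step for step. The steps are: local twirling via Lemma~\ref{lem:local_twirl_reduction}; Lemma~\ref{lem:gd_exposed_standard_representative} to get $S^{\eta,\eta}=J_\star$; compressing $\operatorname{Tr}_{A'B'}J_\star=P_{\rm sym}$ onto the two nonzero mixed-parity sectors to obtain \eqref{eq:gd_zero_input_sector_success}; and then Lemmas~\ref{lem:ppt_effect_certificate} and~\ref{lem:universal_schmidt_spectrum}, with your explicit phase-convention bookkeeping matching the remark the paper makes just before Lemma~\ref{lem:ppt_effect_certificate}.
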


\begin{proof}
By Lemma~\ref{lem:local_twirl_reduction}, the pair may be assumed locally
twirled; retain the symbols \(S,F\) for the averaged branches.

Apply Lemma~\ref{lem:gd_exposed_standard_representative} to the twirled pair.
It gives \(S^{\eta,\eta}=J_\star\).  Compression commutes with the output trace
and with the input-sector Schur compression.  Therefore,
\[
\operatorname{Tr}_{A'B'}J_\star=P_{\rm sym}.
\]
By \eqref{eq:global_input_parity_local_decomposition} and orthogonality of the
local symmetric and antisymmetric projectors,
\[
\begin{aligned}
(P_{\rm sym}^A\otimes P_{\rm asym}^B)P_{\rm sym}&=0,\\
(P_{\rm asym}^A\otimes P_{\rm sym}^B)P_{\rm sym}&=0.
\end{aligned}
\]
Resource insertion commutes with the output trace.  Applying it to the sector
expansion \eqref{eq:output_trace_sector_coefficient_definition} gives
\[
\begin{aligned}
&(P_{\rm sym}^A\otimes P_{\rm asym}^B)
\operatorname{Tr}_{A'B'}S^{\eta,\eta}\\
&\quad=(P_{\rm sym}^A\otimes P_{\rm asym}^B)
\left\langle\eta\left|
\left[\operatorname{Tr}_{A'B'}S\right]_{+-}
\right|\eta\right\rangle,\\
&(P_{\rm asym}^A\otimes P_{\rm sym}^B)
\operatorname{Tr}_{A'B'}S^{\eta,\eta}\\
&\quad=(P_{\rm asym}^A\otimes P_{\rm sym}^B)
\left\langle\eta\left|
\left[\operatorname{Tr}_{A'B'}S\right]_{-+}
\right|\eta\right\rangle.
\end{aligned}
\]
Since \(S^{\eta,\eta}=J_\star\) and
\(\operatorname{Tr}_{A'B'}J_\star=P_{\rm sym}\), the preceding mixed-sector
identities make both left-hand sides zero.  The two projectors are nonzero, so
their scalar coefficients must vanish:
\[
\left\langle\eta\left|
\left[\operatorname{Tr}_{A'B'}S\right]_{+-}
\right|\eta\right\rangle
=
\left\langle\eta\left|
\left[\operatorname{Tr}_{A'B'}S\right]_{-+}
\right|\eta\right\rangle
=0.
\]
Lemma~\ref{lem:ppt_effect_certificate} supplies \(N\).  Its three displayed
effect inequalities are exactly the hypotheses of
Lemma~\ref{lem:universal_schmidt_spectrum}, which gives the claimed
largest-weight and majorization bounds.
\end{proof}

\begin{corollary}[One-ebit lower bound and equality case]
\label{cor:pure_entropy_obstruction}
Under the hypotheses of Theorem~\ref{thm:gd_lower_bound}, the collision entropy
\(H_2(\eta)\) and the Schmidt entropy \(E(\eta)\) are both at least one.  Equality
\(E(\eta)=1\) is possible only for the EPR Schmidt spectrum
\[
(p_0,p_1,p_2,\ldots)=\left(\frac12,\frac12,0,\ldots\right).
\]
\end{corollary}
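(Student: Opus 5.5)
Under the hypotheses of Theorem~\ref{thm:gd_lower_bound}, the Schmidt vector satisfies $p_0\le 2/3$ and $(p_j)_{j\ge0}\prec\sigma_\star(p_0)$, so Lemma~\ref{lem:universal_schmidt_spectrum} and Corollary~\ref{cor:ppt_spectral_consequences} apply. The collision-entropy bound is then immediate: \eqref{eq:schmidt_purity_half} gives $\sum_j p_j^2\le 1/2$, hence $H_2(\eta)=-\log_2\sum_jp_j^2\ge 1$. Since Shannon entropy dominates R\'enyi-2 entropy (by Jensen, $-\sum_j p_j\log_2 p_j\ge -\log_2\sum_j p_j^2$, using concavity of $\log$), we also get $E(\eta)\ge H_2(\eta)\ge 1$.

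For the equality case, assume $E(\eta)=1$. The chain $1=E(\eta)\ge H_2(\eta)\ge 1$ forces $H(p)=H_2(p)$. Equality in Jensen for the strictly concave $\log$ holds only when $p_j$ is constant on the support. So $(p_j)$ is uniform on its support, of size $2^{E}=2$, giving $(1/2,1/2,0,\ldots)$.

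As a cross-check, one can instead use \eqref{eq:schmidt_entropy_comparison}. If $p_0\le 1/2$, then $H(\sigma_\star)=1$, so equality forces $p=\sigma_\star=(1/2,1/2,0,\ldots)$. If $1/2<p_0\le 2/3$, one would need $H(\sigma_\star(p_0))>1$. That follows because $\sigma_\star(p_0)$ has three nonzero entries (since $z_\star>0$) and collision purity exactly $1/2$ by \eqref{eq:boundary_purity_half}, so $H>H_2=1$ strictly, as a non-uniform vector has strict Jensen. Either route works, but the Jensen route is cleaner.

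The only step needing care is the strict-equality condition in $H\ge H_2$: equality requires $p_j=\sum_k p_k^2$ for every $j$ in the support, i.e.\ a uniform distribution. With the entropy pinned at one, the support size is $2$. No other obstacle is expected, since all the substantive work lives in Theorem~\ref{thm:gd_lower_bound}.
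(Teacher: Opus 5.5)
Your proof is correct. The lower bound is the paper's argument: $\sum_j p_j^2\le 1/2$ from \eqref{eq:schmidt_purity_half}, then $E(\eta)\ge H_2(\eta)\ge 1$. One small point of attribution: Corollary~\ref{cor:ppt_spectral_consequences} assumes the existence of the effect $N$, not the majorization. The proof of Theorem~\ref{thm:gd_lower_bound} supplies $N$ through Lemma~\ref{lem:ppt_effect_certificate}. In any case the purity bound also follows directly from the stated majorization by Karamata's inequality and \eqref{eq:boundary_purity_half}, so nothing is lost.

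For the equality case you take a different and shorter route. From $1=E\ge H_2\ge 1$ you get equality in Jensen's inequality for the strictly concave logarithm. That forces a flat spectrum on its support, whose size must then be $2^{E}=2$.

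The paper instead argues through the majorization-based comparison \eqref{eq:schmidt_entropy_comparison}, which is essentially your cross-check. It splits on $p_0$. It rules out $p_0>1/2$ because $\sigma_\star(p_0)$ is non-uniform with collision entropy exactly one, so $H(\sigma_\star(p_0))>1$. For $p_0\le 1/2$ it then uses the equality condition of Karamata's inequality to force $(p_j)=(1/2,1/2,0,\ldots)$.

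Your route needs only the purity bound and the elementary equality case of $H\ge H_2$, and avoids both the case split and Karamata's equality condition. The paper's route costs more here, but it yields the sharper $p_0$-dependent bound $E(\eta)\ge H(\sigma_\star(p_0))$. The paper reuses that bound in its quantitative refinement for $p_0>1/2$.
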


\begin{proof}
The proof of Theorem~\ref{thm:gd_lower_bound} supplies the effect required by
Corollary~\ref{cor:ppt_spectral_consequences}, so
\(\sum_jp_j^2\le1/2\).
Hence \(H_2((p_j))\ge1\), and the Shannon--collision comparison gives
\(H((p_j))\ge H_2((p_j))\ge1\).  By definition, this proves
\(H_2(\eta)\ge1\) and \(E(\eta)\ge1\).

To characterize the case \(E(\eta)=1\),
\eqref{eq:schmidt_entropy_comparison} and
\eqref{eq:boundary_purity_half} give
\(H((p_j))\ge H(\sigma_\star(p_0))\ge1\).  If
\(p_0>1/2\), then \(\sigma_\star(p_0)\) is nonuniform, so its Shannon entropy
is strictly larger than its collision entropy.  Hence \(E(\eta)=1\) requires
\(p_0\le1/2\), for which
\(\sigma_\star(p_0)=(1/2,1/2,0,\ldots)\).  When \(E(\eta)=1\), equality
holds in \eqref{eq:schmidt_entropy_comparison}, whose equality condition
forces the ordered Schmidt vector to equal this spectrum.
\end{proof}

\begin{corollary}[Sufficiency when \(p_0\le1/2\)]\label{cor:p0_half_suffices}
Let \(\ket\eta\) be a finite-Schmidt-rank pure resource with ordered Schmidt
weights \((p_j)_{j\ge0}\).  If \(p_0\le1/2\), then \(\ket\eta\) suffices to
attain the global benchmark by LOCC.
\end{corollary}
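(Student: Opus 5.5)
The plan is to convert \(\ket\eta\) by LOCC into an exact EPR pair with probability one. Then Algorithm~\ref{alg:one_epr_purification} runs on the output, and Theorem~\ref{thm:main_global_cptn} guarantees that the benchmark \(G_\star(D,\gamma)\) is attained. The tool is Nielsen's majorization theorem for deterministic pure-state transformations: \(\ket\eta\to\ket{\Phi_2}\) is possible by finite-round LOCC (in fact one-way, one round) with certainty iff the Schmidt vector of \(\ket\eta\) is majorized by that of the target, i.e. \((p_j)_{j\ge0}\prec(1/2,1/2,0,\ldots)\).

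The first step is to check this majorization. Since the \(p_j\) are ordered with \(p_0\le1/2\) and \(\sum_jp_j=1\), the first partial sum obeys \(p_0\le1/2\). Every later partial sum \(\sum_{j\le k}p_j\) with \(k\ge1\) is at most \(1=\tfrac12+\tfrac12\). This is exactly the observation already used in the \(p_0\le1/2\) branch of Lemma~\ref{lem:universal_schmidt_spectrum}. The finite-Schmidt-rank assumption puts us in the finite-dimensional setting of Nielsen's theorem.

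The second step is to compose protocols. Alice performs Nielsen's local measurement on \(A_{\mathrm r}\) and communicates the outcome, and Bob applies the corresponding correcting unitary on \(B_{\mathrm r}\). With certainty, this leaves an exact \(\ket{\Phi_2}\) in a two-qubit subspace of the resource registers, which are then relabeled as qubits. They then run Algorithm~\ref{alg:one_epr_purification}. The composition of two finite-round LOCC protocols is again a finite-round LOCC protocol.

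Next I would check that the effective map is unchanged. The conversion step is trace preserving and every branch yields the same state \(\ket{\Phi_2}\!\bra{\Phi_2}\). Because it acts only on the resource registers, it leaves the noisy copies untouched. So the induced CPTN map on \(A_1B_1A_2B_2\) equals the accepted map of Algorithm~\ref{alg:one_epr_purification}, whose Choi operator is \(J_\star\) by \eqref{eq:one_epr_accepted_choi}. By Lemma~\ref{lem:general_d_global_optimum}, this map attains \(G_\star(D,\gamma)\).

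The only point needing care is the exactness of the conversion. Nielsen's theorem gives exact deterministic conversion, not an approximation, so no loss in the objective arises. I expect no real obstacle beyond citing that theorem correctly.
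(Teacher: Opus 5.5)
Your proposal is correct and follows the paper's proof. You check \((p_j)_{j\ge0}\prec(1/2,1/2,0,\ldots)\) from \(p_0\le1/2\) and normalization, invoke Nielsen's theorem for exact deterministic LOCC conversion to an EPR pair, and then run Algorithm~\ref{alg:one_epr_purification}; your extra checks that the composed protocol is finite-round and that the induced map has Choi operator \(J_\star\) just make explicit what the paper leaves implicit.
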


\begin{proof}
By Nielsen's majorization theorem~\cite{nielsen1999conditions}, deterministic
LOCC conversion to one EPR pair is possible exactly when
\[
(p_0,p_1,p_2,\ldots)
\prec
\left(\frac12,\frac12,0,\ldots\right).
\]
The first majorization inequality follows from \(p_0\le1/2\), and all later
ones follow from normalization.  After this deterministic LOCC preprocessing,
Algorithm~\ref{alg:one_epr_purification} attains \(G_\star(D,\gamma)\).
\end{proof}

Thus \(p_0\le1/2\) is sufficient; this argument does not settle attainability
when \(p_0>1/2\).

\paragraph*{Quantitative refinement for \(p_0>1/2\).}
Under the hypotheses of Theorem~\ref{thm:gd_lower_bound},
\(1/2<p_0\le2/3\) implies
\[
\sum_{j\ge2}p_j
\ge z_\star(p_0)>2\left(p_0-\frac12\right)^2,
\qquad
E(\eta)\ge H\!\left(\sigma_\star(p_0)\right)>1.
\]
Equality in the \(p_0\)-dependent entropy bound is possible only for the
three-point spectrum \(\sigma_\star(p_0)\).  These statements follow from
\eqref{eq:quadratic_tail_lower_bound},
\eqref{eq:schmidt_entropy_comparison}, and
\eqref{eq:boundary_purity_half}.

\begin{corollary}[Schmidt entropy is not sufficient]
\label{cor:entropy_not_sufficient}
There exists a finite-Schmidt-rank pure resource with \(E(\eta)>1\) that cannot
attain the global benchmark under a PPT instrument.  It therefore cannot attain
the benchmark by LOCC, so \(E(\eta)\ge1\) is not sufficient.
\end{corollary}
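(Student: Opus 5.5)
Exhibiting a single explicit Schmidt spectrum suffices. The plan is to pick a pure resource with $p_0$ slightly above $1/2$ but still below $2/3$, chosen so that $E(\eta)>1$ while the resource violates the necessary condition of Theorem~\ref{thm:gd_lower_bound}. That theorem (via Lemma~\ref{lem:universal_schmidt_spectrum}) says that any PPT-attaining pure resource with $p_0>1/2$ must satisfy
\[
\sum_{j\ge2}p_j\ge z_\star(p_0)>2\left(p_0-\tfrac12\right)^2 .
\]
The simplest violation is therefore a two-term spectrum $(p_0,1-p_0)$ with $p_0>1/2$. Then $\sum_{j\ge2}p_j=0<z_\star(p_0)$. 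But a two-term spectrum has $E(\eta)=h(p_0)<1$, so it does not witness the claim. Hence I would use a three-term spectrum $(p_0,p_1,p_2)$ with $p_2>0$ small, so that the tail condition fails, but with the entropy pushed above one.

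Concretely, I would take $p_0=0.6$, so $z_\star(0.6)=\bigl(0.4-\sqrt{0.6\cdot0.2}\bigr)/2=(0.4-\sqrt{0.12})/2\approx0.0268$. I would then choose $p_2$ strictly below this, say $p_2=0.02$, which gives $p_1=0.38$.

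The first check is that this spectrum violates Theorem~\ref{thm:gd_lower_bound}. It is ordered, and since $0.02<z_\star(0.6)$ we get $p_0+p_1=0.98>1-z_\star(p_0)$. This violates the majorization $(p_j)\prec\sigma_\star(p_0)$, equivalently \eqref{eq:quadratic_tail_lower_bound}. So by Theorem~\ref{thm:gd_lower_bound}, no PPT success/failure instrument assisted by this $\ket\eta$ attains $G_\star(D,\gamma)$.

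The second check is $E(\eta)=H(0.6,0.38,0.02)>1$. Numerically,
\[
H\approx0.6(0.737)+0.38(1.396)+0.02(5.644)\approx0.442+0.530+0.113=1.085>1 .
\]
I would present this with rigorous bounds on each logarithm, for instance $\log_2(5/3)>0.73$, $\log_2(1/0.38)>1.39$ and $\log_2 50>5.6$, so that the inequality is not merely numerical.

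Finally, since every finite-round LOCC instrument is PPT, as noted in Appendix~\ref{app:ppt_relaxation}, LOCC attainment is also impossible. This shows that $E(\eta)\ge1$ is not sufficient.

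The only delicate step is the arithmetic certifying $p_2<z_\star(p_0)$ and $H>1$ at the same time. To make it clean, I would verify $0.02<z_\star(0.6)$ by rearranging to $\sqrt{0.12}<0.36$, i.e. $0.12<0.1296$.
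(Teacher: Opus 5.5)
Your proposal is correct and is essentially the paper's argument. The paper also takes a three-point spectrum with $p_0=3/5$, namely $(3/5,39/100,1/100)$, checks $E(\eta)\approx1.038>1$, and rules out PPT attainment (hence LOCC attainment) using a necessary condition from Theorem~\ref{thm:gd_lower_bound}. The only difference is which form of that condition is checked: the paper uses the purity consequence $\sum_j p_j^2\le 1/2$ of Corollary~\ref{cor:ppt_spectral_consequences} (there $2561/5000>1/2$), whereas you use the tail/majorization condition $\sum_{j\ge2}p_j\ge z_\star(p_0)$ directly, which is the more primitive statement (your spectrum also gives $\sum_j p_j^2=0.5048>1/2$).
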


\begin{proof}
Consider the pure resource with ordered Schmidt spectrum
\((p_0,p_1,p_2)=(3/5,39/100,1/100)\).
Its largest Schmidt weight lies in the region not covered by deterministic
conversion to an EPR pair,
\(1/2<p_0=3/5\le2/3\), and
\[
E(\eta)
=
-\frac35\log_2\frac35
-\frac{39}{100}\log_2\frac{39}{100}
-\frac1{100}\log_2\frac1{100}
\approx1.0384>1.
\]
However,
\[
\sum_jp_j^2
=
\frac{2561}{5000}
>
\frac12,
\]
which violates \eqref{eq:schmidt_purity_half}.  Therefore this resource cannot
attain \(G_\star(D,\gamma)\) even under the PPT relaxation.
\end{proof}

\begin{corollary}[Entanglement-of-formation bound for mixed resources]
\label{cor:mixed_resource_formation}
Let \(\omega_R\) be any finite-dimensional mixed resource state.  If
\(\omega_R\) enables a PPT success/failure instrument to attain
\(G_\star(D,\gamma)\), then
\begin{equation}
E_{\mathrm F}(\omega_R)\ge1.
\end{equation}
\end{corollary}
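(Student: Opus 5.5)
The plan is to reduce the mixed-resource case to the pure-resource bound, Corollary~\ref{cor:pure_entropy_obstruction}, using linearity of the uninserted instrument in the resource together with uniqueness of the global optimizer (Lemma~\ref{lem:general_d_global_optimum}).

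Fix a PPT success/failure pair \((S,F)\) feasible for \eqref{eq:fixed_resource_ppt_sdp}. The key observation is that feasibility of \((S,F)\) does not depend on \(\omega_R\): positivity, the branchwise PPT conditions, and the trace identity \(\operatorname{Tr}_{A'B'}(S+F)=I\otimes I_R\) refer only to the uninserted instrument. Only the objective \(\operatorname{Tr}[(M\otimes\omega_R^T)S]\) sees \(\omega_R\), and it does so linearly.

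Now take an optimal pure-state ensemble \(\omega_R=\sum_a q_a\ket{\eta_a}\!\bra{\eta_a}\) attaining (or approaching) the infimum in \(E_{\mathrm F}\). In finite dimensions the infimum is attained by Carathéodory, but approaching it suffices. Linearity gives
\[
G_\star(D,\gamma)=\sum_a q_a\operatorname{Tr}\!\left[(M\otimes\eta_a^T)S\right].
\]
For each \(a\), the inserted branch \(S^{\eta_a,\eta_a}\) is feasible for the global CPTN SDP, by the same argument as in the proof of Lemma~\ref{lem:gd_exposed_standard_representative}. Each summand is therefore at most \(G_\star(D,\gamma)\). Since the convex combination equals \(G_\star\), every term with \(q_a>0\) attains \(G_\star\) exactly.

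Hence the same PPT instrument \((S,F)\), assisted by the pure resource \(\ket{\eta_a}\), attains the benchmark for every \(a\) with \(q_a>0\). The resource register stays the same finite-dimensional \(R\), so each \(\ket{\eta_a}\) has finite Schmidt rank. Theorem~\ref{thm:gd_lower_bound} and Corollary~\ref{cor:pure_entropy_obstruction} then give \(E(\eta_a)\ge1\) for each such \(a\). Averaging, \(\sum_a q_a E(\eta_a)\ge1\) holds for every pure-state decomposition of \(\omega_R\), and taking the infimum yields \(E_{\mathrm F}(\omega_R)\ge1\).

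The main point to check is the transpose convention. The objective involves \(\omega_R^T\), so the decomposition should be applied to \(\omega_R^T=\sum_a q_a\,\overline{\eta_a}\). Each \(\overline{\eta_a}\) is again pure with the same Schmidt spectrum, so \(E(\overline{\eta_a})=E(\eta_a)\). Alternatively, one can choose Schmidt coordinates in which the vector is real, as in the pure-state lemma. Either way, the argument goes through verbatim with \(\overline{\eta_a}\) in place of \(\eta_a\).
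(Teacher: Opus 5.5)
Your proposal is correct and follows essentially the same route as the paper. Feasibility of the uninserted pair $(S,F)$ does not depend on the resource, so every pure-state insertion is globally CPTN-feasible; linearity then forces each member of an arbitrary pure-state ensemble of $\omega_R$ to attain $G_\star(D,\gamma)$, and Corollary~\ref{cor:pure_entropy_obstruction} together with the convex roof gives $E_{\mathrm F}(\omega_R)\ge1$. Your transpose caveat is harmless but unnecessary: under the insertion convention \eqref{eq:resource_insertion_convention}, the term $\operatorname{Tr}[(M\otimes\eta_a^T)S]$ is already the objective for inserting $\eta_a$ itself.
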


\begin{proof}
Fix a PPT success/failure instrument with uninserted Choi branches \(S,F\)
attaining \(G_\star(D,\gamma)\) with resource \(\omega_R\).  Since its trace
constraint is imposed before resource insertion, inserting any density
operator \(\rho\) on \(R\) yields a feasible branch for the global CPTN
benchmark.  Hence
\(\operatorname{Tr}[(M\otimes\rho^T)S]\le G_\star(D,\gamma)\).

For any pure-state ensemble
\(\omega_R=\sum_aq_a\ket{\xi_a}\bra{\xi_a}\) with \(q_a>0\), linearity gives
\[
\begin{aligned}
G_\star(D,\gamma)
&=\operatorname{Tr}\!\left[(M\otimes\omega_R^T)S\right]\\
&=\sum_aq_a\operatorname{Tr}\!\left[
\left(M\otimes\bigl(\ket{\xi_a}\!\bra{\xi_a}\bigr)^T\right)S
\right]\\
&\le\sum_aq_aG_\star(D,\gamma)=G_\star(D,\gamma).
\end{aligned}
\]
Thus every term with \(q_a>0\) attains the optimum, and
Corollary~\ref{cor:pure_entropy_obstruction} gives \(E(\xi_a)\ge1\).  Hence
\(\sum_aq_aE(\xi_a)\ge1\).  Since the ensemble was arbitrary, taking the
convex roof gives \(E_{\mathrm F}(\omega_R)\ge1\).
\end{proof}

\begin{corollary}[Two-qubit characterization of exact attainment]
\label{cor:two_qubit_resource_characterization}
Fix \(d\ge2\) and \(0<\gamma<1\), and consider the \(2\to1\) purification task
on \(\mathcal S_P\) under the depolarizing channel \(\mathcal N^\gamma\).
Let \(\omega_{A_{\mathrm r}B_{\mathrm r}}\) be an arbitrary two-qubit resource
state.  Then \(\omega\) enables finite-round
LOCC attainment of \(G_\star(D,\gamma)\) if and only if it is a pure
maximally entangled state, equivalently, an EPR pair up to local unitaries.
\end{corollary}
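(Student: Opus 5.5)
The proof has two directions. Sufficiency is immediate, and necessity reduces to the mixed-resource bound of Corollary~\ref{cor:mixed_resource_formation} together with a known fact about two-qubit entanglement of formation.

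\emph{Sufficiency.} Suppose \(\omega=(U_A\otimes U_B)\ket{\Phi_2}\!\bra{\Phi_2}(U_A\otimes U_B)^\dagger\). Alice and Bob first undo the local unitaries \(U_A,U_B\). This step is LOCC. They then run Algorithm~\ref{alg:one_epr_purification}, which attains \(G_\star(D,\gamma)\) by Theorem~\ref{thm:main_global_cptn}.

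\emph{Necessity.} Suppose \(\omega\) enables finite-round LOCC attainment. Every finite-round LOCC instrument is PPT, so \(\omega\) also enables a PPT instrument to attain the benchmark. Corollary~\ref{cor:mixed_resource_formation} then gives \(E_{\mathrm F}(\omega)\ge1\). For two qubits the reduced state of any pure component has entropy at most \(1\), with equality exactly when the component is maximally entangled. Hence \(E_{\mathrm F}(\omega)\le1\), and therefore \(E_{\mathrm F}(\omega)=1\).

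It remains to show that a two-qubit state with \(E_{\mathrm F}(\omega)=1\) is pure and maximally entangled. I would argue this directly rather than through Wootters' concurrence formula \(E_{\mathrm F}=h\bigl((1+\sqrt{1-C^2})/2\bigr)\), \(C(\omega)<1\) for mixed states, which would also work.

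\begin{enumerate}
\item \emph{Every pure-state decomposition consists of maximally entangled states.} Since \(E_{\mathrm F}(\omega)=1\) and each term in any ensemble satisfies \(E(\xi_a)\le1\), the infimum equal to \(1\) forces every decomposition \(\omega=\sum_a q_a\ket{\xi_a}\!\bra{\xi_a}\) with \(q_a>0\) to have \(E(\xi_a)=1\) for all \(a\).
\item \(\omega\) \emph{is pure.} In particular, every pure state in the range of \(\omega\) is maximally entangled, because any range vector can be made a member of some decomposition. If the range had dimension at least \(2\), it would contain two linearly independent vectors \(\ket{\xi_1},\ket{\xi_2}\). Maximally entangled two-qubit vectors correspond, up to a factor \(1/\sqrt2\), to unitary \(2\times2\) coefficient matrices. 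So the pencil \(M_1+tM_2\) would need to be proportional to a unitary for every \(t\in\CC\).
\item \emph{The pencil cannot be proportional to a unitary for all \(t\).} The polynomial \(\det(M_1+tM_2)\) has a root \(t\in\CC\) unless it is constant. A root gives a rank-deficient combination, hence a product vector in the range, which is a contradiction. The constant case is \(\det(M_1+tM_2)=\det M_1\ne0\) for all \(t\), which means \(\det(I+tM_1^{-1}M_2)=1\). Then \(M_1^{-1}M_2\) is nilpotent. If it is nonzero, \(M_1+tM_2=M_1(I+tN)\) with \(N\neq0\) nilpotent fails unitarity for large \(|t|\). If it is zero, then \(M_2=0\), contradicting independence.
\end{enumerate}

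Thus \(\omega\) is pure with Schmidt spectrum \((1/2,1/2)\). This is an EPR pair up to local unitaries. (Theorem~\ref{thm:main_pure_resource_equality} would also give this uniqueness, but it is not needed at this point.)

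The only substantive step is ruling out mixed states with \(E_{\mathrm F}=1\), that is, steps 2 and 3. I expect the determinant argument to go through; if any gap appears, I would instead cite Wootters' formula.
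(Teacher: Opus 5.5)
Your proof is correct and follows essentially the paper's route. Sufficiency is local unitaries plus Algorithm~\ref{alg:one_epr_purification}. Necessity combines $E_{\mathrm F}(\omega)\ge1$ from Corollary~\ref{cor:mixed_resource_formation} with the fact that a two-qubit state whose support has dimension at least two contains a non-maximally-entangled (indeed product) vector. That vector can be split off with positive weight, which pushes the convex-roof average below one.

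The differences are minor. The paper cites the product-vector fact, writes the explicit decomposition $\omega=p\ketbra{ab}{ab}+(1-p)\tau$ to get $E_{\mathrm F}(\omega)\le1-p<1$, and treats pure resources separately via Corollary~\ref{cor:pure_entropy_obstruction}. You prove the subspace fact yourself with the determinant pencil and handle pure and mixed states in one argument. Your constant-determinant case can be shortened: the $t^2$ coefficient is $\det M_2$, so it must vanish, and the nonzero $M_2$ itself is then rank one, i.e.\ a product vector in the range.
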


\begin{proof}
Suppose that a mixed two-qubit resource state \(\omega\) enables a
finite-round LOCC protocol to attain \(G_\star(D,\gamma)\).
Its support has dimension at least two.  Every two-dimensional subspace of
\(\mathbb C^2\otimes\mathbb C^2\) contains a pure product
state~\cite[Sec.~II~A and Appendix~A]{niu1999two}.
Consequently, the support of \(\omega\) contains a normalized pure product
state \(\ket{ab}=\ket a\otimes\ket b\).

Choose \(0<p<1\) smaller than the smallest positive eigenvalue of
\(\omega\).  Since \(\ket{ab}\) lies in its support, this ensures
\(\omega-p\ket{ab}\!\bra{ab}\succeq0\).  Thus
\[
\tau:=\frac{\omega-p\ket{ab}\!\bra{ab}}{1-p}\succeq0,
\qquad \operatorname{Tr}\tau=1,
\]
and \(\omega=p\ket{ab}\!\bra{ab}+(1-p)\tau\).
For any pure-state decomposition
\(\tau=\sum_a q_a\ket{\eta_a}\!\bra{\eta_a}\), the definition of
entanglement of formation gives
\[
E_{\mathrm F}(\tau)
\le\sum_a q_a S\!\left(\operatorname{Tr}_{B_{\mathrm r}}
\ket{\eta_a}\!\bra{\eta_a}\right)\le1,
\]
since each qubit reduced state has von Neumann entropy at most
\(\log_2 2=1\).  The pure product state has zero entanglement, so
convexity gives
\[
E_{\mathrm F}(\omega)
\le(1-p)E_{\mathrm F}(\tau)
\le1-p<1,
\]
contradicting Corollary~\ref{cor:mixed_resource_formation}.
If a pure two-qubit resource enables such attainment,
Corollary~\ref{cor:pure_entropy_obstruction} gives an entanglement entropy
of at least one, while its local dimension bounds this entropy by one.
Equality forces Schmidt weights \((1/2,1/2)\).  Conversely, local unitaries
convert such a state to the EPR pair used in
Algorithm~\ref{alg:one_epr_purification}.
\end{proof}

\end{document}